\newtheorem{theorem}{Theorem}
\newtheorem{remark}[theorem]{Remark}
\newtheorem{definition}[theorem]{Definition}
\newtheorem{corollary}[theorem]{Corollary}
\newcommand{\pp}[2]{\frac{\partial #1}{\partial #2}}
\def\contract{\makebox[1.2em][c]{\mbox{\rule{.6em}
{.01truein}\rule{.01truein}{.6em}}}}
\numberwithin{equation}{section}
\title{Momentum maps and stochastic Clebsch action principles}
\author{Ana Bela Cruzeiro$^{1}$, Darryl D. Holm$^{2}$, Tudor S. Ratiu$^{3}$
\\ \footnotesize
(1) GFMUL and Mathematics Department,  Instituto Superior T\'ecnico, 
Lisboa, Portugal
\\ \footnotesize
(2) Mathematics Department, Imperial College, London, United Kingdom
\\ \footnotesize
(3) School of Mathematics, Jiao Tong University, Shanghai, China and
\\ \footnotesize
Section de Math\'ematiques, Universit\'e de Gen\`eve, Switzerland} 
\date{Version  16 October  2017 \\ \small
Keywords: Geometric mechanics; stochastic processes; 
Clebsch variational principles}
\begin{document}
\maketitle
\makeatother

\begin{abstract}
We derive stochastic differential equations whose solutions follow the flow of a stochastic nonlinear Lie algebra operation on a configuration manifold. For this purpose, we develop a stochastic Clebsch action principle, 
in which the noise couples to the phase space variables through a momentum map. This special coupling simplifies the structure of the resulting stochastic Hamilton equations for the momentum map. In particular, these stochastic Hamilton equations collectivize for Hamiltonians that depend only on the momentum map variable. The Stratonovich equations are derived from the Clebsch variational principle and then converted into It\^o form. In comparing the Stratonovich and It\^o forms of the stochastic dynamical equations governing the components of the momentum map, we find that the It\^o contraction term turns out to be a double Poisson bracket. Finally, we present the stochastic Hamiltonian formulation of the collectivized momentum map dynamics and derive the corresponding Kolmogorov forward and backward equations.  
\end{abstract}

\tableofcontents

\section{Background and motivation}\label{intro}
\subsection{Poincar\'e 1901}
In 1901 Poincar\'e noticed that when a Lie group $G$ acts transitively 
on the \textit{configuration manifold} $Q$ of a mechanical
system, then an opportunity arises, ``to cast the equations of mechanics 
into a new form which could be interesting to know'' \cite{Po1901}. The 
new form of the Euler-Lagrange equations of mechanics for a 
given Lagrangian $L(q,\dot{q})$ in Hamilton's principle $0=\delta \int_a^b 
L(q,\dot{q}) dt$ defined on the tangent bundle $TQ$ (\textit{velocity phase space}) of the manifold $Q$ (\textit{state space}) emerges when the motion is lifted to a set of dynamical equations for a curve $g(t)\in G$, parameterized by time, $t$, by writing the \textit{motion} as  
$q(t)=g^{-1}(t)q_0$, with $g(0)=e$, the identity element of the Lie group, $G$. 
Let $\mathfrak{g}$ denote the Lie algebra of $G$.
The Lie algebra action of the element $u:=g^{-1}\dot{g} \in \mathfrak{g}$, 
on the manifold $Q$ is denoted by concatenation, namely, $uq$; this is a 
vector field $Q \ni q \mapsto uq \in TQ$ on $Q$. Thus, if 
$q(t)=g^{-1}(t)q_0$, we have $\dot{q}(t)=-uq(t)$.

The action integral in Hamilton's principle transforms under $G$ 
in \cite{Po1901} as,
\begin{equation}
\int_a^b L(q,\dot{q})dt =: \int_a^b \tilde{L} (g,\dot{g};q_0)dt 
= \int_a^b \tilde{L} (e,g^{-1}\dot{g},g^{-1}q_0)dt =: 
\int_a^b \ell (u,g^{-1}q_0)  
\,,
\label{Lag-red}
\end{equation}
where $\langle\cdot , \cdot \rangle_\mathfrak{g}: \mathfrak{g}^\ast \times 
\mathfrak{g}\rightarrow \mathbb{R}$ denotes the non-degenerate pairing between the Lie algebra $\mathfrak{g}$ and its dual $\mathfrak{g}^\ast$.

Upon taking variations in Hamilton's principle, Poincar\'e cast the 
Euler-Lagrange equations for vanishing endpoint conditions  into his 
``new form''. To arrive at it, we take a
deformation $g_\varepsilon(t)$ of the curve $g_0(t): = g(t)$ for
$\varepsilon$ in a small interval centred at $0$, keeping
the endpoints fixed, i.e., $g_\varepsilon(a) = g(a)$, 
$g_\varepsilon(b) = g(b)$ for all $\varepsilon$, denote by $\delta g(t): =
\left.\frac{d}{d\varepsilon}\right|_{\varepsilon=0}g_\varepsilon(t) \in
T_{g(t)}G$, and note that $\delta g(a)=0=\delta g(b)$. Defining
$v(t): = g(t)^{-1} \delta g(t) \in \mathfrak{g}$ and $\delta u(t): =
\left.\frac{d}{d\varepsilon}\right|_{\varepsilon=0} g_\varepsilon(t)^{-1}\dot{g}_\varepsilon(t) \in \mathfrak{g}$, we
deduce the identity $\delta u(t) = \dot{v}(t) + 
\operatorname{ad}_{u(t)} v (t)$, where $\operatorname{ad}_{x}y:=
[x, y]$ for every $x, y \in\mathfrak{g}$ and we denote by 
$\operatorname{ad}_x ^\ast: \mathfrak{g}^* \rightarrow 
\mathfrak{g}^\ast$ the dual of the linear map $\operatorname{ad}_x:
\mathfrak{g} \rightarrow \mathfrak{g}$ for every $x \in \mathfrak{g}$.
A direct computation, using \eqref{Lag-red}, yields  
\begin{equation}
\int_a^b \left\langle \frac{d}{dt}\frac{\partial L}{\partial \dot{q}} 
- \frac{\partial L}{\partial q}\,,\, \delta q \right\rangle_{Q}\,dt
= 0 = \int_a^b \left\langle \frac{d}{dt}\frac{\partial \ell}{\partial u} 
- {\rm ad}^*_u \frac{\partial \ell}{\partial u} - 
\mathbf{J}\left(\frac{\partial \ell}{\partial q}\right),\, v \right\rangle_\mathfrak{g}dt \,,
\label{Lag-var}
\end{equation}
where $\langle\cdot , \cdot \rangle_{Q}: T^\ast Q\times TQ \to \mathbb{R}$ 
denotes the natural duality pairing, taken fiberwise, between the tangent 
bundle $TQ$ and its dual $T^*Q$, the \textit{phase space} of the mechanical system. The map $\mathbf{J}: T^*Q \rightarrow \mathfrak{g}^\ast$ has
the expression $\left\langle\mathbf{J}(p_q), w \right\rangle_\mathfrak{g}= 
\left\langle p_q, wq \right\rangle_Q$, for all $w \in \mathfrak{g}$ and is the \textit{momentum
map} of the cotangent lifted $G$-action on $T^*Q$. Thus, \eqref{Lag-var}
yields the classical Euler-Lagrange equations, if one uses the left
hand side of the identity, and it yields Poincar\'e's ``new form'' of the equations 
of motion, if one uses the right hand side of the identity, which is
\begin{equation}
\label{classical_EP}
\frac{d}{dt}\frac{\partial \ell}{\partial u} 
= {\rm ad}^*_u \frac{\partial \ell}{\partial u} + 
\mathbf{J}\left(\frac{\partial \ell}{\partial q}\right).
\end{equation}
For more details, many applications, and reviews of the overwhelming importance
of the momentum map in mechanics, see \cite{Ho2009,MaRa1999}.

In Poincar\'e's illustrative example, $G=SO(3)$ was the Lie group of rotations in three dimensions; the manifold $Q=\mathbb{R}^3$ was three dimensional Euclidean space; $\mathfrak{g}=\mathfrak{so}(3)\simeq \mathbb{R}^3$ and $\mathfrak{g}^\ast=\mathfrak{so}(3)^*\simeq \mathbb{R}^3$ were isomorphic to $\mathbb{R}^3$;  the pairings $\langle\cdot , \cdot \rangle_{Q}$ and $\langle\cdot , \cdot \rangle_{\mathfrak{g}}$ were both the Euclidean scalar product; the operations ${\rm ad}$, ${\rm ad}^*$ and the momentum 
map $\mathbf{J}$ were all (plus, or minus) the vector cross product 
in $\mathbb{R}^3$. Poincar\'e's new form of the equations of mechanics in that case reduced to Euler's equations for a heavy top. This 1901 result 
of Poincar\'e, together with Lie's discovery of the
natural Poisson bracket on the dual of a Lie algebra, are the two key
results, one in the Lagrangian, the other in the Hamiltonian
formulation, of what has developed into geometric mechanics. 
Poincar\'e's paper \cite{Po1901} was carefully reviewed recently from a modern perspective in \cite{Ma2013}. For textbook discussions of geometric mechanics, see, e.g., \cite{Ho2009,MaRa1999}.

The aim of the present work is to continue the theme of these earlier developments by revisiting Poincar\'e's starting point \cite{Po1901} and augmenting the Lagrangian \eqref{Lag-red} in Hamilton's principle to introduce a Lagrange multiplier $m$ into \eqref{Lag-red}, in preparation for introducing stochasticity later. That is, instead of \eqref{Lag-red}, we work with the constrained action,
\[
\int_a^b \left(\ell (u,g^{-1}q_0) + \left\langle m\,,\, g^{-1}\dot{g} - 
u \right\rangle_\mathfrak{g} \right) dt\,,
\]
in order to enforce the \textit{reconstruction relation} $g^{-1}\dot{g} = u$ for the curve $g(t)\in G$. The reconstruction relation, in turn, generates the motion $q(t)=g^{-1}(t)q_0$ along a solution curve in $Q$. This constrained form of the action facilitates the introduction of stochasticity into Poincar\'e's original framework. That is, in following Poincar\'e's lead in the deterministic case, we seek to lay the framework for \textit{stochastic} geometric mechanics.  In deterministic geometric mechanics, as we have just seen, the time-dependent dynamics is modeled by the action of a transformation group. The geometric mechanics approach lifts the dynamics on the state space to a curve in the transformation group. 

Our \textit{aim} in this paper is to generalize the time-dependent curve in the transformation group to a stochastic process, and then use Hamilton's principle to determine the stochastic dynamics of the momentum map taking values in the dual of the Lie algebra of the transformation group. 

Our \textit{approach} to achieve the transition from deterministic to stochastic geometric mechanics is to concentrate on the Lie algebra action 
$\dot{q}(t)=-uq(t)\in TQ$ of the vector field $u=g^{-1}\dot{g}\in \mathfrak{g}$, which produces the solution paths $q(t)\in Q$. The solution paths $q(t)=g^{-1}(t)q_0\in Q$ will become stochastic, if $g(t)$ is made stochastic by replacing the deterministic reconstruction equation $g^{-1}\dot{g} = u$ mentioned above by introducing the following reconstruction relation from a stochastic vector field,
\begin{equation}
g^{-1}d_t g = u\,dt + \sum_i \xi_i \circ dW^i(t)
\,,
\label{stoch-reconstruct}
\end{equation}
where subscripted $d_t$ represents stochastic time evolution, the vector fields $\xi_i$ for $i=1,2,\dots,N,$ are prescribed, and $\circ\, dW^i(t)$ denotes the \textit{Stratonovich} differential with independent Brownian motions $dW^i(t)$. 
The idea, then, is to regard the stochastic solution paths $q(t)=g(t)^{-1}q_0\in Q$ as observable data, from which we obtain the correlation eigenvectors $\xi_i$ by some form of bespoke data assimilation, and substitute them into the constrained action 
\[
\int_a^b \Big(\ell (u,g^{-1}q_0) + \Big\langle m\,,\, g^{-1}d_t g - 
u\,dt - \sum_i \xi_i \circ dW^i(t) \Big\rangle_\mathfrak{g} \,\Big) dt\,,
\]
then take variations to derive the corresponding equations of motion for $u\in \mathfrak{g}$ by applying Poincar\'e's approach to the resulting stochastically constrained Hamilton's principle. In this way, we obtain a variational approach for deriving \textit{data-driven} models in the framework of stochastic geometric mechanics.

\subsection{Data-driven modeling of uncertainty} 

As opposed to theory-driven models such as Newtonian force laws and thermodynamic processes for the subgrid-scale dynamics, here we will use stochastic geometric mechanics as an opportunity to consider a new type of data-driven modeling. In data-driven modeling, one seeks to model properties of a subsystem of a given dynamical system which, for example, may be observable at length or time scales which are below the resolution of available initial and boundary conditions, or of numerical simulations of the dynamical system based on the assumed exact equations. 

The most familiar example of data-driven modeling occurs in numerical weather forecasting, where various numerically unresolvable, but observable, subgrid-scale processes are expected to have profound effects on the variability of the weather; so they must be parameterized at the resolved scales of the numerical simulations. Of course, the accuracy of a given parameterization model often remains uncertain. In fact, even the possibility of modeling subgrid-scale properties in terms of resolved-scale quantities simulations may sometimes be questionable. However, if some information about the \textit{statistics} of the small-scale excitations is known, such as the spatial correlations of its observed transport properties at the resolved scales, one may arguably consider modeling the effects of the small scale dynamics on the resolved scales by a stochastic transport process whose spatial correlations match the observations, at the resolved scales. 
In this case, the eigenvectors of the correlation matrix of the observations may provide the modes of the sub-scale motion, to be modeled by applying stochasticity with those statistics at the resolved scales.  Although fluid dynamics is not considered in the present work, it falls within the purview of geometric mechanics and has been a source of inspiration in the previous development of stochastic geometric mechanics \cite{Ho2015}. 

\paragraph{Stochastic perturbations in finite dimensions.} As an example of data-driven modeling in finite dimensions, we consider the following situation. Suppose one notices an erratic ``jitter'' or ``wobble'' in the motion of an observable quantity, $q(t)=g(t)^{-1}q_0$ whose dynamics is governed by a subsystem of the full dynamics. For example, one might observe a jitter in the angular velocity of an orbiting  satellite, indicated, say, by a small antenna attached to it. Being only a subsystem quantity, and satisfying an auxiliary equation implying that it merely follows the rigid motions of the satellite, this observable quantity certainly does not determine the motion of the full system. However, the observation of its motion could still contain some useful information. For example, suppose its statistics can be measured. One may ask what dynamics of the full system would give rise to the observed statistics of the subsystem. In particular, one would be interested to know whether the observation of a perturbative wobble found in a subsystem could mean that the motion of the full system would eventually destabilize. If the dynamics of the unperturbed full system follows from Hamilton's principle, constrained by a deterministic auxiliary equation for an observable quantity, $q(t)$, then a reasonable procedure might be to take the variations, subject to the constraints determined from one's observations of the wobble in the subsystem, described as a stochastic perturbation of the original auxiliary equation for $q(t)$. Equivalently, given the observation of an apparently stochastic perturbation in a subsystem, one might ask, what motion equation gives rise to this stochastic wobble in the subsystem? In general, of course, this is not a well-posed question. However, for the geometric mechanics systems posed here, this question will have a definite answer. 

\paragraph{The rigid body example.} Euler's equation for stochastic motion for a rigid body provides a useful example in finite dimensions. For the Euler rigid body equations, the stochasticity introduced via the present approach enters the angular velocity and thereby provides a geometric mechanics description of stochastic motion of the angular momentum. In this type of problem, one asks, for example, whether an observed erratic perturbation in the angular velocity may destabilize a deterministic rigid body equilibrium. Indeed, it can, with positive probability. One also asks what the stochasticity does to the evolution of the energy and other conservation laws. Here the answer is interesting and suggestive of other potentially rich results. The first part of the answer is that the rigid body's energy is no longer conserved, but the magnitude of its angular momentum is still conserved, since the dynamics describes stochastic coadjoint motion. The rigid body example and the related heavy top example, when gravity is present, have been treated in \cite{ArCaHo2016a,ArCaHo2016b}.

\subsection{Stochastic Hamilton equations}
On the Hamiltonian side, the modern name for Poincar\'e's ``new form'' of dynamics is ``coadjoint motion''. 
The primary source of stochastic symplectic Hamilton equations is \cite{Bi1981}, which was recently reviewed and developed further from the geometric mechanics viewpoint in \cite{LCOr2008}.
In the present work, we are also interested in the situation where the motion is generated by applying a transformation group to a configuration manifold $Q$ with coordinates $q$, and then extending its action by cotangent lift to its entire phase space $T^*Q$ with coordinates $(q,p)$. The primary example occurs when the rotation group $G=SO(3)$ acts on $Q=\mathbb{R}^3$ and Poincar\'e's new form of the motion equation governs the angular momentum $J(q,p)\in\mathfrak{so}(3)^*\simeq \mathbb{R}^3$ of the rigid body, or heavy top, or spherical pendulum. This situation requires the noise to be present in both the $q$ and $p$ equations. Bismut's 1981 book \cite{Bi1981} discusses the Hamiltonian dynamics of stochastic particle motion, in which  
\begin{equation}
            dq = \{q\,,\, H(q,p) dt + \sum_i h_i(q,p) \circ dW^i(t) \}
\,,\qquad
            dp = \{p\,,\, H(q,p) dt + \sum_i h_i(q,p) \circ dW^i(t) \}
\,,
\label{Ham-stoch}
\end{equation}
for the canonical Poisson bracket $\{\, \cdot \,,\, \cdot\, \}$. If the stochastic Hamiltonians $h_i(q)$ happen to depend only on position $q$, then stochasticity appears only in the canonical momentum equation, as a Newtonian force,
\begin{equation}
           dq = \{q\,,\, H(q,p) dt\}
\,,\qquad
            dp = \{p\,,\, H(q,p) dt + \sum_i h_i(q) \circ dW^i(t) \}
\,.
\label{Newton-stoch}
\end{equation}
In this restricted case, the difference between Stratonovich and It\^o noise is immaterial. 
However, for rotating motion in three dimensions $q\in \mathbb{R}^3$, for example, we will need the stochastic Hamiltonians $h_i(q,p)$ to depend on both $q$ and $p$, since $q$ and $p$ transform the same way under rotations. In particular, they both transform as vectors in $\mathbb{R}^3$. In this situation, the noise appears in both of the equations in \eqref{Ham-stoch}, and the difference between Stratonovich and It\^o noise is crucial. The distinction between Stratonovich and It\^o noise is important for all of the motion equations in Poincar\'e's form, since the transformation of the conjugate momentum $p$ is the cotangent lift of the transformation of coordinate $q$ in Poincar\'e's class of equations. 

\subsection{A distinction from other approaches}

Although Poincar\'e \cite{Po1901} used a version of what one would now call 
``reduction by symmetry'', here we use an earlier approach due to Clebsch 
\cite{Clebsch1859}, which introduces constrained variations into 
Hamilton's principle by imposing velocity maps corresponding in the 
deterministic case to the infinitesimal transformations of a Lie group.
(For up to date applications to mechanics of the Clebsch method, see
\cite{GBRa2011}.) In a certain sense, Clebsch \cite{Clebsch1859} presages 
the Pontryagin maximum principle in optimal control theory. In the 
present paper, however, the velocity maps will be made stochastic.

Thus, we will consider \textit{stochastic} Clebsch action principles whose variables are stochastic. The equations of motion derived will be stochastic ordinary (or partial) differential equations (SDEs, or SPDEs) for 
motion on coadjoint orbits of (finite or infinite dimensional) Lie algebras.

Now we comment further on the distinction between the 
stochastic Clebsch and reduced Lagrangian approaches.
A stochastic Lagrangian symmetry reduction process has been developed in 
\cite{ArChCr2014,ChCrRa2015}. In that case, the  Lagrangian curves in the 
configuration space are stochastic diffusion processes, which are critical 
states of the action functional.
In these works, the drift of the stochastic processes is essentially regarded as its (mean, generalized) time derivative.
In \cite{ArChCr2014}, the action functional is defined with the classical Lagrangian computed 
on that velocity;  the corresponding Euler-Poincar\'e equations of motion, satisfied by 
the velocity, are  \textit{deterministic} (ordinary differential 
equations when the configuration space is finite-dimensional, or 
partial differential equations in the infinite-dimensional case). In \cite{ChCrRa2015}
the same kind of reduction process for stochastically perturbed Lagrangians is considered and
corresponding stochastic differential equations of motion (stochastic partial differential equations
in the infinite dimensional case) are derived.

In the present paper, as in \cite{Ho2015}, the stochastic Clebsch 
approach is not equivalent to the reduced stochastic Lagrangian 
processes approach employed in \cite{ArChCr2014,ChCrRa2015}. 
 In particular, the velocities 
in the reduction approach of \cite{ArChCr2014,ChCrRa2015} are essentially
identified with the drift of the underlying diffusion processes, which, 
as is well known, requires the computation of a conditional expectation.
In addition, in the reduced stochastic Lagrangian approach of 
\cite{ArChCr2014,ChCrRa2015}, it is not possible to take arbitrary
variations; instead, a particular form for the variations is required 
and the final resulting equations of motion depend on this choice. 
However, in the present work and in \cite{Ho2015}, the variations 
are quite arbitrary.

Therefore, the present stochastic Clebsch action principle cannot 
be regarded as a formulation of the Euler-Poincar\'e variational 
principle  obtained in \cite{ChCrRa2015}. In 
order to consider the present variational principle approach from the 
viewpoint of reduction by symmetry, one would need to interpret the 
velocity as an It\^o derivative of the  underlying stochastic curves, in 
which case the resulting stochastic action functional would be divergent. 
This divergence was avoided in \cite{ArChCr2014, ChCrRa2015} via the 
``renormalization" achieved by taking conditional expectations.

\paragraph{Outline of the paper.}
Following the Clebsch approach to the Euler-Poincar\'e equations, 
in Section \ref{VP-coadmotion-sec} we introduce a stochastic velocity 
map in the Stratonovich sense as a constraint in Hamilton's principle 
for motion on a manifold acted upon by infinitesimal transformations 
of a Lie algebra. With hindsight, we see that the stochasticity in the 
velocity map is coupled to the motion by the momentum map which arises 
from the variation of the Lagrangian function and the deterministic 
part of the velocity map. The resulting   stationarity conditions 
generalize the classical deterministic formulations of motion 
on coadjoint  orbits of Lie algebras in Poincar\'e \cite{Po1901} and 
Hamel \cite{Ha1904}, by making them stochastic. In Section 
\ref{Ito-form-coadmotion-sec}, we present the It\^o formulations of 
the stationary variational conditions. Three alternative routes 
are taken in calculating the It\^o double-bracket forms of the 
variational equations for stochastic coadjoint motion. In
Section \ref{Properties-stochcoadmotion-sec}, we  discuss the Poisson 
structure of the Stratonovich-Hamiltonian formulation of the stochastic 
motion equations. We also give the It\^o interpretation of the Casimir 
functions for the Lie-Poisson part of the bracket in this formulation, and 
derive the associated Lie-Poisson Fokker-Planck equation for the motion of 
the probability density function on
the level sets of Casimir functions.

\section{Variational principle for Stratonovich stochastic coadjoint motion}\label{VP-coadmotion-sec}

\subsection{Deterministic formulation}

In \cite{Po1901}, Poincar\'e begins by considering the transitive action 
of a Lie group $G$ of smooth transformations of a manifold $Q$, whose 
points in local coordinates are written as $q =(q^1, \ldots, q^n)$ and 
whose infinitesimal transformations are represented by the vector field 
obtained at linear order in the Taylor series. Let $\{e_1, \ldots, e_r\}$
be a basis of $\mathfrak{g}$ and $\alpha=1, \ldots, r$ 
the indices of the local coordinates in this basis. 
Denote by $A_{\alpha}[f]$ any infinitesimal transformation of this group, 
and express its action on a smooth function $f$ as
\begin{equation}
A_{\alpha}[f]
:=\sum_{i=1}^{n}A^{i}_{\alpha}\pp{f}{q^{i}}
=A_{\alpha}^{1}\pp{f}{q^{1}}+A_{\alpha}^{2}\pp{f}{q^{2}}+\cdots+
A_{\alpha}^{n}\pp{f}{q^{n}}
\,,\quad \alpha=1,\ldots, r,
\label{InfXform}
\end{equation}
where $A_\alpha^i$ are functions of 
$(q^1, \ldots, q^n)$.

Throughout this paper, Greek indices enumerate 
Lie algebra basis elements, Latin indices denote coordinates on the 
manifold, and the standard Einstein summation convention is assumed.
Since these transformations form a Lie algebra, 
Poincar\'e remarks that
\begin{equation} 
A_{\alpha}[A_{\beta}]-A_{\beta}[A_{\alpha}]=
\sum_{\gamma=1}^r c_{\alpha\beta}{}^\gamma A_{\gamma} \quad 
\Longleftrightarrow \quad 
A_\alpha^s \frac{\partial A_\beta^k}{\partial q^s} -
A_\beta^s \frac{\partial A_\alpha^k}{\partial q^s} = 
c_{\alpha\beta}{}^\gamma(q) A_ \gamma^k, \;\; \forall k=1, \ldots, n, 
\;\alpha, \beta = 1, \ldots, r,   
\label{VFcomrel}
\end{equation}
where $c_{\alpha \beta}{}^\gamma(q) \in C^\infty(Q)$ are structure \textit{functions} for the Lie algebra of smooth vector fields on the manifold $Q$, in the basis associated with the Greek indices. When the $A_\alpha^i(q)$ are \textit{linear} functions of $(q^1, \ldots, q^n)$, then the $c_{\alpha \beta}{}^\gamma \in\mathbb{R}$ are the usual structure \textit{constants} of a matrix Lie algebra. In this regard, Poincar\'e \cite{Po1901} presages Hamel \cite{Ha1904}, cf. also Marle \cite{Ma2013}.

\paragraph{Geometric setup.}
We give now a glimpse of the global formulation. Poincar\'e \cite{Po1901}
does not really use a transformation group, only its associated Lie 
algebra action. In \cite{Po1901} Poincar\'e takes a configuration $n$-manifold $Q$ of a mechanical system
and a Lie algebra morphism $\mathfrak{g}\ni u \mapsto u_Q \in  
\mathfrak{X}(Q)$
of a given Lie algebra $\mathfrak{g}$, $\dim \mathfrak{g} =:r<\infty$, 
to the Lie algebra $\mathfrak{X}(Q)$
of vector fields on $Q$, endowed with the usual Lie bracket $[X,Y][f]:=
X[Y[f]] - Y[X[f]]$, where $X,Y \in \mathfrak{X}(Q)$, $f\in C^{\infty}(Q)$, 
and $X[f]$ is the differential of $f$ in the direction $X$, given in 
coordinates by \eqref{InfXform}. The coordinate expression 
\begin{equation}
\label{inf_gen}
u_Q (q) =:u_Q^i (q)\frac{\partial}{\partial q^i}=:A_\alpha^i (q)u^\alpha \frac{\partial}{\partial q^i}
\end{equation}
of $u_Q \in \mathfrak{X}(Q)$, relative to a coordinate system 
$(q^1, \ldots, q^n)$ on the chart domain $U \subset Q$ and a basis 
$\{e_1, \ldots, e_r\}$ of 
$\mathfrak{g}$, is thus determined by the functions 
$A_\alpha^i \in C^{\infty}(U)$ and the basis expansion 
$u =: u^\alpha e_\alpha$ of $u \in \mathfrak{g}$. Since $[u_Q, v_Q] = 
[u,v]_Q$ for any $u, v \in \mathfrak{g}$, the functions $A_\alpha: =
\left[A_\alpha^i \right] \in C^{\infty}(U, \mathbb{R}^n)$, defined 
by $(e_\alpha)_Q =: A_\alpha^i\frac{\partial}{\partial q^i}$, satisfy
\eqref{VFcomrel}\footnote{Thus, Poincar\'e works with a \textit{right}
action of the underlying Lie group on the manifold; we adopt his 
index conventions in \cite{Po1901},  also used in \cite{BlMaZe2009}.
For \textit{left} actions,
$\mathfrak{g} \rightarrow \mathfrak{X}(Q)$ is a Lie algebra  
\textit{anti-homomorphism}, i.e., $[u_Q, v_Q] = - [u,v]_Q$.},
which is equivalent to saying that the local vector fields
$A_\alpha, A_\beta \in \mathfrak{X}(U)$ satisfy
\begin{equation}
\label{A_alpha_commutator}
[A_\alpha, A_\beta] \stackrel{\eqref{VFcomrel}}= 
c_{\alpha\beta}{}^\gamma(q) A_ \gamma
\,.
\end{equation}
The action is assumed to be \textit{transitive}
in \cite{Po1901}, which means that \textit{any} tangent vector 
$v_q \in T_qQ$ is of the form $v_q = u_Q(q)$ for some $u \in \mathfrak{g}$, 
and hence if $u = a^\alpha e_\alpha$ for some $a^\alpha \in\mathbb{R}$, 
then $v_q$ can be written locally as
$v_q = a^\alpha (e_\alpha)_Q(q) = a^\alpha A_\alpha^i(q)
\frac{\partial}{\partial q^i}$.

If $(q^1, \ldots, q^n)$ are local coordinates on $Q$, the corresponding 
standard coordinates on the tangent bundle $TQ$ and the cotangent bundle
$T^*Q$ are, respectively, $(q^1, \ldots, q^n, \dot{q}^1, \ldots, \dot{q}^n)$ and 
$(q^1, \ldots, q^n,p_1, \ldots, p_n)$, where 
$v_q =\dot q^i \frac{\partial}{\partial q^i}$ and $p_q = p_i dq^i$
for any $v_q \in T_qQ$ and $p_q \in T^*_qQ$ (the cotangent space at
$q \in Q$, the dual of $T_qQ$). Throughout the paper, we use these
naturally induced coordinates. The sign convention for the canonical
Poisson bracket on $T^*Q$ adopted in this paper is, in standard coordinates,
\begin{equation}
\label{can_Poisson_bracket}
\{f, g\} = \frac{\partial f}{\partial q^k}
\frac{\partial g}{\partial p_k} - \frac{\partial g}{\partial q^k}
\frac{\partial f}{\partial p_k}\,, \quad \text{for any} \quad 
f, g \in C^{\infty}(T^*Q).
\end{equation}
If $h \in C^{\infty}(T^*Q)$, its Hamiltonian vector field is denoted
by $X^{T^*Q}_h \in \mathfrak{X}(T^*Q)$. Hamilton's equations for
a curve $c(t) \in T^*Q$ in Poisson
bracket form are $\frac{d}{dt}f(c(t)) = \{f, h\}(c(t))$ for any $f \in 
C^{\infty}(T^*Q)$. 

When working with a general Poisson manifold $(P, \{\cdot, \cdot \})$,
the Hamiltonian vector field $X^P_h \in \mathfrak{X}(P)$ of $h \in C^{\infty}(P)$ is defined by 
$\mathbf{d}f\left(X^P_h\right) := \{f, h\}$. For a symplectic manifold, $(P, \omega)$, this is equivalent to the usual definition, $\mathbf{i}_{X^P_h}\omega = X^P_h \contract \omega = \mathbf{d}h$.

\paragraph{Pairing notation.} 
For any manifold $Q$, 
finite or infinite dimensional, we denote by
$\left\langle \cdot , \cdot \right\rangle_Q:
T^*Q \times TQ\rightarrow\mathbb{R}$ the natural (weakly, in
the infinite-dimensional case) non-degenerate fiberwise duality pairing.
Given a Lie algebra $\mathfrak{g}$, which is always finite dimensional
in this paper, the non-degenerate duality pairing between its dual 
$\mathfrak{g}^\ast$ and $\mathfrak{g}$ is denoted by $\left\langle\cdot , 
\cdot \right\rangle_\mathfrak{g}: \mathfrak{g}^\ast \times \mathfrak{g}
\rightarrow \mathbb{R}$. 

Given $f \in C^{\infty}(\mathfrak{g}^\ast)$, the \textit{functional derivative} $\frac{\delta f}{\delta \mu} \in \mathfrak{g}$ of $f$
evaluated at $\mu \in \mathfrak{g}^\ast$ is defined by
\begin{equation}
\label{func_der}
\left.\frac{d}{d\epsilon}\right|_{\epsilon=0}
f(\mu + \epsilon \delta \mu) = 
\left\langle \delta\mu, \frac{\delta f}{\delta \mu}
\right\rangle_\mathfrak{g}, \quad \text{for all} \;\; 
\delta\mu \in \mathfrak{g}.
\end{equation}

\paragraph{The momentum map.} The \textit{momentum map} 
$\mathbf{J}_{T^*Q}: T^*Q \rightarrow 
\mathfrak{g}^\ast$ of the 
lifted $\mathfrak{g}$-action to $T^*Q$ is defined by 
\begin{equation}
\label{momentum_map_def}
u_{T^*Q} = X^{T^*Q}_{\mathbf{J}^u},\quad \text{for any}
\quad  u \in \mathfrak{g},
\end{equation}
where $\mathbf{J}_{T^*Q}^u(p_q) := \left\langle \mathbf{J}_{T^*Q}(p_q), u
\right\rangle_\mathfrak{g}$.
Its expression is, cf. \S12.1, formula (12.1.15) of \cite{MaRa1999},
\begin{equation}
\label{momentum_map}
\mathbf{J}_{T^*Q}^u(p_q) 
= \left\langle p_q, u_Q(q) \right\rangle_Q, \quad 
p_q\in T^*Q, \;\; u \in \mathfrak{g}, \quad \text{or, in coordinates,}
\quad \mathbf{J}_{T^*Q}(q^i, p_i) = p_j A_\alpha^j(q^i)e^\alpha,
\end{equation} 
where $\{e^1, \ldots, e^r\}$ is the basis of $\mathfrak{g}^\ast$ dual
to the basis $\{e_1, \ldots, e_r\}$ of $\mathfrak{g}$. This momentum 
map is infinitesimally equivariant. That is,\footnote{This is the infinitesimal
equivariance relation for \textit{right} actions. For \textit{left}
actions, the sign in the right hand side changes.} 
\begin{equation}
\label{inf_eqvar}
\mathbf{J}_{T^*Q}^{[u,v]}= -\{\mathbf{J}_{T^*Q}^u,\mathbf{J}_{T^*Q}^v\}\,,
\end{equation}
for all $u, v \in \mathfrak{g}$. A useful equivalent statement of
infinitesimal equivariance is (see, e.g., 
\cite[\S11.5, formula (11.5.6)]{MaRa1999} with a sign change because
we work with right actions) 
\begin{equation}
\label{inf_equ_der}
T_{p_q} \mathbf{J}_{T^*Q} \left(u_{T^*Q}(p_q) \right) = 
\operatorname{ad}_u^*\mathbf{J}_{T^*Q}(p_q)
\,.
\end{equation}
If we denote 
\[
m:=\mathbf{J}_{T^*Q} \in C^{\infty}(T^*Q, \mathfrak{g}^\ast)
\,,\quad
m_\alpha :=\mathbf{J}_{T^*Q}^{e_\alpha}
\,,\quad
m_{[\alpha, \beta]}:= 
\mathbf{J}_{T^*Q}^{[e_\alpha, e_\beta]}\in C^{\infty}(T^*Q)
\,,\]
we have
$m= m_\alpha e^\alpha$, 
with
\begin{equation}
\label{m_alpha}
m_\alpha(p_q) = p_i A_\alpha^i (q)
\end{equation}
and the infinitesimal equivariance is expressed in coordinates as
\begin{equation}
\label{m_alpha_beta}
-m_{[ \alpha, \beta]} = \{m_\alpha, m_\beta\}
\stackrel{\eqref{can_Poisson_bracket}}= 
-p_j\left[A_\alpha, A_\beta \right]^j 
\stackrel{\eqref{A_alpha_commutator}}= 
-p_jc_{\alpha\beta}{}^\gamma A_\gamma^j
\stackrel{\eqref{m_alpha}}= - c_{\alpha\beta}{}^\gamma m_\gamma.
\end{equation}

If $G$ is a Lie group with Lie algebra $\mathfrak{g}$ acting on the right
on $Q$, then the $\mathfrak{g}$-action on $Q$ is given by the infinitesimal
generator vector field $u_Q \in \mathfrak{X}(Q)$ defined at every
$q \in Q$ by $u_Q(q):= \left.\frac{d}{dt}\right|_{t=0}
q \cdot \exp(tu) \in T_qQ$, where $q \cdot g$ denotes the action of
$g \in G$ on the point $q \in Q$. The momentum map $\mathbf{J}: T^*Q 
\rightarrow \mathfrak{g}^\ast$ given in \eqref{momentum_map} is
equivariant relative to the given right $G$-action on $P$ and the
right coadjoint $G$-action on $\mathfrak{g}^\ast$, i.e., 
$\mathbf{J}_{T^*Q}(p_q \cdot g) = \operatorname{Ad}^\ast_g 
\mathbf{J}_{T^*Q}(p_q)$ for all $g \in G$. We use here the following
notations: $\operatorname{Ad}_g: \mathfrak{g}\rightarrow \mathfrak{g}$
is the adjoint action of $g \in G$, defined as the derivative
at the identity of the conjugation by $g$ in $G$; $\operatorname{Ad}_g$
is a Lie algebra isomorphism; $\operatorname{Ad}_g^*: \mathfrak{g}^\ast
\rightarrow\mathfrak{g}^\ast$ is the dual map of $\operatorname{Ad}_g$.

\paragraph{The Lie-Poisson bracket.} 
The dual $\mathfrak{g}^\ast$ 
of any finite dimensional Lie algebra $\mathfrak{g}$ is endowed with
the \textit{Lie-Poisson bracket} (see, e.g., 
\cite[\S13.1, p.416]{MaRa1999})
\begin{equation}
\label{LP}
\{f, h\}_{\pm} (\mu) = \pm \left\langle \mu, 
\left[\frac{\delta f}{ \delta\mu}, \frac{\delta h}{\delta\mu} \right]
\right\rangle_\mathfrak{g}, \qquad f, h \in C^{\infty}(\mathfrak{g}^\ast), \quad 
\mu\in\mathfrak{g}^\ast.
\end{equation}
We denote by $\mathfrak{g}^\ast_{\pm}$ the vector space $\mathfrak{g}^\ast$
endowed with the Poisson bracket \eqref{LP}. The Hamiltonian vector
field of $h \in C^{\infty}(\mathfrak{g}^\ast)$ defined by the
equation $\dot{f} = \{f, h\}$ for any $f \in C^{\infty}(\mathfrak{g}^\ast)$
has the expression $X_h^{\pm}(\mu) = \mp 
\operatorname{ad}_{\frac{\delta h}{\delta\mu}}^* \mu$ (the signs 
correspond). Given $\xi \in \mathfrak{g}$, $\operatorname{ad}_\xi^*: \
\mathfrak{g}^\ast\rightarrow \mathfrak{g}^\ast$ is the dual of the 
linear map $\mathfrak{g}\ni\eta \mapsto \operatorname{ad}_\xi \eta: =
[ \xi, \eta] \in \mathfrak{g}$.

If $G$ is a Lie group with Lie algebra $\mathfrak{g}$, left translation by
$g \in G$ is denoted as $G\ni h \mapsto L_g(h): = gh \in G$ and, likewise,  
$G\ni h \mapsto R_g(h): = hg \in G$ denotes right translation. 
Then the momentum
map $\mathbf{J}_R: T^*G \rightarrow \mathfrak{g}^\ast_-$, 
$\mathbf{J}_R(\alpha_g) = T_e^*L_g (\alpha_g)$, $\alpha_g \in T^*_gG$, 
of the lifted right translation on $G$ (i.e., $\mathbf{J}_R$ equals
$\mathbf{J}_{T^*G}$ given in \eqref{momentum_map} for the action of
$G$ on itself given by right translations) is a Poisson map, i.e., 
$\{f, h\}_- \circ \mathbf{J}_R = 
\{f \circ\mathbf{J}_R, h \circ \mathbf{J}_R\}$. Similarly, the
momentum map $\mathbf{J}_L: T^*G \rightarrow \mathfrak{g}^\ast_+$, 
$\mathbf{J}_L(\alpha_g) = T_e^*R_g (\alpha_g)$, $\alpha_g \in T^*_gG$,
for left translation is another Poisson map, i.e., 
$\{f, h\}_+ \circ \mathbf{J}_L = \{f \circ\mathbf{J}_L, 
h \circ \mathbf{J}_L\}$. (For the proof see, e.g., \cite[\S13.3]{MaRa1999}.) 
More generally, the momentum map $\mathbf{J}: T^*Q \rightarrow 
\mathfrak{g}^\ast_-$ of the lifted right $\mathfrak{g}$-action to $T^*Q$ 
is a Poisson map; the coordinate expression of this statement is 
\eqref{m_alpha_beta}.

A function $k \in C^{\infty}(\mathfrak{g}^\ast)$ such that 
$\{k,f\}_{\pm} =0$, for all $f \in  C^{\infty}(\mathfrak{g}^\ast)$, 
or, equivalently, $X^{\pm}_k =0$, is called a \textit{Casimir function}. 
This definition is valid for any Poisson manifold, not just 
$\mathfrak{g}^\ast$.

\subsection{Stochastic Clebsch formulation}

\paragraph{Introducing stochasticity into the Clebsch methodology.}
We assume that all stochastic processes are defined in the same filtered probability space $(\Omega, \mathbb P,  {\cal P}_t )$.
Let $t \mapsto W_t^k(\omega)$, $k = 1, \ldots N$, $\omega \in\Omega$, 
be $N$ independent real-valued Brownian motions, $\xi_1, \ldots, 
\xi_N \in \mathfrak{g}$, and $\Omega \ni \omega \mapsto 
(p_q)_\omega(t)\in T^*Q$ random variables for every $t$. 
The induced random variable on $Q$, the foot point of 
$(p_q)_\omega(t)$, is denoted by $\Omega \ni\omega
\mapsto q_\omega (t)\in Q$. 
Stratonovich differentiation is denoted by $X\circ dY$ and It\^o differentiation simply  by $XdY$. Then, given $\xi_1, \ldots, \xi_N 
\in \mathfrak{g}$ and a $\mathfrak{g}$-valued random curve 
$u(t)$,
$$
t \longmapsto 
\left\langle
(p_q)_\omega (t),\,\circ d q_\omega (t) -u_\omega (t)_Q (q_\omega(t))dt -
(\xi_k )_Q (q_\omega(t))\circ d W_t^k(\omega)
\right\rangle_Q
$$
is a process whose coordinate expression is
$$
(p_i)_\omega(t) \left(\circ d q_{\omega}^i (t) -
A_\alpha^i (q_{\omega}(t))u_{\omega}^\alpha(t) dt  -  
A_\alpha^i (q_\omega(t))\xi_k^\alpha 
\circ d W_t^k \right).
$$
We always assume that the stochastic processes are defined for all times $t\in [0,T]$, the coefficients are smooth, and that $u(t)$ is smooth in the time variable. Furthermore, we assume that the manifold $Q$ has no boundary.

\paragraph{Remark on notation.}
For simplicity in the notation, we no longer write the probability variable $\omega$ and, instead, we use symbols $\mathscr{p,q}$, etc., to denote semimartingales. 

Given the Lagrangian $\ell \in C^{\infty}(\mathfrak{g} \times Q)$, 
introduce the stochastic action, defined for random curves
$u \in C^{1}([0,T], \mathfrak{g})$, $\mathscr{q} \in  C([0,T], Q)$, $(\mathscr{p}_\mathscr{q}) \in C([0,T], T^* Q)$,
and define the constrained stochastic action integral $S(u,\mathscr{p}_\mathscr{q})$ by
\begin{align}
S( u,\mathscr{p}_\mathscr{q})  &=
\int_0^T  \ell (u(t),\mathscr{q}(t))  dt + 
\langle \mathscr{p}_{\mathscr{q}}(t)\,,\,  \circ d\mathscr{q}(t) -u(t)_Q (\mathscr{q}(t))dt -(\xi_k )_Q (\mathscr{q}(t)) \circ dW_t^k \rangle 
\,,\label{SVP1_a}
\end{align}
where the semimartingale $\mathscr{p}_\mathscr{q}$ is assumed to be regular enough for the above integrals to be  finite.
Indeed, all stochastic processes considered in this paper will be continuous semimartingales with regular coefficients.
In local coordinates, the stochastic action integral \eqref{SVP1_a} may be recognized as the sum of a Lebesgue integral and a Stratonovich integral
\begin{align}
S(u,\mathscr{p}_\mathscr{q})  &=
\underbrace{
\int_0^T \left(\ell(u(t),\mathscr{q}(t)) dt - 
\mathscr{p}_i(t) A^i_\alpha(\mathscr{q}(t))u^\alpha(t) \,dt \right) 
}_{\hbox{Lebesgue integral}}
+
\underbrace{
\int _0^T \mathscr{p}_i\left(\circ d\mathscr{q}^i (t) -
A^i_\alpha(\mathscr{q}(t))\xi_k^\alpha\circ dW^k_t 
\right) }_{\hbox{Stratonovich integral}}
\,.\label{SVP1_b}
\end{align}

For notational convenience, we introduce for every $t\in [0, T]$
the stochastic Lie algebra element whose components 
in the basis $\{e_1, \ldots, e_r\}$ of $\mathfrak{g}$ are
\begin{align}
d\mathscr{x}_t^\alpha := u^\alpha(t) dt + \xi_k^\alpha\circ dW^k_t
\,,\label{dx-def}
\end{align}
to convey that, when we integrate some stochastic process $\mathscr{X}_t$ with respect to  $d\mathscr{x}_t^\alpha$, we mean
$$
\int_0^T \mathscr{X}_t d\mathscr{x}_t^\alpha := 
\int _0^T\mathscr{X}_t u^\alpha(t) dt + 
\int_0^T \mathscr{X}_t \xi_k^\alpha\circ dW^k_t \,, \quad
\alpha = 1, \ldots, \dim \mathfrak{g}=r.
$$
In particular, we rewrite the action integral in \eqref{SVP1_b} in the abbreviated form
\begin{align}
S(u,\mathscr{p}_\mathscr{q})  &=
\int_0^T \ell(u(t),\mathscr{q}(t))dt + 
\mathscr{p}_i(t)\left({\circ d\mathscr{q}^i(t)} - 
A^i_\alpha( \mathscr{q}(t))d\mathscr{x}_t^\alpha \right)  
\,.\label{SVP2}
\end{align}
We assume that the Lagrangian $\ell(u,q)$, viewed as 
a function $\ell: \mathfrak{g} \times Q \rightarrow \mathbb{R}$, is 
\textit{hyperregular}, i.e., for every $q \in Q$, the
map $\mathfrak{g} \times \{q\} \ni (u,q) \mapsto n:=
\frac{\delta \ell}{\delta u} \in \mathfrak{g}^\ast \times \{q\}$ is 
a diffeomorphism. In particular,
$n$ is a function of $(u,q)$ and, conversely, $u$ is a function of
$(n, q)$. Thus, replacing the variables $u\in \mathfrak{g}$ and
$q \in Q$ by the random curves $u(t)$, $\mathscr{q}(t)$, we get 
the semimartingale $n(u(t),\mathscr{q}(t))$. 
 
Consider a random point $(q_\omega, p_\omega)$ in the manifold $T^* Q$ and 
$f\in C^\infty (T^*Q)$. The differential of
$f$ in the direction of the (deterministic) vector field $Z\in 
\mathfrak{X}(T^*Q)$ is given by
$$
\left\langle \mathbf{d}f, Z \right\rangle_Q (q_\omega ,p_\omega )=  
\left.\frac{d}{d\epsilon}\right|_{\epsilon=0} f(\gamma_\omega (\epsilon )),
$$
where $\gamma_\omega$ is a curve starting from $(q_\omega , p_\omega )$ 
with initial velocity $Z(q_\omega, p_\omega)$ and the limit is taken
in $L^2 (\Omega )$. 
Therefore $\left\langle\mathbf{d}f, Z\right\rangle_Q (q_\omega ,p_\omega )$ 
consists in evaluating  $\left\langle\mathbf{d}f, Z\right\rangle_Q$ at the 
random point $(q_\omega,p_\omega)$.

Consider  now a semimartingale of the form 
\begin{equation}
\label{semimform}
\mathscr{Y}_t (\mathscr{q},\mathscr{p})=\mathscr{Y}_0 +\int_0^t \phi_\alpha 
(\mathscr{q}(s),\mathscr{p}(s)) \xi_k^\alpha \circ dW^k_s +\int_0^t \psi (\mathscr{q}(s),
\mathscr{p}(s)) ds,
\end{equation}
where  $\mathscr{q}(t)$, $\mathscr{p}(t)$ are $Q$-, respectively, $T^*Q$-valued 
semimartingales, with $\mathscr{q}(t)$ the footpoint of $\mathscr{p}(t)$, 
$\phi_\alpha,\, \psi \in C^{\infty}(T^*Q)$ are 
deterministic smooth functions,
and $\xi_k = \xi_k^\alpha e_\alpha \in \mathfrak{g}$ are 
given (constant) elements. The \textit{{\rm(}Stratonovich{\rm)}  stochastic Poisson bracket} of $f(\mathscr{q}(t),\mathscr{p}(t))$ 
with $\mathscr{Y}_t$
is defined by
\begin{align}
\begin{split}
\{f(\mathscr{q}(t),\mathscr{p}(t)),\circ d_t \mathscr{Y}_t\} :=& 
\left\langle\mathbf{d}f, X_{\phi_\alpha}\right\rangle_Q  (\mathscr{q}(t),\mathscr{p}(t)) \xi_k^\alpha \circ dW^k_t + 
\left\langle\mathbf{d}f, X_\psi\right\rangle_Q(\mathscr{q}(t),\mathscr{p}(t)) dt\\
=& \{f, \phi_\alpha \} (\mathscr{q}(t),\mathscr{p}(t))\xi_k^\alpha 
\circ dW^k_t +\{f,\psi\} (\mathscr{q}(t),\mathscr{p}(t))dt.
\end{split}
\label{stochastic_PB}
\end{align}
where $X_\phi$, $X_\psi$ denote the Hamiltonian vector fields of $\phi $ and $\psi$.

If $g \in C^{\infty}(T^*Q)$, the Poisson bracket of the two semimartingales
$f(\mathscr{q}(t),\mathscr{p}(t))$ and  $g(\mathscr{q}(t),\mathscr{p}(t))$
is defined as 
\begin{align}
\label{standard_stochastic_PB}
\left\{f(\mathscr{q}(t),\mathscr{p}(t)), g(\mathscr{q}(t),\mathscr{p}(t)) \right\}: &=  \{f, g\}(\mathscr{q}(t),\mathscr{p}(t)),
\end{align}
i.e., it equals the semimartingale obtained by computing the function
$\{f, g\} \in C^{\infty}(T^*Q)$ and replacing its variables $(q,p)$ 
by the semimartingales $(\mathscr{q}(t),\mathscr{p}(t))$.

The constraint imposed by the pairing with the Lagrange multipliers 
$\mathscr{p}_i(t)$ defines the $i$th component of the \emph{stochastic velocity map}, 
\begin{align}
d\mathscr{q}^i(t) = A^i_\alpha(\mathscr{q}(t)) d\mathscr{x}_t^\alpha
\,.\label{SVP3}
\end{align}

To justify the computations that follow on manifolds 
and ensure that they are intrinsic, we provide a quick review of the 
basics of the Malliavin Calculus in the next subsection.

\subsection{Calculus of variations on path spaces}
In this subsection we give a brief summary of some definitions and 
results about the calculus of variations on (probability) path spaces 
known as \textit{Malliavin Calculus}, both in the case where the paths 
take values on Euclidean spaces and on  Riemannian manifolds. For this 
subject we refer to  \cite{Mall1997}.

\paragraph{Malliavin derivative: the Euclidean case.} Beginning with the 
Euclidean configuration space case, let 
$x_0 \in \mathbb{R}^n$ be given, and fixed throughout the
discussion below, and let $\mathbb P_{x_0} 
=\{ x:  [0,T]\rightarrow \mathbb R^n ,\; x ~\hbox{continuous},\; 
x(0)=x_0\}$ be the path space of continuous paths endowed with the law 
$\mu$ of the Brownian motion  on $\mathbb R^n$ starting from $x_0$ at time 
$0$   and with the usual past filtration ${\mathcal P}_t$. A variation of 
the paths $x$ is a map 
$z: [0,T]
\rightarrow \mathbb R^n$ of
bounded variation and such that $ \int_0^T |\frac{d}{dt} z(t)|^2 dt 
<\infty$, $z(0)=0$. These are the elements in the Cameron-Martin space, which is dense in 
$\mathbb P_{x_0}$ for the sup topology.
For a functional $F\in L^p_\mu (\mathbb P_{x_0} )$, the 
\textit{Malliavin derivative} of $F$ in the direction $z$ is defined as 
$$
D_z F(x)=\lim_{\epsilon \rightarrow 0} \frac{1}{\epsilon} 
\big( F(x+\epsilon z) -F(x) \big) ,
$$
the limit being taken in the $L^2_{\mu} $ sense. For a ``cylindrical" functional of the form
$F(x) =f(x(t))$, for each $t\in [0,T]$ fixed, where $f$ is a real valued 
smooth (at least $C^1$) function, we have
$$
D_z f(x(t))= \frac{\partial f}{\partial x^i} (x(t)) z^i (t).
$$

For a semimartingale $\zeta$ with values in $\mathbb{R}^n$ we can also define
$D_{\zeta} f(x(t))= \frac{\partial f}{\partial x^i} (x(t)) \zeta^i (t)$.

If we consider  pinned Brownian paths (or bridges), the corresponding path space will be
 $\mathbb P_{x_0 ,x_T} 
=\{ x:  [0,T]\rightarrow \mathbb R^n ,\; x ~\hbox{continuous},\; 
x(0)=x_0 ,x(T)=x_T\}$. In this case the variations must satisfy the condition $z(T)=0$.

\paragraph{The It\^o map.} Analogously, given a smooth $n$-dimensional manifold $Q$ 
acted upon by a Lie group $G$ with Lie algebra $\mathfrak{g}$, and a point
$q_0 \in Q$, let $\mathbb P_{q_0} (Q):=\{ q:[0,T]\rightarrow Q \mid q ~\hbox{continuous},\; q(0)=q_0\}$ denote the path space of continuous
paths starting at $q_0$, endowed with the law of the process 
\[
dq(t):=(\xi_k)_Q (q(t))\circ dW_t^k +(u(t))_Q (q(t))dt
\,,\]
with $q(0)=q_0$. As usual, $\eta_Q$ denotes the infinitesimal generator 
vector field on $Q$ induced by $\eta \in \mathfrak{g}$, $\xi_1, 
\ldots, \xi_N\in \mathfrak{g}$ are $N$ given Lie algebra elements, 
and $t \mapsto u(t) \in \mathfrak{g}$ is a given random path 
that we assume adapted and of bounded variation (we are actually assuming  smoothness).
In addition, we request that the process $dq(t)$ defined 
above does not explode in finite time and, in particular, is defined 
for all $t\in [0,T]$. 

It is possible to define in the space $\mathbb P_{q_0} (Q)$ a global 
chart, as follows. On the manifold $Q$ we consider the 
bilinear form 
\[
\bar{g}(q)(p_q, \beta_q): = \sum_{k=1}^N 
p_q\left(\left(\xi_k\right)_Q(q) \right)\
\beta_q\left(\left(\xi_k\right)_Q(q) \right), \qquad p_q, 
\beta_q \in T^*_qQ,
\]
whose coordinate expression is
$$
\bar{g}^{ij} =\sum_{k=1}^N A_\alpha^i 
\xi_k^\alpha A_\beta^j \xi_k^\beta =
\sum_{k=1}^N (\xi_k )^i_Q (\xi_k )^j_Q\,.
$$
It is assumed that this is a co-metric on $T^*Q$, i.e., $\bar{g}$ is positive
definite, so that  the diffusions $q$ are non singular, i.e., their generators are elliptic. The associated Riemannian metric on $Q$, denoted by $g$ or
$\left\langle\!\left\langle \cdot , \cdot\right\rangle\!\right\rangle$,
has a corresponding Levi-Civita connection $\nabla$.  A (stochastic) 
parallel transport over the paths $q(\cdot )$ can be defined (following It\^o, see for example, \cite{IkWat1989}). We denote it
by $t^q _{\tau_2 \leftarrow \tau_1}: T_{q(\tau_1 )} Q \rightarrow T_{q(\tau_2 )} Q$.

This amounts to solving the following stochastic system
$$
d_{\tau_1}[t^q _{\tau_2 \leftarrow \tau_1}]_i^j =
[t^q _{\tau_2 \leftarrow \tau_1}]_r^j 
 \Gamma_{si}^r \Big(  A^s_\alpha (q(t))\xi_k^\alpha \circ dW_t^k +
A^s_\alpha  (q(t)) u^\alpha dt \Big), \quad 
[t^q _{\tau_2 \leftarrow \tau_2}] =I,
$$
where $\Gamma$ are the Christoffel symbols associated with the Levi-Civita 
connection $\nabla$, $i,j,r,s=1, \ldots, n$, and 
$k=1, \ldots, N$. Parallel transport of vector fields can be lifted to 
parallel transport of orthonormal frames, namely isometries 
$r: \mathbb R^n \rightarrow T_q Q $ (c.f. \cite{IkWat1989}).
For a path  $W \in \mathbb P_{x_0}$, let $r_W (t)$ be the 
parallel transport of frames from time $0$ to time $t$ along $W(t)$, with 
$\pi (r_W (0))=q_0$ (where $\pi $ denotes the canonical projection from 
the orthonormal frame bundle over $Q$ to the underlying manifold $Q$).
We have
$t^q _{\tau_2 \leftarrow \tau_1} =r_W (\tau_2 ) r_W (\tau_1  )^{-1}$.

The
 It\^o map ${\mathcal I} : \mathbb P_{x_0} (\mathbb R^n )\rightarrow \mathbb P_{q_0} (Q)$, is defined by
$$
{\mathcal I}(W)(t)=\pi (r_W (t)).
$$
This map realizes an isomorphism of probability spaces, i.e., it is a bijective map that transports the law of the Brownian motion to the law of the process $q$ (\cite{Mall1997}).
It provides a global chart for the path space of the manifold that will be used throughout the paper.

\paragraph{Malliavin derivative: the manifold case.} A variation of the path $q\in \mathbb P_{q_0} (Q)$ is a map $Z_q (t)\in T_{q(t)}Q$ such that $z (t)=t^q_{0\leftarrow t} (Z_q (t))$ is a variation on the Euclidean path space $\mathbb P_{x_0} (\mathbb R^n )$
as defined above.  We have
$$
\frac{d}{dt} z (t) = t^q_{0\leftarrow t}(\nabla^q_t  Z (t))\,,
$$
where $\nabla^q$ is the covariant derivation $\nabla^q_t Z_q (t) =\lim_{\epsilon \rightarrow 0}\frac{1}{\epsilon} (t^q_{t\leftarrow t+\epsilon}(Z_q (t+\epsilon ))-Z_q (t))$.

Then, for a cylindrical functional $F$ defined on the path space 
$\mathbb{P}_{q_0} (Q)$ of the form $F(q)=f(q(t))$, for each
$t\in [0,T]$ fixed, we consider its directional derivative
\begin{equation}
\label{Malliavin_dir_dev_mfld}
D_{Z} F(q) =\int_0^T 1_{\tau <t}\big( t^q_{0 \leftarrow t} (\nabla f)(q(t))\big)^i ~\frac{d }{d\tau }
z_i (\tau)d\tau, 
\end{equation}
where $1_{\tau <t}$ denotes the characteristic function
of the open interval $(-\infty, t)$.

These Malliavin derivatives can be defined for more general functionals, 
but in this paper we only need those which are introduced above.
Notice that for $Q=\mathbb R^n$, we have $D_z F(x)=
\frac{\partial f}{\partial x^i} (x(t))z_i (t)$. That is, the directional 
derivative coincides with the one defined in the Euclidean path space 
setting.

\paragraph{The pull back of the Malliavin derivative to Euclidean space.}
We want to pull back derivatives on the path space of the manifold to the 
Euclidean path space. For this purpose, we invoke the following result  (\cite[ch. XI]{Mall1997}, 
or \cite[ch. II c]{Bi1984} 
and \cite{CrMa1998} for the case with drift):

$$
(D_{Z}F)\circ \mathcal I =D_{\zeta}(F\circ \mathcal I ),
$$
where the $\mathbb{R}^n$-valued semimartingale $\zeta$ satisfies the stochastic differential system
$$
\left\{\begin{matrix}
d\zeta(t)=\frac{d}{dt} z(t) dt -\left(\frac{1}{2} R_t +\mathcal D(t)\right) (z(t))dt -\rho(t) dW(t) \\
d\rho(t) = \Omega\left(\circ dW(t)-b(t)dt,z(t) \right).
\end{matrix}\right.
$$ 
Here, $z (t):=t^q_{0\leftarrow t} (Z_q (t))$ for the given the path 
$t\mapsto Z_q (t)\in T_{q(t)}Q$, $\Omega: TQ \times TQ \rightarrow TQ$ is the curvature tensor of the metric $g$ on $Q$,
$$
R_t = t^q_{0\leftarrow t} \circ {\rm Ricci}_{q(t)} \circ 
t^q_{t \leftarrow 0}
$$
is the  representation in the global chart  of the Ricci tensor on $Q$, and
$\mathcal D (t)= t^q_{0\leftarrow t} \circ \nabla b (t) \circ t^q_{t\leftarrow 0}$, where $b(t)$ is the time dependent 
smooth vector field on $Q$ defined by
\[
b(t) := (u(t))_Q + \frac{1}{2} \sum_{k=1}^N \nabla_{(\xi_k)_Q}
(\xi_k)_Q\,.
\]

Considering the variation $Z$ on the path space of the manifold corresponds to considering in the global chart given by the It\^o map a variation with respect to the semimartingale $\zeta$ defined above. The diffusion part of this semimartingale is given by the curvature of the manifold, which is antisymmetric; therefore this part corresponds to a rotation of the Brownian motion, which is again a Brownian motion. We neglect this rotation 
and understand that we may not be working with a fixed Brownian motion, but eventually with equivalent ones (identical in law). Therefore, our variations with respect to $Z$  will  be taken, using the global chart, with respect to directions $\bar z$ of the form
$$
\frac{d}{dt}\bar z =\frac{d}{dt} z(t)-
\left(\frac{1}{2} R +{\mathcal D}\right) (z(t))
\,.
$$
The map $z\mapsto \bar{z}$ can be inverted, through the resolvent equation
$$
\frac{d}{dt } Q_{t,s} =\left(\frac{1}{2} R+{\mathcal D} \right) Q_{t,s} ,\quad Q_{s,s}=Id
$$
and $z(t )=\int_0^t Q_{t,s} \left(\frac{d}{ds} \bar z (s)\right)d{s}$, as long as the previous resolvent equation has a solution for all $t\in [0,T]$.

We can then conclude that making an ``arbitrary" variation in the direction $Z$ in the path space of the manifold corresponds, using the global chart,  to making an ``arbitrary" variation in $\mathbb P_{x_0} (\mathbb R^n )$ in the direction  $z$ (an adapted random curve of bounded variation  with time derivative in $L^2_{\mu}$). To use a more intuitive notation we shall denote by $\delta q (t)=Z_q (t)$ a variation in $\mathbb P_{q_0}(Q)$.

All the considerations above still hold for pinned Brownian paths, namely those with a final condition $q(T)=q_T$. Then the variations are equal to zero at this final time. The corresponding sigma-algebra and filtration on the path space are the usual ones, generated by the coordinate maps and generated by the coordinate maps up to time $t$, respectively. We refer to \cite{Driver1994,ElLi2006} for more explanation.

From now on, we also assume a growth control on the vector fields $u(t)$ so that the laws of the corresponding diffusion processes $dq(t):=(\xi_k)_Q (q(t))\circ dW_t^k +(u(t))_Q (q(t))dt$, $q(0)=q_0$ are absolutely continuous with respect to the law of $d\tilde q(t):=(\xi_k)_Q (\tilde q(t))\circ dW_t^k$, $\tilde q(0)=\tilde q_0$. The standard assumption to ensure this is Novikov's condition (see, e.g., \cite{IkWat1989}); namely, that there exists a  $\mathbb R^N$-valued stochastic process $\theta(t)$ such that 
$(\xi_k)_Q \theta ^k (t)=u(t)_Q$ and a constant $0<\lambda <1$ such that  $E_{\mu} (\exp \lambda \int_0^T |\theta (t)|^2 dt) <\infty$. 
Also we assume growth control on $u$ and $b$ so that the resolvent equation above has a solution defined for all $t\in [0,T]$.

\subsection{Stochastic variational principles}

With these definitions and preparatory formulas, the following theorem 
holds.

\begin{theorem}{\rm[Stratonovich stochastic variational principle]}
\label{StratSVP-thm}$\,$
The stochastic variational principle $\delta S = 0$ with action integral 
$S$ given in equation \eqref{SVP2}, with $d\mathscr{x}_t^\alpha$ given by
\eqref{dx-def}, 
defined on diffusion processes with fixed diffusion coefficients, under the assumptions described before and such that $q(0)=q_0 , q(T)=q_T$, yields the following stochastic dynamical equations
\begin{align}
\label{SEP-eqns-thm}
\begin{aligned}
&d_tn_\alpha(\mathscr{q}(t),\mathscr{p}(t)) = 
\{n_\alpha(\mathscr{q}(t),\mathscr{p}(t))\,,\,
n_\beta(\mathscr{q}(t),\mathscr{p}(t))\} d\mathscr{x}_t^\beta 
+ \frac{\partial \ell}{\partial q^i}(u(t),\mathscr{q}(t))
A^i_\alpha(\mathscr{q}(t)) dt, \\
&d_t\mathscr{q}^i(t) = \{ \mathscr{q}^i(t),\, 
n_\beta(\mathscr{q}(t),\mathscr{p}(t))\} d\mathscr{x}_t^\beta
\,,\quad
d_t\mathscr{p}_i(t)  = \{\mathscr{p}_i(t),\, 
n_\beta(\mathscr{q}(t),\mathscr{p}(t))\} d\mathscr{x}_t^\beta 
+ \frac{\partial \ell}{\partial q^i}(u(t),\mathscr{q}(t))dt
\,,
\end{aligned}
\end{align}
for all $\alpha = 1, \ldots, \dim \mathfrak{g}$, $i = 1, \ldots, \dim Q$,
where $n:= \frac{\delta\ell}{\delta u} \in 
\mathfrak{g}^\ast$, i.e.,
$n_\alpha = \frac{\partial \ell}{\partial u^\alpha}$. Moreover, 
$n(\mathscr{q}(t),\mathscr{p}(t))=m(\mathscr{q}(t),\mathscr{p}(t))$ a.s., where $m(p_q) = \mathbf{J}_{T^*Q}(p_q) = 
m_\alpha(p_q) e^\alpha \in \mathfrak{g}^\ast$, $m_\alpha = p_i A_\alpha^i(q)$ and hence 
$m_\alpha(\mathscr{q}, \mathscr{p}) = 
\mathscr{p}_i A_\alpha^i (\mathscr{q})$.
\end{theorem}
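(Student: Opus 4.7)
The plan is to take free variations in the three random quantities $u$, $\mathscr{p}$, and $\mathscr{q}$ on which $S$ depends, treating the $u$ and $\mathscr{p}$ variations essentially algebraically and using the Malliavin/It\^o-map framework of the previous subsection only for the $\mathscr{q}$ variation. The three variations produce, respectively, the ``moreover'' identity $n=m$, the constraint (stochastic velocity map), and the evolution of $\mathscr{p}_i$; the evolution of $n_\alpha$ then follows by the Stratonovich chain rule applied to $n_\alpha = m_\alpha = \mathscr{p}_iA^i_\alpha(\mathscr{q})$.

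First I would vary with respect to $u^\alpha(t)$. Since $u$ appears only pointwise in $\ell(u,\mathscr{q})$ and in $d\mathscr{x}_t^\alpha = u^\alpha dt + \xi_k^\alpha\circ dW^k_t$, the coefficient of $\delta u^\alpha\,dt$ in $\delta S$ is $\frac{\partial\ell}{\partial u^\alpha}(u,\mathscr{q}) - \mathscr{p}_iA^i_\alpha(\mathscr{q})$, which must vanish. This gives $n_\alpha = \mathscr{p}_iA^i_\alpha(\mathscr{q}) = m_\alpha(\mathscr{q},\mathscr{p})$ a.s., proving the ``moreover'' statement. Next I would vary with respect to $\mathscr{p}_i$. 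Because $\mathscr{p}$ appears only in the pairing, the coefficient of $\delta\mathscr{p}_i$ is $\circ d\mathscr{q}^i - A^i_\alpha(\mathscr{q})d\mathscr{x}_t^\alpha$, whose vanishing is the stochastic velocity map \eqref{SVP3}. Using the canonical bracket \eqref{can_Poisson_bracket} and $n_\beta = \mathscr{p}_jA^j_\beta(\mathscr{q})$ one checks $\{\mathscr{q}^i,n_\beta\} = A^i_\beta(\mathscr{q})$, so this is precisely the second displayed equation.

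Third I would vary with respect to $\mathscr{q}^i$. Here I invoke the Malliavin calculus discussion: an arbitrary path-space variation $\delta\mathscr{q}=Z_q$ corresponds, under the It\^o map $\mathcal I$, to an arbitrary adapted variation $\bar z$ of bounded variation in $\mathbb P_{x_0}(\mathbb R^n)$ (via the resolvent inversion of the map $z\mapsto \bar z$), which legitimizes localization in the usual way. Because Stratonovich differentials obey the classical Leibniz rule, variation of $\int\mathscr{p}_i\circ d\mathscr{q}^i$ yields $-\int\delta\mathscr{q}^i\circ d\mathscr{p}_i$ with vanishing endpoint terms; variation of $\ell$ contributes $\frac{\partial\ell}{\partial q^i}\delta\mathscr{q}^i dt$; and variation of $-\mathscr{p}_iA^i_\alpha(\mathscr{q})d\mathscr{x}_t^\alpha$ contributes $-\mathscr{p}_j\frac{\partial A^j_\alpha}{\partial q^i}\delta\mathscr{q}^i d\mathscr{x}_t^\alpha$. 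Setting the coefficient of $\delta\mathscr{q}^i$ to zero gives
\[
\circ d\mathscr{p}_i = \frac{\partial\ell}{\partial q^i}(u,\mathscr{q})\,dt - \mathscr{p}_j\frac{\partial A^j_\alpha}{\partial q^i}(\mathscr{q})\,d\mathscr{x}_t^\alpha,
\]
which matches the third displayed equation, since a direct computation with \eqref{can_Poisson_bracket} shows $\{\mathscr{p}_i, n_\beta\} = -\mathscr{p}_j\frac{\partial A^j_\beta}{\partial q^i}$.

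Finally, the first equation of the theorem is obtained by applying the Stratonovich chain rule (ordinary Leibniz) to $n_\alpha(\mathscr{q},\mathscr{p}) = \mathscr{p}_iA^i_\alpha(\mathscr{q})$ and substituting the just-derived expressions for $d_t\mathscr{q}^i$ and $d_t\mathscr{p}_i$; grouping the $d\mathscr{x}_t^\beta$ terms reproduces $\{n_\alpha,n_\beta\}d\mathscr{x}_t^\beta$ by \eqref{m_alpha_beta}, while the only surviving drift piece is $A^i_\alpha(\mathscr{q})\frac{\partial\ell}{\partial q^i}(u,\mathscr{q})\,dt$. The main technical obstacle is the $\mathscr{q}$ variation: one must justify that arbitrary infinitesimal deformations of the $Q$-valued semimartingale $\mathscr{q}$ produce arbitrary adapted test directions on the Euclidean path space so that the variational equation can be localized pointwise, together with checking that the growth/Novikov conditions ensure the resolvent equation for $z\mapsto\bar z$ is globally solvable on $[0,T]$; this is exactly what the Malliavin/It\^o-map preparation in the preceding subsection was designed to provide.
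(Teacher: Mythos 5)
Your proposal is correct and follows essentially the same route as the paper: the same three variations ($\delta u$ giving $n=m$, $\delta\mathscr{p}$ giving the velocity map, $\delta\mathscr{q}$ giving the $\mathscr{p}$-equation after Stratonovich integration by parts, with admissibility of the $\mathscr{q}$-variations justified by the It\^o-map/Malliavin framework and Novikov's condition), followed by the Stratonovich product rule applied to $m_\alpha=\mathscr{p}_iA^i_\alpha(\mathscr{q})$ together with \eqref{VFcomrel} and \eqref{m_alpha_beta} to produce the first equation, and the bracket identities $\{\mathscr{q}^i,n_\beta\}=A^i_\beta$ and $\{\mathscr{p}_i,n_\beta\}=-\mathscr{p}_j\partial A^j_\beta/\partial q^i$ to recast the other two.
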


The Poisson brackets in this formula need 
interpretation, since $n_\alpha$ depends only on the variables $u$ and 
$q^i$. First, the Poisson brackets on the right hand side of 
\eqref{SEP-eqns-thm} are taken in the sense of \eqref{standard_stochastic_PB} 
or its global version. Second, as
will be shown below, the stationarity condition $\delta S =0$ 
yields the relation $\frac{\delta \ell}{\delta u^\alpha}
(\mathscr{q}(t),\mathscr{p}(t)) = 
\mathscr{p}_i(t) A^i_\alpha(\mathscr{q}(t))$ almost surely,
which says that $n(\mathscr{q}(t),\mathscr{p}(t))=
m(\mathscr{q}(t),\mathscr{p}(t))$ almost surely and that the 
Lagrange multipliers $\mathscr{p}_i(t)$ also depend on the 
random curves $u(t)$ and $\mathscr{q}^i(t)$, as expected. By pretending now 
that the quantity $m_\alpha = p_i A^i_\alpha(q)$ depends 
on the variables $q^i$ and $p_i$, as if
they were \textit{independent} $T^*Q$-chart variables, one may compute
the Poisson brackets in \eqref{SEP-eqns-thm}
by using the derivative of the semimartingale $\{f,g\}(\mathscr{q}(t),\mathscr{p(t)})$.

\begin{remark}\rm
Upon looking back at the Stratonovich integral in the stochastic action
functional \eqref{SVP1_b}, we see that the stochasticity couples to the phase 
space variables through the \emph{momentum map} via 
the relations $\frac{\partial\ell}{\partial u^\alpha}(u(t),
\mathscr{q}(t)) = \mathscr{p}_i(t)A_\alpha^i(\mathscr{q}(t))$. 
\hfill $\lozenge$
\end{remark}

\begin{proof}
The first step in the proof of Theorem \ref{StratSVP-thm} is to take the variations of the action integral \eqref{SVP1_b}, thereby finding
the following equations, which hold almost surely,
\begin{align}
\begin{split}
\delta u(t):\quad&
\frac{\delta \ell}{\delta u^\alpha}(u(t),\mathscr{q}(t)) - \mathscr{p}_i(t) A^i_\alpha(\mathscr{q}(t)) = 0
\,,\\
\delta \mathscr{p}(t):\quad&
d_t\mathscr{q}^i(t) - A^i_\alpha(\mathscr{q}(t)) d\mathscr{x}_t^\alpha  = 0
\,,\\
\delta \mathscr{q}(t):\quad&
d_t\mathscr{p}_i(t) + \mathscr{p}_j(t) 
\frac{\partial A^j_\alpha}{\partial \mathscr{q}^i}(\mathscr{q}(t))
d\mathscr{x}_t^\alpha  - \,\frac{\partial \ell}{\partial q^i}
(u(t),\mathscr{q}(t))dt = 0 
\,,\end{split}
\label{var-eqns-proof}
\end{align}
after integrations by parts using the vanishing of the term 
$(\mathscr{p}_i(t)\, \delta \mathscr{q}^i(t))|_0^T$ at the endpoints 
in time, which follows from the assumption $\delta \mathscr{q}^i (0)= 0 
=\delta \mathscr{q}^i (T)$. 

Notice that we are not fixing a priori  a vector field $u(t)$ in this Theorem.
Nevertheless, by Novikov's condition above, the laws  of the corresponding diffusions are 
absolutely continuous with respect a fixed path space (strictly before the final time $T$) and all the 
admissible variations, as we have seen, do not depend on the form of $u(t)$. 
Admissible variations for the stochastic process 
$\mathscr{q}(\cdot)$, in the 
global chart given by the It\^o map, are variations on the corresponding path space as defined in \eqref{Malliavin_dir_dev_mfld}. Therefore one can take arbitrary (in this sense) horizontal variations of the process  $\mathscr{p}(\cdot)_{\mathscr{q}(\cdot )}$. Since also variations in all vertical directions are allowed, we can take arbitrary variations of
 $\mathscr{p}(\cdot )_{\mathscr{q}(\cdot )}$.

On the contrary, the admissible paths ($\mathscr{q}$ and 
$\mathscr{p}$) for which the action functional $S$ is defined have fixed diffusion coefficients, so that we can use Malliavin calculus on the corresponding path spaces. Since  the diffusion coefficient of $\mathscr{p}$ is not elliptic at every point, we may regularize it, deduce the corresponding dynamical equations and then pass to the limit.

In particular, we have 
$n_\alpha(\mathscr{q}(t),\mathscr{p}(t)) =
\frac{\delta \ell\ }{\delta u^\alpha}(u(t),\mathscr{q}(t))
=m_\alpha(\mathscr{q}(t),\mathscr{p}(t))$.
Therefore, taking the stochastic differential of the first equation, then using the second 
and third equations in \eqref{var-eqns-proof}, 
we get, dropping the $t$-dependence notation in the semimartingales, 
\begin{align}
d_tm_\alpha(\mathscr{q},\mathscr{p})&= 
d\left( \frac{\delta \ell\ }{\delta u^\alpha}\right) 
= d \left(\mathscr{p}_i A^i_\alpha(\mathscr{q})\right) = 
(\circ d\mathscr{p}_i) A^i_\alpha(\mathscr{q}) 
+ \mathscr{p}_i \frac{\partial A^i_\alpha}{\partial q^j}(\mathscr{q}) 
\circ  d\mathscr{q}^j
\nonumber \\
& \!\!\!\stackrel{\eqref{dx-def}}= A_\alpha^i(\mathscr{q})
\left(-\mathscr{p}_j 
\frac{\partial A_\beta^j}{\partial q^i}(\mathscr{q}) d\mathscr{x}^\beta_t  
+ \frac{\partial \ell}{\partial q^i}(u, \mathscr{q})dt\right) +\mathscr{p}_j\
\frac{\partial A_\alpha^j}{\partial q^k}(\mathscr{q})
 A_\beta^k(\mathscr{q}) d\mathscr{x}^\beta_t
\nonumber \\ 
& =\mathscr{p}_j \left(A_\beta^k(\mathscr{q})
\frac{\partial A_\alpha^j}{\partial q^k}(\mathscr{q}) -
A_\alpha^k(\mathscr{q})\frac{\partial A_\beta^j}{\partial q^k}
(\mathscr{q})\right) d\mathscr{x}^\beta_t
+ \frac{\partial \ell}{\partial q^i}(u,\mathscr{q})
A_\alpha^i(\mathscr{q}) dt
\nonumber \\
&\!\!\!\stackrel{\eqref{VFcomrel}}= 
\mathscr{p}_j c_{\beta\alpha}{}^\gamma \,\mathscr{p}_i A^j_\gamma(\mathscr{q})\,d\mathscr{x}_t^\beta
 + \frac{\partial \ell}{\partial q^{i}}A^i_\alpha(\mathscr{q}) dt
\nonumber \\ 
&= -c_{\alpha\beta}{}^\gamma m_\gamma(\mathscr{q},\mathscr{p}) 
\,d\mathscr{x}_t^\beta
 + \frac{\partial \ell}{\partial q^{i}}(u,\mathscr{q})
 A^i_\alpha(\mathscr{q}) dt
\nonumber  \\ 
& \!\!\!\stackrel{\eqref{m_alpha_beta}}= 
\{m_\alpha(\mathscr{q},\mathscr{p}), 
m_\beta(\mathscr{q},\mathscr{p})\}\,d\mathscr{x}_t^\beta + 
\frac{\partial \ell}{\partial q^{i}}(u,\mathscr{q})
A^i_\alpha(\mathscr{q}) dt
\label{var-eqns}
\end{align}
which is the first equation in \eqref{SEP-eqns-thm}.

The second equation $d_t\mathscr{q}^i = A^i_\alpha(\mathscr{q}) 
d\mathscr{x}_t^\alpha$ in
\eqref{var-eqns-proof} and the identity $\{\mathscr{q}^i, 
m_\beta(\mathscr{q},\mathscr{p})\} = A_\beta^i(\mathscr{q})$
yield the second equation in \eqref{SEP-eqns-thm}. Finally, the third
equation $ d_t\mathscr{p}_i=-\mathscr{p}_j 
\frac{\partial A^j_\alpha}{\partial q^i}(\mathscr{q})d\mathscr{x}_t^\alpha 
+\frac{\partial \ell}{\partial q^i}(u,\mathscr{q})dt$ in
\eqref{var-eqns-proof} and the identity $\{\mathscr{p}_i, 
m_\beta(\mathscr{q},\mathscr{p})\} = 
-\mathscr{p}_j \frac{\partial A_\beta^j}{\partial q^i}(\mathscr{q})$
yield the third equation in \eqref{SEP-eqns-thm}. 
\end{proof}

The first variational equation in \eqref{var-eqns-proof} captures the momentum map relation \eqref{m_alpha}, and the latter two equations in \eqref{var-eqns-proof} produce the corresponding equations in \eqref{SEP-eqns-thm}, when expressed in terms of the canonical Poisson bracket $\{\,\cdot\,,\,\cdot\,\}$ on $T^*Q$. The second equation in 
\eqref{var-eqns-proof} recovers the velocity map in \eqref{SVP2}, and the third equation determines the evolution of the dual canonical momentum variable, the Lagrange multiplier $\mathscr{p}_i$.

The penultimate equality in \eqref{var-eqns} yields the following
result.

\begin{corollary}
\label{cor_sotchastic_coadjoint_motion}
Hamilton's principle $\delta S = 0$ for the constrained action integral 
in \eqref{SVP2} recovers stochastic coadjoint motion equation in the 
following form,
\begin{align}
d_t \left(\frac{\partial \ell}{\partial u^\alpha}(u(t),
\mathscr{q}(t))\right)
=\left({\rm ad}^*_{d\mathscr{x}_t} \frac{\partial \ell}{\partial u}
(u(t),\mathscr{q}(t))\right)_\alpha
 + \frac{\partial \ell}{\partial q^{i}}(u(t),\mathscr{q}(t))
 A^i_\alpha(\mathscr{q}(t))dt, 
 \label{Hamel-eqn}
\end{align}
where $\left({\rm ad}^*_{d\mathscr{x}_t} 
 \frac{\partial \ell}{\partial u}
 (u(t),\mathscr{q}(t))\right)_\alpha
 :=
 -\,c_{\alpha\beta}{}^\gamma\frac{\partial \ell\ }{\partial u^\gamma}
 (u(t),\mathscr{q}(t))
 d\mathscr{x}_t^\beta$. 
\end{corollary}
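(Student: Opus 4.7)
The plan is to extract the corollary as a direct consequence of the computation already performed inside the proof of Theorem \ref{StratSVP-thm}, since the penultimate equality in the display \eqref{var-eqns} already contains the desired equation up to notational rewriting. I would therefore organize the argument in three short steps: (i) invoke the first variational equation in \eqref{var-eqns-proof}, which identifies $n_\alpha = \partial \ell / \partial u^\alpha = m_\alpha(\mathscr{q},\mathscr{p})$ almost surely; (ii) quote the penultimate line of \eqref{var-eqns}, namely
$$
d_t m_\alpha(\mathscr{q},\mathscr{p}) = -\,c_{\alpha\beta}{}^\gamma m_\gamma(\mathscr{q},\mathscr{p})\, d\mathscr{x}_t^\beta + \frac{\partial \ell}{\partial q^i}(u,\mathscr{q})\,A^i_\alpha(\mathscr{q})\, dt\,;
$$
(iii) recognize the first term on the right-hand side as the $\alpha$-component of $\operatorname{ad}^*_{d\mathscr{x}_t}(\partial \ell/\partial u)$ and substitute.

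The only substantive verification needed is step (iii), which is just a sign check on the coordinate formula for the coadjoint action. Writing $\operatorname{ad}_\xi \eta = [\xi,\eta]$ with $[e_\beta, e_\alpha] = c_{\beta\alpha}{}^\gamma e_\gamma$, the dual map on $\mu = \mu_\gamma e^\gamma \in \mathfrak{g}^*$ pairs as $\langle \operatorname{ad}^*_\xi \mu, e_\alpha\rangle_\mathfrak{g} = \langle \mu, [\xi, e_\alpha]\rangle_\mathfrak{g} = \mu_\gamma \xi^\beta c_{\beta\alpha}{}^\gamma = -\,c_{\alpha\beta}{}^\gamma \xi^\beta \mu_\gamma$; with $\xi = d\mathscr{x}_t$ and $\mu = \partial \ell/\partial u$, this reproduces precisely the expression declared in the statement of the corollary.

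Replacing $m_\alpha$ everywhere by $\partial \ell/\partial u^\alpha$ via step (i), and the bracketed term by $(\operatorname{ad}^*_{d\mathscr{x}_t}\partial \ell/\partial u)_\alpha$ via step (iii), converts the displayed identity above into \eqref{Hamel-eqn}, which is what had to be proved.

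The main obstacle is purely bookkeeping: consistently tracking the sign conventions induced by the right-action choice (footnoted after \eqref{A_alpha_commutator}) through the structure constants, the Lie--Poisson bracket \eqref{LP}, and the infinitesimal-equivariance identity \eqref{m_alpha_beta}. Once these conventions are fixed, no new calculation is required and the corollary is simply a Lie-algebraic restatement of the first equation of Theorem \ref{StratSVP-thm}.
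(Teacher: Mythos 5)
Your proposal is correct and follows essentially the same route as the paper, which derives Corollary~\ref{cor_sotchastic_coadjoint_motion} in one line by reading off the penultimate equality of \eqref{var-eqns} and substituting $m_\alpha(\mathscr{q},\mathscr{p})=\partial\ell/\partial u^\alpha$ from the first stationarity condition in \eqref{var-eqns-proof}. Your additional sign verification of the coordinate formula for $\operatorname{ad}^*$ is consistent with the paper's right-action conventions ($[e_\beta,e_\alpha]=c_{\beta\alpha}{}^{\gamma}e_\gamma$ and antisymmetry of the structure constants), though strictly speaking it is only a consistency check since the corollary statement supplies that expression as a definition.
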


\subsection{The stochastic equations of motion on $\mathfrak{g}^\ast 
\times Q$} 

The presence of the Poisson brackets in \eqref{SEP-eqns-thm} suggests
the existence of a Hamiltonian version of these equations. This will
be explored in detail in Section \ref{Properties-stochcoadmotion-sec}.
Here we just introduce a stochastic version of the Legendre
transform and derive certain equations on $\mathfrak{g}^\ast \times Q$
whose geometric structure will be investigated in Section 
\ref{Properties-stochcoadmotion-sec}.

In the classical deterministic case, recall that the Legendre 
transform of a Lagrangian $\mathfrak{L}:
\mathfrak{g}\rightarrow \mathbb{R}$ to a Hamiltonian $\mathfrak{H}:
\mathfrak{g}^\ast \rightarrow \mathbb{R}$, mapping the Euler-Poincar\'e 
equations $\frac{d}{dt}\frac{\delta \mathfrak{L}}{\delta \xi} = 
-\operatorname{ad}^*_\xi\frac{\delta \mathfrak{L}}{\delta \xi}$ 
to the Lie-Poisson equations $\frac{d}{dt} \mu = 
-\operatorname{ad}^*_{\frac{\delta\mathfrak{H}}{\delta \mu}}\mu$ (and, 
conversely, if the map is 
a diffeomorphism), is given by (see, e.g., 
\cite[\S 13.5, p. 437]{MaRa1999})
\[
\mu: = \frac{\delta\mathfrak{L}}{\delta \xi}, \quad \mathfrak{H}(\mu) : =
\left\langle\mu, \xi \right\rangle_\mathfrak{g} - \mathfrak{L}(\xi), 
\qquad \xi \in \mathfrak{g}, \quad \mu\in \mathfrak{g}^\ast.
\]
If the map $\mathfrak{g} \ni \xi \mapsto \mu = 
\frac{\delta\mathfrak{L}}{\delta \xi} \in \mathfrak{g}^\ast$ is a 
diffeomorphism, the Lagrangian $\mathfrak{L}$ and Hamiltonian 
$\mathfrak{H}$ given above, are called \textit{hyperregular}.
We define below a stochastic version of this Legendre transform, 
depending on a parameter, 
replacing the Lie algebra element by the stochastic
vector field \eqref{dx-def} and the element in the dual of the Lie algebra
by a semimartingale.

We proceed in the following way. We say that the Lagrangian 
$\ell: \mathfrak{g} \times Q \rightarrow \mathbb{R}$ is 
\textit{hyperregular} if the function $\ell(\cdot , q): \mathfrak{g}
\rightarrow \mathbb{R}$ is hyperregular for every $q \in Q$. We work
with hyperregular Lagrangians from now on. Define, as in Theorem 
\ref{StratSVP-thm}, $n:=n(u,q) := \frac{\delta \ell}{\delta u} \in 
\mathfrak{g}^\ast$, invert this relation for every $q \in Q$ to
get $u = u(n, q)$, and introduce the Hamiltonian function $h(n, q) :=
\left\langle n, u(n,q) \right\rangle_ \mathfrak{g} - \ell(u(n, q), q)$. 
Next, recalling that $m = \mathbf{J}_{T^*Q}(p_q) = 
p_i A_\alpha^i(q) e^\alpha\in \mathfrak{g}^\ast$, consider the function 
$h(m,q)$, i.e., we replace the first variable $n$ of $h$ by the 
expression $m$. Now, replace the  variables $(u,q) \in\mathfrak{g}\times Q$ 
by random curves $(u(t), \mathscr{q}(t))$ and form a
semimartingale 
$\mathscr{h}(m(\mathscr{q}(t), \mathscr{p}(t)),\mathscr{q}(t))$ (which 
corresponds to a stochastic Hamiltonian, as explained in subsection 
\ref{sec_stoch_ham_equ}), by imposing,
in analogy with the deterministic case, the stochastic
derivative of the semimartingale 
$\mathscr{h}(m(\mathscr{q}(t), \mathscr{p}(t)),\mathscr{q}(t))$ to
equal 
\begin{equation}
\label{LegXform1}
d_t\mathscr{h} (m(\mathscr{q}(t), \mathscr{p}(t)),\mathscr{q}(t))= 
\mathscr{p}_i(t) A_\alpha^i(\mathscr{q}(t)) d\mathscr{x}_t^\alpha  - \ell(u(t),\mathscr{q}(t))\,dt
\stackrel{\eqref{m_alpha}}=  m_\alpha (\mathscr{q}(t), \mathscr{p}(t))\,  
d\mathscr{x}_t^\alpha - \ell(u(t),\mathscr{q}(t)) dt, 
\end{equation}
where  $d\mathscr{x}_t^\alpha = u^\alpha dt + \xi_k^\alpha\circ dW^k_t$.

This semimartingale is of the form \eqref{semimform}, namely
\[
d\mathscr{h} (m(\mathscr{q}(t), \mathscr{p}(t)),\mathscr{q}(t)) =   
(h^1)_\alpha(m(\mathscr{q}(t), \mathscr{p}(t)),\mathscr{q}(t)) 
\xi_k^\alpha \circ dW_t^k +
h^2(m(\mathscr{q}(t), \mathscr{p}(t)),\mathscr{q}(t))dt.
\]

In agreement with  our previous definitions, we shall use the notation
\begin{equation}
\label{semim_der1}
d_t \left(\frac{\partial \mathscr{h}}{\partial m_\beta}\right)
(m(\mathscr{q}(t), \mathscr{p}(t)),\mathscr{q}(t)) = 
\frac{\partial (h^1)_\alpha}{\partial m_\beta} (m(\mathscr{q}(t), \mathscr{p}(t)),\mathscr{q}(t)) \xi_k^\alpha \circ dW_t^k +
 \frac{\partial h^2}{\partial m_\alpha} (m(\mathscr{q}(t), 
 \mathscr{p}(t)),\mathscr{q}(t))dt,
\end{equation}
\begin{equation}
\label{semim_der2}
d_t \left( \frac{\partial \mathscr{ h}}{\partial q^j}\right)
(m(\mathscr{q}(t), \mathscr{p}(t)),\mathscr{q}(t)) = 
\frac{\partial (h^1)_\alpha}{\partial q^j}(m(\mathscr{q}(t), 
\mathscr{p}(t)),\mathscr{q}(t))\xi_k^\alpha \circ dW_t^k +
 \frac{\partial h^2}{\partial q^j} (m(\mathscr{q}(t), \mathscr{p}(t)),\mathscr{q}(t))dt.
\end{equation}

\begin{theorem}
\label{m_q_thm}
The stochastic variational principle $\delta S=0$, with action
integral defined in \eqref{SVP2} and semimartingale 
$\mathscr{h}(m(\mathscr{q}(t), \mathscr{p}(t)),\mathscr{q}(t))$ introduced above, implies the equations
\begin{equation}
\label{Hamel}
\begin{aligned}
\!\!\!\!
d_tm_\alpha(\mathscr{q}(t), \mathscr{p}(t))  
&=  \{m_\alpha(\mathscr{q}(t), \mathscr{p}(t))\,,\, 
m_\beta(\mathscr{q}(t), \mathscr{p}(t)) \}
d \mathscr{x}_t^\beta
-  A^j_\alpha(\mathscr{q}(t)) \frac{\partial \mathscr{h}}{\partial q^j}
(m(\mathscr{q}(t), \mathscr{p}(t)),\mathscr{q}(t)) \,dt
\\&=  \{m_\alpha(\mathscr{q}(t), \mathscr{p}(t)),\, 
m_\beta(\mathscr{q}(t), \mathscr{p}(t)) \}\circ 
d_t\left( \frac{\partial \mathscr{h}}{\partial m_\beta}\right)
(m(\mathscr{q}(t), \mathscr{p}(t)),\mathscr{q}(t))\\
& \qquad 
+ \{m_\alpha(\mathscr{q}(t), \mathscr{p}(t)),\, \mathscr{q}^j \}
\frac{\partial \mathscr{h}}{\partial q^j}
(m(\mathscr{q}(t), \mathscr{p}(t)),\mathscr{q}(t))dt 
\,,\\ 
d_t\mathscr{q}^i (t) &= 
\{ \mathscr{q}^i ,m_\beta(\mathscr{q}(t), \mathscr{p}(t)) \} d\mathscr{x}_t^\beta = A^i_\beta(\mathscr{q}(t)) d\mathscr{x}_t^\beta=  
A^i_\beta(\mathscr{q}(t)) \circ d_t \left(\frac{\partial \mathscr{h}}{\partial m_\beta}\right)(m(\mathscr{q}(t), 
\mathscr{p}(t)),\mathscr{q}(t)),
\end{aligned}
\end{equation}
with the convention that the Poisson brackets are computed as in Theorem
\ref{StratSVP-thm}.
\end{theorem}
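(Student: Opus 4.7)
The approach is to view Theorem~\ref{m_q_thm} as a Hamiltonian rewriting of Theorem~\ref{StratSVP-thm}. From the previous theorem we already have
$d_tm_\alpha(\mathscr{q},\mathscr{p}) = \{m_\alpha,m_\beta\}d\mathscr{x}_t^\beta + \frac{\partial\ell}{\partial q^i}(u,\mathscr{q})A^i_\alpha(\mathscr{q})\,dt$ and $d_t\mathscr{q}^i = \{\mathscr{q}^i,m_\beta\}d\mathscr{x}_t^\beta$, together with the canonical bracket identity $\{\mathscr{q}^i,m_\beta\}=A^i_\beta(\mathscr{q})$. The plan is therefore: (a) identify the drift and martingale coefficients $h^2$ and $(h^1)_\alpha$ of the semimartingale $\mathscr{h}$ defined by \eqref{LegXform1}; (b) use the classical Legendre calculus to compute the partial derivatives of those coefficients with respect to $m_\beta$ and $q^j$; (c) substitute back into the equations of Theorem~\ref{StratSVP-thm}.

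For step (a), expanding $d\mathscr{x}_t^\alpha = u^\alpha dt + \xi_k^\alpha \circ dW_t^k$ in \eqref{LegXform1} gives
\[
d_t\mathscr{h} = m_\alpha(\mathscr{q},\mathscr{p})\,\xi_k^\alpha \circ dW_t^k + \bigl(m_\alpha(\mathscr{q},\mathscr{p})\,u^\alpha - \ell(u,\mathscr{q})\bigr)\,dt,
\]
so that, comparing with \eqref{semimform}, I read off $(h^1)_\alpha(n,q) = n_\alpha$ and $h^2(n,q) = \langle n,u(n,q)\rangle_{\mathfrak{g}} - \ell(u(n,q),q)$, which is exactly the classical Legendre transform of $\ell(\cdot,q)$. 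For step (b), hyperregularity and the definition $n = \delta\ell/\delta u$ yield the standard identities $\partial h^2/\partial n_\beta = u^\beta$, $\partial h^2/\partial q^j = -\partial\ell/\partial q^j\bigl|_u$, $\partial (h^1)_\alpha/\partial n_\beta = \delta_\alpha^\beta$, and $\partial (h^1)_\alpha/\partial q^j = 0$.

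Inserting these into \eqref{semim_der1} collapses the Stratonovich differential of $\partial\mathscr{h}/\partial m_\beta$ to
\[
\circ d_t\Bigl(\frac{\partial \mathscr{h}}{\partial m_\beta}\Bigr)(m(\mathscr{q},\mathscr{p}),\mathscr{q}) = \xi_k^\beta \circ dW_t^k + u^\beta \,dt = d\mathscr{x}_t^\beta,
\]
while the combination $(\partial\mathscr{h}/\partial q^j)\,dt$ in \eqref{semim_der2} retains only the drift coefficient $\partial h^2/\partial q^j = -\partial\ell/\partial q^j$, since any martingale piece multiplied by $dt$ vanishes. Substituting $-A^j_\alpha(\partial\mathscr{h}/\partial q^j)\,dt = A^j_\alpha(\partial\ell/\partial q^j)\,dt$ into the $d_tm_\alpha$ equation of Theorem~\ref{StratSVP-thm} produces the first line of \eqref{Hamel}; replacing $d\mathscr{x}_t^\beta$ by $\circ d_t(\partial\mathscr{h}/\partial m_\beta)$ and using $\{m_\alpha,\mathscr{q}^j\} = -A^j_\alpha$ produces the second line; and the same replacement in $d_t\mathscr{q}^i = A^i_\beta\,d\mathscr{x}_t^\beta$ yields the last equation.

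The main delicate point, rather than any hard computation, is the notational convention for semimartingale partial derivatives: $\partial \mathscr{h}/\partial q^j$ and $\partial\mathscr{h}/\partial m_\beta$ must be interpreted through the decomposition \eqref{semim_der1}--\eqref{semim_der2} into the coefficients $(h^1)_\alpha$ and $h^2$, with the understanding that multiplication by $dt$ picks out only the drift piece and $\circ d_t$ keeps both. Once this is granted, the theorem is a direct consequence of the Legendre transform identities and the equations of Theorem~\ref{StratSVP-thm}, and no additional stochastic estimates are needed because $\mathscr{h}$ has been constructed precisely so that its Stratonovich differential matches $m_\alpha d\mathscr{x}_t^\alpha - \ell\,dt$.
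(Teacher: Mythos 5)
Your proposal is correct and follows essentially the same route as the paper: both reduce the theorem to the identities $\circ\, d_t\bigl(\partial\mathscr{h}/\partial m_\beta\bigr)=d\mathscr{x}_t^\beta$ and $d_t\bigl(\partial\mathscr{h}/\partial q^j\bigr)=-(\partial\ell/\partial q^j)\,dt$ (the paper's \eqref{partial-derivs2}) and then substitute them into the equations of Theorem \ref{StratSVP-thm}. The only cosmetic difference is that the paper obtains these identities by taking arbitrary variations of the differentiated relation \eqref{LegXform1} and matching coefficients, whereas you read off $(h^1)_\alpha$ and $h^2$ from the decomposition \eqref{semimform} and apply the classical Legendre-transform derivative formulas directly; the content is the same.
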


\begin{proof} In the computations below, we shall drop the notational $t$-dependence of the semimartingales. By Theorem \ref{StratSVP-thm}, we know that $n(\mathscr{q}, \mathscr{p}) = m(\mathscr{q}, \mathscr{p})$ a.s. and that \eqref{SEP-eqns-thm} hold.
Next, we take the differential of condition 
\eqref{LegXform1}. Thus, if $\delta m_\alpha(\mathscr{q}, \mathscr{p})$ 
and $\delta \mathscr{q}^i$ are arbitrary variations (namely random 
curves of bounded variation in $t$)
of the semimartingales $m_\alpha(\mathscr{q}, \mathscr{p})$ and 
$\mathscr{q}^i$, respectively, we get 
\begin{align*}
&d \left(\frac{\partial  \mathscr{h}}{\partial m_\alpha} \right)
(m(\mathscr{q}, \mathscr{p}), \mathscr{q}) \delta m_\alpha
(\mathscr{q}, \mathscr{p})
+ d \left(\frac{\partial  \mathscr{h}}{\partial q^i} \right)
(m(\mathscr{q}, \mathscr{p}), \mathscr{q}) \delta \mathscr{q}^i \\
& \qquad =
\delta m_\alpha(m(\mathscr{q}, \mathscr{p})) d\mathscr{x}_t^\alpha 
+ \delta u^\alpha
\left(m_\alpha(m(\mathscr{q}, \mathscr{p})) - 
\frac{\partial \ell}{\partial u^\alpha}(u, \mathscr{q})\right)dt 
- \delta \mathscr{q}^i  \left(\frac{\partial \ell}{\partial q^i}\right)
(u, \mathscr{q})dt,
\end{align*}
which is equivalent a.s. to
\begin{equation}
\label{partial-derivs2}
\begin{aligned}
&m_\alpha(\mathscr{q}, \mathscr{p}) - 
\frac{\partial \ell}{\partial u^\alpha}(u, \mathscr{q})=0
\,,\qquad
d \left(\frac{\partial  \mathscr{h}}{\partial m_\alpha}\right)
(m(\mathscr{q}, \mathscr{p}), \mathscr{q}) = d\mathscr{x}_t^\alpha\,,
\quad\hbox{and}\\\
&d \left( \frac{\partial  \mathscr{h}}{\partial q^i}\right)
(m(\mathscr{q}, \mathscr{p}), \mathscr{q}) = 
-  \left( \frac{\partial \ell}{\partial q^i}\right)
(u, \mathscr{q}) \,dt\,.
\end{aligned}
\end{equation}
Note that the first equation implies, as expected from Theorem
\ref{StratSVP-thm}, the a.s. equality of the semimartingales 
$n(\mathscr{q}, \mathscr{p})=m(\mathscr{q}, \mathscr{p})$. 

Using the identities \eqref{partial-derivs2}
and the equations \eqref{SEP-eqns-thm}, we compute 
$dm_\alpha(\mathscr{q}, \mathscr{p})$
and $d\mathscr{q}^i$ to find,
\begin{align*}
\begin{split}
dm_\alpha (\mathscr{q}, \mathscr{p}) 
&=  \{m_\alpha (\mathscr{q}, \mathscr{p}),\, 
m_\beta (\mathscr{q}, \mathscr{p})\} d \mathscr{x}_t^\beta
-  A^j_\alpha(\mathscr{q}) 
\frac{\partial  \mathscr{h}}{\partial q^j} 
((\mathscr{q}, \mathscr{p}), \mathscr{q})dt \\
&\!\!\!\stackrel{\eqref{partial-derivs2}}=  
\{m_\alpha (\mathscr{q}, \mathscr{p}),\, 
m_\beta (\mathscr{q}, \mathscr{p}) \}\circ d\left( 
\frac{\partial \mathscr{h}}{\partial m_\beta}
((\mathscr{q}, \mathscr{p}), \mathscr{q})\right)
+ \{m_\alpha (\mathscr{q}, \mathscr{p}),\, \mathscr{q}^j \}
\frac{\partial  \mathscr{h}}{\partial q^j}
((\mathscr{q}, \mathscr{p}), \mathscr{q})dt 
\,,\\ 
d\mathscr{q}^i  &= 
\{ \mathscr{q}^i ,m_\beta (\mathscr{q}, \mathscr{p})\} d\mathscr{x}_t^\beta = A^i_\beta(\mathscr{q}) d\mathscr{x}_t^\beta\stackrel{\eqref{partial-derivs2}}=  
A^i_\beta(\mathscr{q}) \circ d \left( 
\frac{\partial\mathscr{h}}{\partial m_\beta}\right)
((\mathscr{q}, \mathscr{p}), \mathscr{q})
\,,
\end{split}
\end{align*}
which recover equations \eqref{Hamel}.
\end{proof}

\begin{remark} \rm
The defining relation $m_\alpha(\mathscr{q}, \mathscr{p}) =
\mathscr{p}_i A_\alpha^i (\mathscr{q})$ and the second equation in 
\eqref{SEP-eqns-thm} imply
\begin{align*}
dm_\alpha (\mathscr{q}, \mathscr{p})&=
 d \left(\mathscr{p}_i A^i_\alpha(\mathscr{q})\right) = 
(\circ d\mathscr{p}_i) A^i_\alpha(\mathscr{q}) 
+ \mathscr{p}_i \frac{\partial A^i_\alpha}{\partial q^j}
(\mathscr{q}) \circ d\mathscr{q}^j
\nonumber \\
&\!\!\!\stackrel{\eqref{SEP-eqns-thm}}= 
(\circ d\mathscr{p}_i) A^i_\alpha(\mathscr{q}) 
+ \mathscr{p}_i \frac{\partial A^i_\alpha}{\partial q^j}(\mathscr{q}) 
A_\beta^j (\mathscr{q})d\mathscr{x}_t^\beta\,.
\end{align*}
By Theorem \ref{StratSVP-thm}, we know that $n(\mathscr{q},\mathscr{p}) 
= m(\mathscr{q}, \mathscr{p})$
a.s. and hence the first equation in \eqref{SEP-eqns-thm} yields 
\begin{align*}
dm_\alpha (\mathscr{q}, \mathscr{p})&= 
\{m_\alpha (\mathscr{q}, \mathscr{p}), 
m_\beta (\mathscr{q}, \mathscr{p})\}\,d\mathscr{x}_t^\beta + 
\frac{\partial \ell}{\partial q^{i}}(u, \mathscr{q})
A^i_\alpha(\mathscr{q}) dt \\
& \!\!\!\stackrel{\eqref{var-eqns}}=
\mathscr{p}_j \left(A_\beta^k(\mathscr{q})
\frac{\partial A_\alpha^j}{\partial q^k}(\mathscr{q}) -
A_\alpha^k(\mathscr{q})\frac{\partial A_\beta^j}{\partial q^k}
(\mathscr{q})\right) \,
d\mathscr{x}^\beta_t
+ \frac{\partial \ell}{\partial q^i}(u, \mathscr{q})
A_\alpha^i(\mathscr{q}) dt.
\end{align*}
Comparing these two expressions, we conclude the a.s. equality
\begin{equation}
\label{stochastic_strange_equ}
(\circ d\mathscr{p}_i) A^i_\alpha(\mathscr{q}) =
-\mathscr{p}_jA_\alpha^k(\mathscr{q})
\frac{\partial A_\beta^j}{\partial q^k}(\mathscr{q})
d\mathscr{x}^\beta_t
+ \frac{\partial \ell}{\partial q^i}(u, \mathscr{q}) 
A^i_\alpha(\mathscr{q}) dt 
= \left(-\mathscr{p}_j\frac{\partial A_\beta^j}{\partial q^i}(\mathscr{q})
d\mathscr{x}^\beta_t
+ \frac{\partial \ell}{\partial q^i}(u, \mathscr{q})  dt 
\right)A^i_\alpha(\mathscr{q}).
\end{equation}
Note that this identity is clearly implied by the third equation in
\eqref{SEP-eqns-thm}.
\hfill $\lozenge$
\end{remark}

\section{It\^o formulation of stochastic coadjoint motion}
\label{Ito-form-coadmotion-sec}

As before, $t \mapsto W_t^k(\omega)$, $k = 1, \ldots N$, 
$\omega \in\Omega$, are $N$ independent real-valued Brownian motions 
and $\xi_1, \ldots, \xi_N \in \mathfrak{g}$. For each $\xi_k$, $k=1,\ldots, N$, define the Hamiltonian vector field $X_{\xi_k} \in \mathfrak{X}(T^*Q)$ by
\begin{align}
X_{\xi_k} := \{\,\cdot\,,\, m_\alpha(q, p) \xi^\alpha_k \} 
\stackrel{\eqref{var-eqns-proof}}= 
\{\,\cdot\,,\, p_i A^i_\alpha(q) \xi^\alpha_k \}  
\,,\label{Xsubxi-def-thm}
\end{align}
i.e., $X_{\xi_k}$ is the Hamiltonian vector field on $T^*Q$ with Hamiltonian
function $T^*Q\ni p_q \mapsto \left\langle\mathbf{J}_{T^*Q}(p_q), 
\xi_k \right\rangle_\mathfrak{g} \in\mathbb{R}$, $k=1, \ldots, N$. As in the previous
section, we denote interchangeably points in $T^*Q$ by $p_q$ or $(q, p)$.  

Define the operator on semimartingales of the form $f(\mathscr{q}(t), 
\mathscr{p}(t))$, where $f \in C^{\infty}(T^*Q)$ by
\begin{equation}
\label{def_semi_X_xi_k}
(\mathscr{X}_{\xi_k}f)(\mathscr{q}(t), \mathscr{p}(t)): 
= \left\{f(\mathscr{q}(t), \mathscr{p}(t)),\, m_\alpha(\mathscr{q}(t), 
\mathscr{p}(t))\xi_k^\alpha \right\} =  \left\{f(\mathscr{q}(t), 
\mathscr{p}(t)) ,\,\mathscr{p}_i(t)
A_\alpha^i(\mathscr{q}(t))\xi_k^i \right\},
\end{equation}
where the brackets in the right hand side are those of semimartingales, 
as in \eqref{standard_stochastic_PB}.
Note that the result of the operation $(\mathscr{X}_{\xi_k}f)
(\mathscr{q}(t), \mathscr{p}(t))$, defined in \eqref{def_semi_X_xi_k},
is again a semimartingale.

In analogy with \eqref{dx-def}, define the It\^o stochastic element 
$d\widehat{\mathscr{x}}_t^\beta\in \mathfrak{g}$  by
\begin{align}
d\widehat{\mathscr{x}}_t^\alpha := u^\alpha(t) dt + \xi_k^\alpha dW^k_t
\,.\label{dxhat-def-thm}
\end{align}
The It\^o stochastic Hamiltonian vector field 
$\mathscr{X}_{d\widehat{\mathscr{x}}_t}$  
is also defined  by the Poisson bracket operation 
\begin{align}
(\mathscr{X}_{d\widehat{\mathscr{x}}_t} f) 
(\mathscr{q}(t), \mathscr{p}(t)):= 
\{f(\mathscr{q}(t), \mathscr{p}(t)), \,
m_\beta(\mathscr{q}(t), \mathscr{p}(t)) \} 
d\widehat{\mathscr{x}}_t^\beta
:= \{f(\mathscr{q}(t), \mathscr{p}(t)),\,\mathscr{p}_i(t)
A^i_\beta(\mathscr{q}(t))\} 
d\widehat{\mathscr{x}}_t^\beta
\,,\label{X-PBop-thm}
\end{align}
for any $f \in C^{\infty}(T^*Q)$,
where, again, the brackets in the right hand side are those of 
semimartingales, \eqref{standard_stochastic_PB}. The result of the operation 
$(\mathscr{X}_{d\widehat{\mathscr{x}}_t}f)(\mathscr{q}(t),\mathscr{p}(t))$, 
defined in \eqref{X-PBop-thm}, is again a semimartingale. In both
\eqref{def_semi_X_xi_k} and \eqref{X-PBop-thm}, in agreement with
the conventions in Section \ref{VP-coadmotion-sec}, we define the
semimartingales $\mathscr{q}^i(t):= q^i(\mathscr{q}(t), \mathscr{p}(t))$ and 
$\mathscr{p}_i(t):=p_i(\mathscr{q}(t), \mathscr{p}(t))$.

With these notations, we have the following result.

\begin{corollary}{\rm[It\^o stochastic variational conditions]}
\label{Ito-SVP-corollary}
\label{Lemma-m-eqn-peakon}
The corresponding It\^o forms of the Stratonovich stochastic variational equations \eqref{SEP-eqns-thm} are given by 
\begin{align}
\!\!
d_t m_\alpha(\mathscr{q}(t), \mathscr{p}(t)) &= 
\left(\mathscr{X}_{d\widehat{\mathscr{x}}_t} m_\alpha\right)
(\mathscr{q}(t), \mathscr{p}(t))
 + \frac{1}{2}{\sum_{k=1}^N}
 \left(\mathscr{X}_{\xi_k}(\mathscr{X}_{\xi_k} m_\alpha)\right)
 (\mathscr{q}(t), \mathscr{p}(t))dt + 
 \frac{\partial \ell}{\partial q^i}(u(t), \mathscr{q}(t))
 A_\alpha^i(\mathscr{q}(t))dt
\,,\\
d_t\mathscr{q}^i(t) &= \mathscr{X}_{d\widehat{\mathscr{x}}_t} 
\mathscr{q}^i(t) + \frac{1}{2}{\sum_{k=1}^N}
  \mathscr{X}_{\xi_k}(\mathscr{X}_{\xi_k} \mathscr{q}^i )(t)\,dt
\,,\\
d_t\mathscr{p}_i(t)  &= \mathscr{X}_{d\widehat{\mathscr{x}}_t} 
\mathscr{p}_i (t) + \frac{1}{2} {\sum_{k=1}^N}
\mathscr{X}_{\xi_k}(\mathscr{X}_{\xi_k} \mathscr{p}_i )(t)\,dt + \frac{\partial \ell}{\partial q^i}(u(t), \mathscr{q}(t))dt
\,.
\label{Ito-EP-eqns-X-thm}
\end{align}
\end{corollary}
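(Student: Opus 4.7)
The plan is to obtain the It\^o formulation by converting each of the three Stratonovich equations in Theorem \ref{StratSVP-thm} directly, using the standard relation
\[
\int_0^t Z_s \circ dY_s = \int_0^t Z_s\, dY_s + \tfrac{1}{2}[Z,Y]_t
\]
between Stratonovich and It\^o integrals. The first step is to split the stochastic Lie algebra element \eqref{dx-def} into its drift and martingale parts, writing $d\mathscr{x}_t^\beta = u^\beta(t)\,dt + \xi_k^\beta\circ dW_t^k$, and similarly $d\widehat{\mathscr{x}}_t^\beta$ in \eqref{dxhat-def-thm} with ordinary $dW_t^k$. Since the conversion affects only the martingale part, the drift $\{f,m_\beta\}\,u^\beta\,dt$ in each equation passes through unchanged.

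The second step is to identify the noise terms appearing on the right-hand sides of \eqref{SEP-eqns-thm} as Stratonovich Poisson brackets with the Hamiltonian $h_k:=m_\alpha\xi_k^\alpha = p_iA_\alpha^i(q)\xi_k^\alpha$. Concretely, for $f\in\{m_\alpha,\mathscr{q}^i,\mathscr{p}_i\}$, the noise piece takes the form
\[
\{f(\mathscr{q},\mathscr{p}),\,m_\beta(\mathscr{q},\mathscr{p})\}\,\xi_k^\beta \circ dW^k_t
= (\mathscr{X}_{\xi_k}f)(\mathscr{q},\mathscr{p})\circ dW^k_t,
\]
using definition \eqref{def_semi_X_xi_k}. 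Therefore the key computation is the Stratonovich–It\^o correction for a term $(\mathscr{X}_{\xi_k}f)(\mathscr{q}(t),\mathscr{p}(t))\circ dW^k_t$, which equals $(\mathscr{X}_{\xi_k}f)\,dW^k_t + \tfrac12 d[\mathscr{X}_{\xi_k}f,\,W^k]_t$.

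The third step is to evaluate the quadratic covariation. Since the diffusion part of the semimartingale $(\mathscr{X}_{\xi_k}f)(\mathscr{q}(t),\mathscr{p}(t))$ is driven (via the chain rule applied to the Stratonovich evolution of $\mathscr{q}$ and $\mathscr{p}$ from \eqref{SEP-eqns-thm}) by $\{\mathscr{X}_{\xi_k}f,\,m_\gamma\}\xi_j^\gamma\,dW^j_t$, orthogonality of independent Brownian motions and the selection $j=k$ give
\[
d[\mathscr{X}_{\xi_k}f,\,W^k]_t = \{\mathscr{X}_{\xi_k}f,\,m_\gamma\xi_k^\gamma\}\,dt = \{\mathscr{X}_{\xi_k}f,\,h_k\}\,dt = \mathscr{X}_{\xi_k}(\mathscr{X}_{\xi_k}f)\,dt,
\]
summed over $k=1,\ldots,N$. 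Combining the converted noise term with the untouched drift $\{f,m_\beta\}u^\beta dt$ then reassembles $\mathscr{X}_{d\widehat{\mathscr{x}}_t}f$ via \eqref{X-PBop-thm}, and the additional drifts $\frac{\partial\ell}{\partial q^i}A_\alpha^i\,dt$ and $\frac{\partial\ell}{\partial q^i}\,dt$ present in the first and third Stratonovich equations carry over unchanged, producing the three stated It\^o equations.

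The main delicate point is the justification of the covariation computation in the third step: it requires a clean argument that in the Stratonovich–It\^o conversion the diffusion coefficient of the semimartingale $\mathscr{X}_{\xi_k}f$ is exactly $\mathscr{X}_{\xi_j}(\mathscr{X}_{\xi_k}f)$, which relies on the fact that, to leading (martingale) order, $\mathscr{q}$ and $\mathscr{p}$ evolve along the Hamiltonian flows of $h_j$. This follows from the second and third equations of \eqref{SEP-eqns-thm} by observing that the Stratonovich and It\^o diffusion coefficients coincide, so the correction term is unambiguously the iterated Poisson bracket $\tfrac12\{\{f,h_k\},h_k\}$, as claimed.
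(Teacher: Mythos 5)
Your proof is correct and follows essentially the same route as the paper's own argument (in particular its second proof): a direct Stratonovich-to-It\^o conversion in which the correction term is identified as the iterated Poisson bracket $\tfrac12\sum_k\{\{f,h_k\},h_k\}$, using that the martingale part of any semimartingale $g(\mathscr{q}(t),\mathscr{p}(t))$ is $\{g,h_k\}\,dW^k_t$ by the chain rule applied to \eqref{SEP-eqns-thm}. The only difference is organizational: you apply this single general covariation computation uniformly to $f=m_\alpha,\ \mathscr{q}^i,\ \mathscr{p}_i$, whereas the paper evaluates the covariation in coordinates and, in its first proof, obtains the $m_\alpha$ equation instead from the It\^o product rule applied to $m_\alpha=\mathscr{p}_iA^i_\alpha(\mathscr{q})$.
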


\begin{remark}\rm
Remarkably, the It\^o interpretation for the coadjoint dynamics of the 
momentum map defined by $m_\alpha(\mathscr{q}(t), \mathscr{p}(t)):=
\frac{\partial \ell}{\partial u^\alpha}(u(t), \mathscr{q}(t)) = 
\mathscr{p}_i(t) A^i_\alpha(\mathscr{q}(t))$ has the 
same double bracket structure as the individual equations for the phase 
space variable $(q,p)$. Several perspectives of how this preservation 
of structure in Corollary \ref{Lemma-m-eqn-peakon} occurs, can be seen by considering three different direct proofs of it. \hfill $\lozenge$
\end{remark}

\noindent \textit{First proof.} In all the proofs below, we ignore
the $t$-dependence notation on the semimartingales.
The first proof of Corollary \ref{Lemma-m-eqn-peakon} begins by 
streamlining the notation in the Stratonovich stochastic equations  
\eqref{SEP-eqns-thm} of Theorem \ref{StratSVP-thm}, to write them simply as 
\begin{align}
dm_\alpha(\mathscr{q}, \mathscr{p}) = (\mathscr{X}_{d\mathscr{x}_t}m_\alpha)
 (\mathscr{q}, \mathscr{p}) + 
\frac{\partial \ell}{\partial q^i}(u, \mathscr{q})
A_\alpha^i(\mathscr{q})\,dt
\,,\quad
d\mathscr{q}^i = \mathscr{X}_{d\mathscr{x}_t} \mathscr{q}^i
\,,\quad
d\mathscr{p}_i  = \mathscr{X}_{d\mathscr{x}_t} \mathscr{p}_i + \frac{\partial \ell}{\partial q^i}(u, \mathscr{q})dt
\,,
\label{Strat-EP-eqns-X-proof}
\end{align}
in terms of the following Poisson bracket operator (analogous to \eqref{X-PBop-thm})
\begin{align}
\mathscr{X}_{d\mathscr{x}_t} := \{\,\cdot\,,\,
m_\beta(\mathscr{q}, \mathscr{p}) \}  d\mathscr{x}_t^\beta
:= \{\,\cdot\,,\, \mathscr{p}_iA^i_\beta(\mathscr{q})\} d\mathscr{x}_t^\beta
\,.\label{X-PBop}
\end{align}

We want to write these expressions in It\^o form. 
For this, we recall  It\^o's formula: if $\mathscr{X}_t$ is a 
semimartingale with regular coefficients and $f$ a smooth function 
(c.f., for example, \cite{IkWat1989}), then
\begin{equation}
\label{ito_formula}
d_tf(\mathscr{X}_t)=\partial_i f (\mathscr{X}_t )\circ d_t\mathscr{X}_t^i =
\partial_i f (\mathscr{X}_t ) d_t\mathscr{X}_t^i +\frac{1}{2} \partial^2_{i,j} f(\mathscr{X}_t )d_t\mathscr{X}^i_t . d_t\mathscr{X}^j_t\,.
\end{equation} 

The corresponding It\^o forms of the latter Stratonovich expressions in 
\eqref{SEP-eqns-thm} are then written equivalently as
\begin{align}
\begin{split}
&d\mathscr{q}^i = \mathscr{X}_{d\widehat{\mathscr{x}}_t} \mathscr{q}^i +  
\frac{1}{2}\sum_{k=1}^N \mathscr{X}_{\xi_k}(\mathscr{X}_{\xi_k} \mathscr{q}^i )\,dt
\,,\\
&d\mathscr{p}_i  = \mathscr{X}_{d\widehat{\mathscr{x}}_t} \mathscr{p}_i + \frac{1}{2}
\sum_{k=1}^N \mathscr{X}_{\xi_k}(\mathscr{X}_{\xi_k} \mathscr{p}_i )\,dt + 
\frac{\partial \ell}{\partial q^i}(u, \mathscr{q})dt
\,.
\end{split}
\label{Ito-EP-eqns-X}
\end{align}

We prove the first relation in \eqref{Ito-EP-eqns-X}, as the other one is similarly derived. To simplify notation, we write simply $A_\alpha^i$ instead
of $A_\alpha^i(\mathscr{q})$. Recall that
$$
d\mathscr{q}^i =A_{\alpha}^i u^{\alpha} dt + 
A_{\alpha}^i  \xi_k^{\alpha} \circ dW_t^k
$$
By It\^o's formula \eqref{ito_formula}, the only term which is 
not of bounded variation in the expression for $dA_{\alpha}^i$
is equal to
$ \frac{ \partial }{\partial q^j} (A_{\alpha}^i ) A_{\beta}^j
\xi_k^{\beta} dW_t^k$
and we conclude that 
$$
d(A_{\alpha}^i \xi_k^{\alpha} \xi_k^{\alpha} ). dW_t^k =
\sum_{k=1}^N \frac{ \partial A_\alpha^i}{\partial q^j} 
A_\beta^j\xi_k^\beta \xi_k^\alpha dt.
$$
Thus, \eqref{ito_formula} yields
$$
d\mathscr{q}^i =A_\alpha^i u^\alpha dt + (A_\alpha^i \xi_k^\alpha ) dW_t^k +
\frac{1}{2}\sum_{k=1}^N \frac{ \partial A_\alpha^i}{\partial q^j} 
A_\beta^j \xi_k^\beta \xi_k^\alpha dt,
$$
which is the expanded version of the first equation in 
\eqref{Ito-EP-eqns-X}.

Having introduced this streamlined notation for $d\mathscr{q}^i$ and $\mathscr{dp}_i$ in 
the It\^o equations \eqref{Ito-EP-eqns-X}, we calculate the It\^o 
equation for the components of the momentum map 
$m_\alpha(\mathscr{q}, \mathscr{p}):=\mathscr{p}_iA^i_\alpha( \mathscr{q})$, 
by using the It\^o rule for the 
derivative of a product of a pair of It\^o semimartingales, 
$\mathscr{X}$ and $\mathscr{Y}$, given by 
\begin{align}
d(\mathscr{X}\mathscr{Y}) = \mathscr{X}d\mathscr{Y} + \mathscr{Y}d\mathscr{X} + d\mathscr{X}.d\mathscr{Y}
\,,\quad\hbox{and for}\quad
d\mathscr{X}=\mathscr{\sigma} dW \,,\quad d\mathscr{Y}=\tilde{\mathscr{\sigma}} dW
\quad\hbox{we have}\quad
d\mathscr{X}.d\mathscr{Y} = \mathscr{\sigma}\tilde{\mathscr{\sigma}}dt
\,,\label{Ito-product-rule}
\end{align}
where $d\mathscr{X}.d\mathscr{Y}$ is the co-variation, or It\^o contraction.
According to the It\^o product rule, the It\^o contraction in computing 
$dm_\alpha(\mathscr{q}, \mathscr{p}) = d(\mathscr{p}_iA^i_\alpha (\mathscr{q}))$ from equation \eqref{Ito-EP-eqns-X} is 
\begin{align}
d\mathscr{p}_i \,.\, dA^i_\alpha 
= {\sum_{k=1}^N}
(\mathscr{X}_{\xi_k}\mathscr{p}_i)\,.\,(\mathscr{X}_{\xi_k}
A^i_\alpha(\mathscr{q}))
\,.\label{Ito-deriv-rule}
\end{align}
Indeed, this It\^o contraction  expression comes from the fact that the martingale parts of the processes
$\mathscr{p}_i$  and $A_\alpha^i (\mathscr{q})$ are given, respectively, by
$$
d\mathscr{p}_i \simeq-\mathscr{p}_j 
\frac{\partial A_{\alpha}^j }{\partial q^i}(\mathscr{q})
\xi_k^{\alpha} dW_t^k
\qquad \text{and} \qquad dA_\alpha^i (\mathscr{q})\simeq\frac{\partial A_{\alpha}^i }{\partial q^j} (\mathscr{q}) 
A_{\beta}^j (\mathscr{q}) \xi_k^{\beta} dW_t^k \,
$$ 
where $\Phi \simeq \Psi $ means that $\Phi - \Psi$ is a process of 
bounded variation.

Remarkably, this It\^o contraction 
\eqref{Ito-deriv-rule} turns out to be exactly 
what we need to show by direct calculation from \eqref{Ito-EP-eqns-X} that 
\begin{align}
\begin{split}
dm_\alpha(\mathscr{q}, \mathscr{p}) &= 
\left(\mathscr{X}_{d\widehat{\mathscr{x}}_t} m_\alpha\right)
(\mathscr{q}, \mathscr{p}) + 
\frac{1}{2} {\sum_{k=1}^N}
\left(\mathscr{X}_{\xi_k}(\mathscr{X}_{\xi_k} m_\alpha) \right)
(\mathscr{q}, \mathscr{p})dt + 
 \frac{\partial \ell}{\partial q^i}(u, \mathscr{q})
 A_\alpha^i(\mathscr{q})dt
\\&=
\{m_\alpha(\mathscr{q}, \mathscr{p}),\,
\mathscr{m}_\beta(\mathscr{q}, \mathscr{p})\}
d\widehat{\mathscr{x}}_t^\beta 
+ \frac{1}{2}\{ \{m_\alpha(\mathscr{q}, \mathscr{p}),\,
m_\beta(\mathscr{q}, \mathscr{p})\}\,,\, m_\gamma(\mathscr{q}, 
\mathscr{p})\}
{\sum_{k=1}^N}
\xi_k^\beta\xi_k^\gamma\,dt \\
& \qquad + 
\frac{\partial \ell}{\partial q^i}(u, \mathscr{q})
A_\alpha^i(\mathscr{q}) \,dt
\,.
\end{split}
\label{Ito-EP-momap-eqn-X}
\end{align}
In the direct calculation, the It\^o contraction is  cancelled by a cross term arising from applying the second-order derivative operator 
$\frac12 \mathscr{X}_{\xi_k}(\mathscr{X}_{\xi_k} \cdot\,)$ from equation \eqref{Ito-EP-eqns-X} to the quadratic product 
$m_\alpha(\mathscr{q}, \mathscr{p})=\mathscr{p}_iA^i_\alpha(\mathscr{q})$. This completes the first proof of Corollary \ref{Ito-SVP-corollary}. 
\hfill $\square$
\medskip

\noindent \textit{Second proof of the first equation in 
\eqref{Ito-EP-eqns-X-thm}.} In the statement of the Corollary 
\ref{Ito-SVP-corollary}, the first It\^o equation in 
\eqref{Ito-EP-eqns-X-thm} may also be verified by an even more direct 
calculation than in \eqref{Ito-EP-momap-eqn-X}, as follows. To
simplify notations in the computations below, we again temporarily
suppress the dependence of $A_\alpha^i$ on the semimartingale $\mathscr{q}$, 
of $\frac{\partial \ell}{\partial q^i}$ on the semimartingales 
$(u, \mathscr{q})$, and of $m_\alpha(\mathscr{q}, \mathscr{p})$
on the semimartingales $(\mathscr{q}, \mathscr{p})$. We also suppress
the $k$-index.

By equations 
\eqref{m_alpha_beta}, \eqref{SEP-eqns-thm}, the definition 
\eqref{dxhat-def-thm} of $d\widehat{\mathscr{x}}_t^\beta$, the It\^o 
product rule in \eqref{Ito-product-rule}, and Theorem 1, we have, 
\begin{align}
\begin{split}
dm_\alpha &=  \{m_\alpha\,,\,m_\beta\} d\mathscr{x}_t^\beta + 
\frac{\partial \ell}{\partial q^i}A_\alpha^i\,dt
\,,\\
d(\mathscr{p}_i A^i_\alpha)  & = -\, \mathscr{p}_j [\,A_\alpha\,,\, A_\beta\,]^j 
d\widehat{\mathscr{x}}_t^\beta 
- \frac12\, d\Big( \mathscr{p}_j [\,A_\alpha\,,\, A_\beta\,]^j \,\xi^\beta \Big).dW_t 
+ \frac{\partial \ell}{\partial q^i}A_\alpha^i\,dt
\,,\end{split}
\label{Ito-dm-calc1}
\end{align}
where we have substituted the momentum map definition $m_\alpha=
\mathscr{p}_i A^i_\alpha$ from equation \eqref{m_alpha}.
We have,
\begin{align}
\begin{split}
d\Big( \mathscr{p}_j [\,A_\alpha\,,\, A_\beta\,]^j \,\xi^\beta \Big).\,dW_t  
& = \Big((d \mathscr{p}_j) [\,A_\alpha\,,\, A_\beta\,]^j + \mathscr{p}_j d[\,A_\alpha\,,\, 
A_\beta\,]^j \Big)\xi^\beta . dW_t \\
&= \Big((d \mathscr{p}_j) [\,A_\alpha\,,\, A_\beta\,]^j 
+ \mathscr{p}_j \frac{\partial}{\partial q^l}[\,A_\alpha\,,\, A_\beta\,]^j\,
(d\mathscr{q}^l) \Big) \xi^\beta . dW_t  \\ 
&\!\!\!\!\stackrel{\eqref{var-eqns-proof}}=  \Big(
-\,\mathscr{p}_l \frac{\partial A^l_\gamma}{\partial q^j} [\,A_\alpha\,,\, 
A_\beta\,]^j \xi^\gamma
+ \mathscr{p}_j \frac{\partial}{\partial q^l}[\,A_\alpha\,,\, A_\beta\,]^j\, 
A^l_\gamma\, \xi^\gamma  \Big)\xi^\beta dt \\ 
&=  \Big(-\,\mathscr{p}_j  [\,A_\alpha\,,\, A_\beta\,]^l   
\frac{\partial A^j_\gamma}{\partial q^l}
+\mathscr{p}_j A^l_\gamma \frac{\partial}{\partial q^l}[\,A_\alpha\,,\, 
A_\beta\,]^j  \Big)\xi^\beta\xi^\gamma dt \\ 
&=  \mathscr{p}_j   \big[\,A_\gamma \,,\, [ A_\alpha\,,\, A_\beta\,]\,\big]^j\,
\xi^\beta\xi^\gamma dt \\ 
&\!\! \stackrel{\eqref{m_alpha_beta}}=  
\big\{ \mathscr{m}_\gamma\,,\,\{ \mathscr{m}_\alpha\,,\, \mathscr{m}_\beta \}\big\} \,
\xi^\beta\xi^\gamma dt
\,.\end{split}
\label{Ito-dm-calc3}
\end{align}
Consequently, we may write the entire equation \eqref{Ito-dm-calc1} as 
(reinstating the $k$-indices)
\begin{align}
\begin{split}
dm_\alpha &=  \{m_\alpha\,,\, m_\beta\} d\widehat{\mathscr{x}}_t^\beta
- \frac{1}{2}\big\{m_\gamma\,,\,\{m_\alpha\,,\,m_\beta \}\big\} \,
\left(\sum_{k=1}^N \xi_k^\beta\xi_k^\gamma\right) dt + 
\frac{\partial \ell}{\partial q^i}A_\alpha^i\,dt   \\ 
&= \{m_\alpha\,,\, m_\beta\} (u^\beta dt + \xi_k^\beta dW^k_t)
+ \frac{1}{2}\big\{ \{m_\alpha\,,\,m_\beta \} \,,\,m_\gamma\big\} \,
\left(\sum_{k=1}^N \xi_k^\beta\xi_k^\gamma \right)dt + 
\frac{\partial \ell}{\partial q^i}A_\alpha^i\,dt   \\ 
&= \mathscr{X}_{d\widehat{\mathscr{x}}_t} m_\alpha + \frac{1}{2} 
\left(\sum_{k=1}^N \mathscr{X}_{\xi_k}(\mathscr{X}_{\xi_k}m_\alpha)\right)\,dt + 
\frac{\partial \ell}{\partial q^i}A_\alpha^i\,dt
\quad \text{using the notations}\quad \eqref{Xsubxi-def-thm}\quad 
\text{and}\quad   \eqref{X-PBop} 
\,,\end{split} 
\label{Ito-dm-calc4}
\end{align}
in agreement with the first equation in \eqref{Ito-EP-eqns-X-thm}.
\hfill $\square$
\medskip

\noindent \textit{Third proof of the first equation in 
\eqref{Ito-EP-eqns-X-thm}.} 
Let $\mathscr{\eta} : [0,T]\rightarrow \mathfrak{g}$ be an arbitrary random curve of bounded variation. We begin the third proof by computing
\begin{align*}
\begin{split}
\left< d\left( \frac{\delta \ell }{\delta u}
(u, \mathscr{q})\right),\eta \right> 
& = d\Big( \mathscr{p}_i \,A_\alpha^i (\mathscr{q})\Big) \eta^\alpha 
\\
&=\left(A_\alpha^i(\mathscr{q}) \eta^\alpha \right)\circ d\mathscr{p}_i +
\left(\mathscr{p}_i \frac{\partial A_\alpha^i}{\partial q^j}(\mathscr{q})
 \eta^\alpha \right) \circ d\mathscr{q}^j
\\
& \!\!\!\!
\mathrel{\operatorname*{=}_{\eqref{var-eqns-proof}}^{\eqref{dx-def}}}
-\,\mathscr{p}_j \frac{\partial A_\alpha^j }{\partial q^i}
(\mathscr{q}) u^\alpha A_\beta^i(\mathscr{q})
\eta^\beta dt
- \left(\mathscr{p}_j \frac{\partial A_\alpha^j }{\partial q^i}
(\mathscr{q}) \xi_k^\alpha  
A_\beta^i(\mathscr{q})  \eta^\beta \right)\circ dW_t^k 
\\
& \qquad +A_\beta^i(\mathscr{q}) \eta^\beta 
\frac{\partial \ell}{\partial q^i}(u,\mathscr{q}) dt
\\
&\qquad + \mathscr{p}_i \frac{\partial A_\alpha^i }{\partial q^j}
(\mathscr{q}) \eta^\alpha 
A_\beta^j (\mathscr{q}) u^\beta dt
 + \left(\mathscr{p}_i \frac{\partial A_\alpha^i}{\partial q^j}
 (\mathscr{q}) \eta^\alpha  
 A_\beta^j (\mathscr{q})\xi_k^\beta \right)\circ dW_t^k
\\
&\!\!\!\stackrel{\eqref{inf_gen}}=
\mathscr{p}_j \left( u_Q^i \frac{\partial \eta_Q^j}{\partial q^i} -
\eta_Q^i \frac{\partial u_Q^j}{\partial q^i}\right)(\mathscr{q})\,dt +
\mathscr{p}_j \left((\xi_k)_Q^i \frac{\partial \eta_Q^j}{\partial q^i} -
\eta_Q^i \frac{\partial (\xi_k)_Q^j}{\partial q^i}\right)(\mathscr{q})
\circ dW_t^k \\
& \qquad 
+ A_\beta^i(\mathscr{q}) \eta^\beta \frac{\partial \ell}{\partial q^i}
(u,\mathscr{q}) dt
\\
& = \mathscr{p}_j \left[u_Q, \eta_Q \right]^j(\mathscr{q})dt + 
\mathscr{p}_j \left[(\xi_k)_Q, \eta_Q \right]^j(\mathscr{q}) \circ dW_t^k
+ A_\beta^i(\mathscr{q}) \eta^\beta \frac{\partial \ell}{\partial q^i}
(u,\mathscr{q}) dt
\\
& =\mathscr{p}_j [u,\eta]_Q^j(\mathscr{q}) dt +
\mathscr{p}_j [\xi_k ,\eta ]_Q^j (\mathscr{q})\circ dW_t^k +
A_\beta^i(\mathscr{q})\eta^\beta \frac{\partial \ell}{\partial q^i}
(u, \mathscr{q}) dt 
 \,.
\end{split}
\end{align*}

Next, we compute the It\^o contraction term, namely the difference between the Stratonovich integral above and the corresponding It\^o one. Consequently, we find
\[
d\mathscr{p}_j \simeq -\left(\mathscr{p}_i  
\frac{\partial A_\alpha^i}{\partial q^j}(\mathscr{q}) 
\xi_k^\alpha \right)dW_t^k 
\]
and
\[
d [\xi_k ,\eta ]_Q^j(\mathscr{q})= 
d( A_\alpha^j(\mathscr{q})  [\xi_k ,\eta ]^\alpha(\mathscr{q}) )
\simeq 
\left( \frac{\partial A_\alpha^j}{\partial q^i}(\mathscr{q})  
[\xi_k ,\eta ]^\alpha(\mathscr{q}) A_\beta^i(\mathscr{q}) \xi_k^\beta \right) dW_t^k 
\,. 
\]
 Therefore,
\begin{align*}
d(\mathscr{p}_j [\xi_k ,\eta]_Q^j(\mathscr{q}) )&\simeq 
-\left(\mathscr{p}_i  A_\beta^j(\mathscr{q}) [\xi_k ,\eta]^\beta (\mathscr{q}) 
\frac{\partial A_\alpha^i}{\partial q^j}(\mathscr{q}) 
\xi_k^\alpha \right)dW_t^k
+ \left( \mathscr{p}_j  \frac{\partial A_\alpha^j }{\partial q^i}(\mathscr{q}) 
[\xi_k ,\eta ]^\alpha(\mathscr{q}) A_\beta^i(\mathscr{q}) 
\xi_k^\beta \right) dW_t^k  \\
& = \mathscr{p}_i \left( (\xi_k)_Q^j(\mathscr{q})
\frac{\partial [\xi_k, \eta]_Q^i}{\partial q^j}(\mathscr{q})
- \left[\xi_k, \eta\right]_Q^j(\mathscr{q}) 
\frac{\partial (\xi_k)_Q^i}{\partial q^j} (\mathscr{q})
\right)dW_t^k\\
& = \mathscr{p}_i\left[(\xi_k)_Q, \left[\xi_k, \eta\right]_Q \right]^i
(\mathscr{q}) dW_t^k\\
& = \mathscr{p}_i\left[\xi_k,\left[\xi_k, \eta\right] \right]_Q^i
(\mathscr{q}) dW_t^k
\end{align*}
and we obtain the It\^o contraction term
\[
d\big(\mathscr{p}_j [\xi_k ,\eta ]_Q^j(\mathscr{q}) \big)\,.\,dW_t^k =
\sum_{k=1}^N \mathscr{p}_j \big[\xi_k ,\,[\xi_k ,\eta ]  \big]_Q^j
(\mathscr{q})dt
= \sum_{k=1}^N \mathscr{p}_j \left(\operatorname{ad}_{\xi_k} 
\operatorname{ad}_{\xi_k}  \eta \right)_Q^j (\mathscr{q}) dt
\,.\] 
Thus, the first Stratonovich equation in 
\eqref{SEP-eqns-thm} reads, in the It\^o version,
\begin{align*}
d\left(\frac{\delta l}{\delta u}(u,\mathscr{q})\right) &= 
-\operatorname{ad}^*_{u} \left(\frac{\delta l}{\delta u}
(u,\mathscr{q}) \right)dt  -
\operatorname{ad}^*_{\xi_k} \left(\frac{\delta l}{\delta u}
(u,\mathscr{q}) \right) dW_t^k \\
& \qquad +\frac{1}{2}\sum_{k=1}^N\operatorname{ad}^*_{\xi_k}
\operatorname{ad}^*_{\xi_k}  \left(
\frac{\delta l}{\delta u}(u,\mathscr{q}) \right) dt + 
\frac{\partial \ell}{\partial q^i}(u, \mathscr{q})
A_\alpha^i(\mathscr{q})e^\alpha \in \mathfrak{g}^\ast,
\end{align*}
which is an explicit version of the first equation in 
\eqref{Ito-EP-eqns-X-thm}.
This finishes the third proof of Corollary \ref{Ito-SVP-corollary}.
\hfill $\square$

\section{Stochastic Hamiltonian formulation}
\label{Properties-stochcoadmotion-sec}

The goal of this section is to present the Hamiltonian version
of Theorem \ref{StratSVP-thm} and analyze its consequences.

In Section \ref{VP-coadmotion-sec}, we found the stochastic equations
of motion \eqref{SEP-eqns-thm} on $\mathfrak{g}^\ast\times T^*Q$ and 
\eqref{Hamel} on $\mathfrak{g}^\ast \times Q$.   
We want to deduce these equations 
in a purely Hamiltonian manner, without any reference to variational
principles or the Lagrangian formulation of Sections 
\ref{VP-coadmotion-sec} and \ref{Ito-form-coadmotion-sec}. Thus,
we need stochastic Hamiltonians 
$\widetilde{h}(\mathscr{m}, \mathscr{p}_\mathscr{q})$ and
$h(\mathscr{m},\mathscr{q})$. The latter Hamiltonian was already defined in 
the last paragraph of Section \ref{VP-coadmotion-sec}. We also need the
Poisson brackets on $\mathfrak{g}^\ast
\times T^*Q$ and $\mathfrak{g}^\ast\times Q$.

\subsection{The deterministic Hamilton equations}
\label{sec_det_ham_equ}

We first recall the Poisson structure on 
$\mathfrak{g}^\ast \times P$ introduced in \cite{KrMa1987}, where 
the Lie group $G$, whose Lie algebra is $\mathfrak{g}$, acts on
the right on the Poisson manifold $P$ by Poisson diffeomorphisms.

\paragraph{The Poisson manifold $\mathfrak{g}^\ast \times P$.}
In this paragraph, the entire discussion is non-stochastic. We recall
below the results in \cite{KrMa1987} relevant to our development and 
expand on it in certain directions we will need later.
The framework studied in \cite{KrMa1987}, when adapted to our situation, 
is the following. Let a Lie group $G$ act on the right
by Poisson diffeomorphisms on the Poisson manifold $P$. Endow 
$T^*G \times P$ with the Poisson bracket equal to the sum of the 
canonical bracket $\{\cdot , \cdot \}$ on $T^*G$ and the given Poisson bracket $\{\cdot , \cdot \}_P$ on $P$.
Define the free proper left $G$-action by Poisson diffeomorphisms on 
$(T^*G \times P, \{\cdot , \cdot \}$+ $\{\cdot , \cdot \}_P)$  by 
$h \cdot (\alpha_g, p) : = 
\left(T_{hg}^* L_{h^{-1}}(\alpha_g), p \cdot h^{-1} \right)$, 
where $g,h \in G$, 
$\alpha_g\in T_g^*G$, $p \in P$, and $p\cdot h^{-1}$ denotes the given
right action of $h^{-1}$ on the point $p$. Then the map 
$\phi:T^*G \times P \ni
(\alpha_g, p) \mapsto \left(T_e^*L_g(\alpha_g),  p\cdot g \right) \in 
\mathfrak{g}^\ast \times P$ is $G$-invariant and induces a
diffeomorphism $\phi/G: (T^*G \times P)/G \rightarrow 
\mathfrak{g}^\ast \times P$. The push forward of the quotient Poisson
bracket on $(T^*G \times P)/G$ by $\phi/G$ yields the Poisson
bracket
\begin{equation}
\label{KMPB}
\{f, h\}_{\mathfrak{g}^\ast \times P} (\mu, p)= \{f^p, h^p\}_-(\mu)  + 
\left\langle \mathbf{d}f^\mu(p), \left(\frac{\delta h^p}{\delta \mu} \right)_P(p) \right\rangle_P - 
\left\langle \mathbf{d}h^\mu(p), \left(\frac{\delta f^p}{\delta \mu} \right)_P(p) \right\rangle_P + 
\{f^\mu, h^\mu\}_P(p)\,,
\end{equation}
for all $f, h \in C^{\infty}(\mathfrak{g}^\ast \times P)$, where 
$f^\mu, h^\mu \in C^{\infty}(P)$ and $f^p, h^p \in 
C^{\infty}(\mathfrak{g}^\ast)$ are defined by $f^\mu(p): = f^p(\mu) :=
f(\mu, p)$, for all $\mu \in\mathfrak{g}^\ast$, $p\in P$, and 
similarly for $h$ (\cite[Proposition 2.1]{KrMa1987}). Thus, Hamilton's equations for $h \in 
C^{\infty}( \mathfrak{g}^\ast\times P)$ are
\begin{equation}
\label{KMHam_eq}
\frac{d}{dt}\mu= \operatorname{ad}^*_{\frac{\delta h^p}{ \delta \mu}} \mu 
- \mathbf{J}_{T^*P}\left(\mathbf{d}h^\mu(p) \right), \qquad
\frac{d}{dt}p = \left(\frac{\delta h^p}{\delta \mu} \right)_P(p) + 
X^P_{h^\mu}(p),
\end{equation}
where $X^P_{h^\mu}$ denotes the Hamiltonian vector field of $h^\mu \in 
C^{\infty}(P)$ on the Poisson manifold $P$ and $\mathbf{J}_{T^*P}:
T^*P \rightarrow  \mathfrak{g}^\ast$ is the momentum map of the 
cotangent lifted action (see \eqref{momentum_map} with $Q$ replaced by $P$).

Suppose now that the right $\mathfrak{g}$-action on $P$ has a momentum 
map $\mathbf{J}_P: P \rightarrow \mathfrak{g}^\ast$, 
which means that $\xi_P[f] = \left\{f, \mathbf{J}_P^ \xi\right\}_P$ 
for all $f \in C^{\infty}(P)$ and all 
$\xi \in \mathfrak{g}$, where $\mathbf{J}_P^\xi(p) : = 
\left\langle \mathbf{J}_P(p), \xi \right\rangle_\mathfrak{g}$. Suppose
also that $\mathbf{J}_P$ is infinitesimally equivariant i.e.,  
$\mathbf{J}_P^{[\xi, \eta]} = -\left\{ \mathbf{J}_P^\xi, 
\mathbf{J}_P^\eta\right\}_P$,
for all $\xi, \eta \in \mathfrak{g}$. We recall that the existence of
a momentum map on $P$ for a connected Lie group action forces the group 
orbits to be included in the symplectic leaves of $P$, which is 
a rather stringent condition. There are many examples of Poisson
Lie group actions that do not admit a momentum map. (See, e.g., 
\cite[Chapters 4 and 5]{OrRa2004} for a discussion of this problem.) However, in the presence of an 
equivariant momentum
map $\mathbf{J}_P: P \rightarrow \mathfrak{g}^\ast$, the
diffeomorphism $\psi: \mathfrak{g}^\ast\times P\ni (\mu, p) \mapsto 
(\mu - \mathbf{J}_P(p), p) \in \mathfrak{g}^\ast\times P$ pushes 
forward the Poisson bracket $\{\cdot,\cdot\}_{\mathfrak{g}^\ast \times P}$,
given by \eqref{KMPB}, to the sum Poisson bracket
\begin{equation}
\label{sum_PB_g_star_q}
\{f, h\}_{\rm sum}(\mu, p): = \{f^p, h^p\}_-(\mu) + \{f^\mu, h^\mu\}_P(p)
\end{equation}
on $\mathfrak{g}^\ast \times P$.
This is proved for left actions in \cite[Proposition 2.2]{KrMa1987}; 
although our formulas in \eqref{KMPB} and the definition of $\psi$ have relative 
sign changes because we work with a right $G$-action on $P$. The proof is
a direct verification. Hamilton's equations $\frac{d}{dt}f=\{f,h\}$ for the
sum Poisson bracket \eqref{sum_PB_g_star_q} are
\begin{equation}
\label{Ham_equ_sum_general}
\frac{d}{dt}\mu = \operatorname{ad}_{\frac{\delta h^p}{\delta \mu}}^* \mu, 
\qquad \frac{d}{dt}p = X^P_{h^\mu}(p).
\end{equation}

Using  \eqref{sum_PB_g_star_q}, it follows that if $k\in 
C^{\infty}(\mathfrak{g}^\ast\times P)$ is a Casimir function on 
$\left(\mathfrak{g}^\ast \times P, \{\cdot , \cdot \}_{\rm sum} \right)$, 
then $k \circ \psi$ is a Casimir function on $\left(\mathfrak{g}^\ast 
\times P, \{\cdot , \cdot \}_{\mathfrak{g}^\ast \times P} \right)$. In
particular, if $k_P \in C^{\infty}(P)$ is a Casimir function, then 
the function $(\mu, p) \mapsto k_P(p)$ is a Casimir function 
for  $\left(\mathfrak{g}^\ast 
\times P, \{\cdot , \cdot \}_{\mathfrak{g}^\ast \times P} \right)$. This
can also be easily checked directly using \eqref{KMHam_eq}. More 
interestingly, if $k_{\mathfrak{g}^\ast} \in C^{\infty}(\mathfrak{g}^\ast)$ 
is a Casimir function on $\mathfrak{g}^\ast$, then 
$(\mu, p) \mapsto k_{\mathfrak{g}^\ast}( \mu - \mathbf{J}_P(p))$ is a 
Casimir function for  $\left(\mathfrak{g}^\ast 
\times P, \{\cdot , \cdot \}_{\mathfrak{g}^\ast \times P} \right)$ (\cite[Corollary 2.3]{KrMa1987}).

Since the projections $\pi_{\mathfrak{g}^\ast}:
\left(\mathfrak{g}^\ast \times P, \{\cdot , \cdot \}_{\rm sum} \right) 
\rightarrow \mathfrak{g}^\ast_-$ and $\pi_P:
\left(\mathfrak{g}^\ast \times P, \{\cdot , \cdot \}_{\rm sum} \right)
\rightarrow P$ are Poisson maps, their compositions 
\begin{equation}
\label{Poisson_projections}
\pi_{\mathfrak{g}^\ast}
\circ\psi:\left(\mathfrak{g}^\ast\times P,\{\cdot,\cdot\}_{\mathfrak{g}^*
\times P} \right)\ni (\mu, p) \longmapsto \mu - \mathbf{J}_P(p)\in
\mathfrak{g}^\ast_-, \quad 
\pi_P\circ\psi:\left(\mathfrak{g}^\ast\times P,
\{\cdot,\cdot\}_{\mathfrak{g}^*\times P} \right) \ni (\mu, p) 
\longmapsto p\in P
\end{equation} 
with the Poisson diffeomorphism $\psi: \left(\mathfrak{g}^\ast\times P,
\{\cdot,\cdot\}_{\mathfrak{g}^*\times P} \right) \rightarrow
\left(\mathfrak{g}^\ast \times P, \{\cdot , \cdot \}_{\rm sum} \right)$
are also Poisson maps. 

Remarkably, the projection $\pi_{\mathfrak{g}^\ast}:
\left(\mathfrak{g}^\ast\times P,\{\cdot,\cdot\}_{\mathfrak{g}^*
\times P} \right) \rightarrow \mathfrak{g}^\ast_-$
is also a Poisson map, as an easy direct verification shows, using
for $\mathsf{f} \in C^{\infty}(\mathfrak{g}^\ast)$
the identities $(\mathsf{f} \circ \pi_{\mathfrak{g}^\ast})^p = \mathsf{f}$ 
for every $p \in P$ and $(\mathsf{f} \circ \pi_{\mathfrak{g}^\ast})^\mu = 
\mathsf{f}(\mu)$, a constant function on $P$, for every  
$\mu \in \mathfrak{g}^\ast$. In
particular, this means that Hamilton's equations \eqref{KMHam_eq} for
a Hamiltonian of the form $h:=\mathsf{h} \circ \pi_{\mathfrak{g}^\ast}$,
where $\mathsf{h} \in C^{\infty}(\mathfrak{g}^\ast)$ (i.e.,
$h$ does not depend on $p \in P$), are the Lie-Poisson equations
for $\mathsf{h}$ on $\mathfrak{g}^\ast_-$ which completely decouple from
the second equation in \eqref{KMHam_eq}. The second equation is
given by an infinitesimal generator at every instance of time, namely,
if $\mu(t)$ is a solution of the Lie-Poisson equation $\frac{d}{dt}\mu = 
\operatorname{ad}_{\frac{\delta \mathsf{h}}{\delta \mu}}^* \mu$, 
then the second equation in \eqref{KMHam_eq} is the \textit{time-dependent
infinitesimal generator equation} 
\begin{equation}
\label{second_KMHam_eq_decoupled}
\frac{d}{dt}p(t) = \left(\frac{\delta \mathsf{h}}{\delta \mu(t)} \right)_P
(p(t)).
\end{equation}

Similarly, the projection $\pi_P: \mathfrak{g}^\ast \times P 
\rightarrow P$ is a Poisson map relative to both Poisson brackets
$\{\cdot , \cdot \}_{\mathfrak{g}^\ast \times P}$ and 
$\{\cdot , \cdot \}_{\rm sum}$, because if $\overline{\mathsf{f}} \in 
C^{\infty}(P)$, then $(\overline{\mathsf{f}} \circ \pi_P)^p = 
\overline{\mathsf{f}}(p)$, a constant on $\mathfrak{g}^\ast$, and 
$(\overline{\mathsf{f}} \circ \pi_P)^\mu = \overline{\mathsf{f}}$,
for any $\mu \in \mathfrak{g}^\ast$.

Hamilton's equations \eqref{KMHam_eq} and \eqref{Ham_equ_sum_general} show
that the manifolds $\{\mu\} \times P$ and $\mathfrak{g}^\ast_-\times\{p\}$
for any $\mu \in \mathfrak{g}^\ast$, $p \in P$, are not Poisson
submanifolds of $\mathfrak{g}^\ast \times P$ endowed with either
Poisson bracket $\{\cdot , \cdot \}_{\mathfrak{g}^\ast \times P}$
or $\{\cdot , \cdot \}_{\rm sum}$.

\paragraph{The Poisson brackets on $\mathfrak{g}^\ast \times Q$ and
$\mathfrak{g}^\ast \times T^*Q$.} We specialize the results of
the previous paragraph to the following Poisson manifolds: $Q$, endowed with 
the zero Poisson structure, and $T^*Q$, endowed with the canonical
Poisson structure (whose local expression is \eqref{can_Poisson_bracket}).
We continue to work in the non-stochastic context.

For any $f, h \in C^{\infty}(\mathfrak{g}^\ast\times Q)$, the Poisson
bracket \eqref{KMPB} reads 
\begin{align}
\label{PB_q}
\{f,h\}_{\mathfrak{g}^\ast_-\times Q}(m,q):&=
\begin{bmatrix}
\partial f / \partial m_\alpha \\ \partial f / \partial q^i 
\end{bmatrix}^\mathsf{T}
\begin{bmatrix}
-\,c_{\alpha\beta}{}^\gamma\,m_{\gamma} & -\,A^j_\alpha 
\\
A^i_\beta & 0 
\end{bmatrix}
\begin{bmatrix}
\partial h / \partial m_\beta \\ \partial h / \partial q^j 
\end{bmatrix}  \nonumber \\
&= \left\{f^q,h^q \right\}_-(m)  
+ \left\langle \mathbf{d}f^m(q), 
\left(\frac{\delta h^q}{\delta m}\right)_Q(q) \right\rangle_Q
- \left\langle \mathbf{d}h^m(q), 
\left(\frac{\delta f^q}{\delta m}\right)_Q(q) \right\rangle_Q,
\end{align}
where $f^q \in C ^{\infty}(\mathfrak{g}^\ast)$ and $f^\mu\in 
C ^{\infty}(Q)$ are defined by $f^q (\mu ) : = f^m (q) : = f(m,q)$, 
for all $m\in \mathfrak{g}^\ast$, $q \in Q$, and $\{ \cdot ,\cdot \}_-$ 
is the minus Lie-Poisson
bracket \eqref{LP} on $\mathfrak{g}^\ast_-$.

Similarly, for any $\widetilde{f}, \widetilde{h} \in 
C^{\infty}(\mathfrak{g}^\ast\times T^*Q)$, the Poisson
bracket \eqref{KMPB} reads 
\begin{align}
\label{PB_t_star_q}
&\left\{\widetilde{f},\widetilde{h}\right\}_{\mathfrak{g}^\ast_- 
\times T^*Q}\left(m, p_q\right):=
\begin{bmatrix}
\partial \widetilde{f} / \partial m_\alpha \\ 
\partial \widetilde{f} / \partial q^i \\ 
\partial \widetilde{f} / \partial p_i
\end{bmatrix}^\mathsf{T}
\begin{bmatrix}
-\,m_{[\,\alpha\,,\, \beta\,]} & -\,A^j_\alpha & 
p_k\frac{\partial A^k_\alpha}{\partial q^j}
\\
A^i_\beta & 0 & \delta^i_j
\\
-\,p_k\frac{\partial A^k_\beta}{\partial q^i} & -\,\delta^j_i & 0
\end{bmatrix}
\begin{bmatrix}
\partial \widetilde{h} / \partial m_\beta \\ 
\partial \widetilde{h} / \partial q^j \\ 
\partial \widetilde{h} / \partial p_j
\end{bmatrix} \nonumber  \\
& = \left\{\widetilde{f}^{p_q}, \widetilde{h}^{p_q} \right\}_- (m) 
+ \left\langle \mathbf{d} f^m(p_q), 
\left(\frac{\delta \widetilde{h}^{p_q}}{\delta m}\right)_{T^*Q}(p_q)
\right\rangle_Q  
- \left\langle \mathbf{d} h^m(p_q), 
\left(\frac{\delta \widetilde{f}^{p_q}}{\delta m}\right)_{T^*Q}(p_q)
\right\rangle_Q
+\left\{ \widetilde{f}^m, \widetilde{h}^m\right\}(p_q),
\end{align}
where $\widetilde{f}^{p_q} \in C^{\infty}(\mathfrak{g}^\ast)$ and
$\widetilde{f}^m \in C^{\infty}(T^*Q)$ are defined by 
$\widetilde{f}^{p_q}(m):=\widetilde{f}^m(p_q): = \widetilde{f}(m,p_q)$,
for all $m \in \mathfrak{g}^\ast$, $p_q \in T^*Q$,
and $\{\cdot , \cdot \}$ is 
the canonical Poisson bracket \eqref{can_Poisson_bracket} on $T^*Q$.
This proves the first statement in the following theorem.

\begin{theorem}
\label{two_PB_thm}
The brackets \eqref{PB_q} and \eqref{PB_t_star_q} are Poisson brackets
on $\mathfrak{g}^\ast \times Q$ and $\mathfrak{g}^\ast \times T^*Q$, 
respectively. Hamilton's equations on $\mathfrak{g}^\ast \times T^*Q$
are given by \eqref{KMHam_eq} with $P$ replaced by $T^*Q$. In standard
coordinates, for $\widetilde{h} \in C^{\infty}(\mathfrak{g}^\ast \times 
T^*Q)$, these equations are
\begin{equation}
\label{KM_star_q_eq_coord}
\frac{d}{dt}m_\alpha = -m_{[\alpha, \beta]}
\frac{\partial \widetilde{h}}{\partial m_\beta} -
A^j_\alpha\frac{\partial \widetilde{h}}{ \partial q^j} + p_k
\frac{\partial A_\alpha ^k}{ \partial q^j} 
\frac{ \partial \widetilde{h}}{ \partial p_j}\,, \qquad 
\frac{d}{dt}q^i = A^i_\beta \frac{\partial \widetilde{h}}{\partial m_\beta} + 
\frac{\partial \widetilde{h}}{ \partial p_i}\,,\qquad
\frac{d}{dt}p_i = -p_k \frac{\partial A_\beta^k}{ \partial q^i}
\frac{\partial \widetilde{h}}{\partial m_\beta} -
\frac{\partial \widetilde{h}}{ \partial q^i}\,.
\end{equation}
The diffeomorphism $\psi: \mathfrak{g}^\ast\times T^*Q\ni (\mu, p_q) 
\mapsto (\mu - \mathbf{J}_{T^*Q}(p), p_q) \in \mathfrak{g}^\ast\times T^*Q$ 
pushes  forward the Poisson bracket \eqref{PB_t_star_q} to the sum Poisson bracket. If $k_{\mathfrak{g}^\ast} \in 
C^{\infty}(\mathfrak{g}^\ast)$ is a Casimir function on 
$\mathfrak{g}^\ast$, then $(\mu, p_q) \mapsto k_{\mathfrak{g}^\ast}
(\mu - \mathbf{J}_{T^*Q}(p_q))$ is a 
Casimir function for  $\left(\mathfrak{g}^\ast 
\times T^*Q, \{\cdot , \cdot \}_{\mathfrak{g}^\ast \times T^*Q} \right)$.

Hamilton's equations \eqref{KMHam_eq} on $\mathfrak{g}^\ast \times Q$
for $h \in C^{\infty}(\mathfrak{g}^\ast\times Q)$
{\rm(}with $P$ replaced by the trivial Poisson manifold $Q${\rm)} are 
Hamel's equations {\rm\cite{Ha1904}:}
\footnote{For a modern formulation, see, e.g., {\rm\cite[\S3.8, p.144]{Bloch2015}} or 
{\rm\cite{BlMaZe2009}}}
\begin{equation}
\label{KM_q_eq_coord}
\begin{aligned}
&\frac{d}{dt}m = \operatorname{ad}_{\frac{\delta h^q}{\delta m}}^* m - 
\mathbf{J}_{T^*Q}(\mathbf{d}h^m(q)), \qquad \quad  
\frac{d}{dt}q = \left(\frac{\delta h^q}{\delta m} \right)_Q(q) \qquad 
\Longleftrightarrow \\
&\frac{d}{dt}m_\alpha = -c_{\alpha \beta}{}^\gamma m_\gamma
\frac{\partial h}{ \partial m_\beta} - 
A^j_\alpha\frac{\partial h}{\partial q^j}\,, \qquad\;\; 
\frac{d}{dt}q^i = A^i_\beta \frac{\partial h}{\partial m_\beta}\,.
\end{aligned}
\end{equation}
If $\widetilde{h}$ does not depend on $p_q \in T^*Q$
{\rm(}respectively, $h$ does not depend on $q \in Q${\rm)}, 
then Hamilton's equations \eqref{KM_star_q_eq_coord} {\rm(}respectively, 
\eqref{KM_q_eq_coord}{\rm)} decouple into the Lie-Poisson equations on 
$\mathfrak{g}^\ast_-$ and the time-dependent infinitesimal 
generator equations for 
$\frac{\delta \widetilde{h}}{\delta m(t)}\in \mathfrak{g}$ on $T^*Q$
{\rm(}respectively, $\frac{\delta h}{\delta m(t)} \in\mathfrak{g}$ 
on $Q${\rm)}.

The four projections of $\mathfrak{g}^\ast \times Q$ and 
$\mathfrak{g}^\ast \times T^*Q$ on every factor are Poisson 
{\rm(}$\mathfrak{g}^*$ has the minus Lie-Poisson structure{\rm)}.
The map $\mathfrak{g}^\ast \times T^*Q \ni (\mu, p_q) 
\mapsto \mu - \mathbf{J}_{T^*Q} (p_q) \in\mathfrak{g}^\ast_-$
is Poisson. The embedding $\mathfrak{g}^\ast \times Q\ni (m, q) \mapsto 
(m, 0_q) \in \mathfrak{g}^\ast \times T^*Q$ is not Poisson. The map
$\rho: \mathfrak{g}^\ast \times T^*Q \ni (m, p_q) \longmapsto
(m, q) \in \mathfrak{g}^\ast \times Q$ is Poisson. 
\end{theorem}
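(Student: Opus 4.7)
The strategy is to derive essentially all claims as specializations of the Krishnaprasad-Marsden construction reviewed in the preceding paragraphs, applied to the two Poisson manifolds $P = Q$ (with zero bracket) and $P = T^*Q$ (with the canonical bracket). First I would verify the hypothesis that the right $G$-action on $P$ is by Poisson diffeomorphisms: this is vacuous for $Q$ and, for $T^*Q$, follows from the standard fact that cotangent lifts preserve the canonical one-form and hence the canonical symplectic structure. Formula \eqref{KMPB} then immediately yields that both brackets \eqref{PB_q} and \eqref{PB_t_star_q} are Poisson. To match the explicit matrix forms, I would unpack the four summands in \eqref{KMPB}: the Lie-Poisson block $-m_{[\alpha,\beta]}$ comes from $\{f^{p_q}, h^{p_q}\}_-$ via \eqref{m_alpha_beta}; the $\pm A^i_\beta$ entries and the $\pm p_k \partial A^k_\alpha/\partial q^j$ corrections both come from the cross terms once one identifies the cotangent-lift infinitesimal generator $\xi_{T^*Q} = A^i_\alpha \xi^\alpha \partial/\partial q^i - p_k (\partial A^k_\alpha/\partial q^i)\xi^\alpha \partial/\partial p_i$ as the Hamiltonian vector field of $p_iA^i_\alpha(q)\xi^\alpha = m_\alpha \xi^\alpha$ (which is exactly \eqref{momentum_map_def}); and the $\pm \delta^i_j$ entries come from the canonical bracket $\{f^m, h^m\}_{T^*Q}$. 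Equations \eqref{KM_star_q_eq_coord} and \eqref{KM_q_eq_coord} then follow by evaluating the bracket against the coordinate functions $m_\alpha$, $q^i$, $p_i$.

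For the Poisson diffeomorphism $\psi$, I would invoke \cite[Proposition 2.2]{KrMa1987} with the sign adjustments appropriate for right actions; the essential input is the $G$-equivariance of $\mathbf{J}_{T^*Q}$ and its infinitesimal form \eqref{inf_equ_der}. The Casimir statement is then a two-step corollary: if $k_{\mathfrak{g}^\ast}$ is Casimir on $\mathfrak{g}^\ast_-$, then $k_{\mathfrak{g}^\ast} \circ \pi_{\mathfrak{g}^\ast}$ is Casimir for the sum bracket because $\pi_{\mathfrak{g}^\ast}$ is Poisson, and pulling back along $\psi$ turns it into $(\mu, p_q) \mapsto k_{\mathfrak{g}^\ast}(\mu - \mathbf{J}_{T^*Q}(p_q))$, which is therefore Casimir for $\{\cdot, \cdot\}_{\mathfrak{g}^\ast \times T^*Q}$. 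The Hamel equations \eqref{KM_q_eq_coord} are the $P = Q$ specialization of \eqref{KMHam_eq}. When $\widetilde{h}$ depends only on $\mu$, the coordinate system \eqref{KM_star_q_eq_coord} decouples: the $m$-equation collapses to the Lie-Poisson equation $\dot m_\alpha = -m_{[\alpha,\beta]}\partial\widetilde{h}/\partial m_\beta$, and the $(q,p)$ equations reassemble into the cotangent lift of the time-dependent infinitesimal generator of $\delta\widetilde{h}/\delta m(t) \in \mathfrak{g}$ on $T^*Q$, which is precisely \eqref{second_KMHam_eq_decoupled} with $P = T^*Q$; the $Q$-case is analogous.

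For the assertions about projections, I would verify each by substituting pullbacks into the matrices \eqref{PB_t_star_q} and \eqref{PB_q}. The four factor projections are Poisson: pullbacks from $\mathfrak{g}^\ast$ have vanishing $(q,p)$-derivatives, leaving only the Lie-Poisson block; pullbacks from $T^*Q$ or $Q$ have vanishing $m$-derivatives, leaving only the canonical or zero bottom-right block. The map $(\mu, p_q) \mapsto \mu - \mathbf{J}_{T^*Q}(p_q)$ equals $\pi_{\mathfrak{g}^\ast} \circ \psi$ and is Poisson as a composition. For $\rho: (m, p_q) \mapsto (m, q)$, pullbacks are independent of $p$, so every $\partial/\partial p_i$ entry in \eqref{PB_t_star_q} is killed (including those multiplying the $p$-dependent corrections), and the remaining block reduces exactly to \eqref{PB_q}. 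To show that $\iota: (m, q) \mapsto (m, 0_q)$ is not Poisson, I would exhibit $f(\mu, p_q) := \phi(q)$ and $g(\mu, p_q) := p_a \psi(q)$ for smooth $\phi, \psi$ on $Q$: then $\iota^* g \equiv 0$ forces $\{\iota^* f, \iota^* g\}_{\mathfrak{g}^\ast \times Q} = 0$, whereas $\{f, g\}_{\mathfrak{g}^\ast \times T^*Q} \circ \iota = \psi(q)\,\partial \phi/\partial q^a$, which is generically nonzero. The main obstacle will be the careful coordinate bookkeeping needed to match \eqref{KMPB} with the matrix \eqref{PB_t_star_q}, in particular verifying the cotangent-lift correction $p_k \partial A^k_\alpha/\partial q^j$; once this identification is in hand, every subsequent claim reduces to routine verification or to results already established in the excerpt.
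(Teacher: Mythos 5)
Your proposal is correct and follows essentially the same route as the paper: every claim is obtained by specializing the Krishnaprasad--Marsden bracket \eqref{KMPB} and the general facts already established for an arbitrary Poisson manifold $P$ in the preceding paragraphs, then reading off \eqref{KM_star_q_eq_coord} and \eqref{KM_q_eq_coord} in coordinates and checking the mapping statements directly. The only place your tactic differs is the non-Poisson embedding, where you exhibit the explicit pair $\phi(q)$, $p_a\psi(q)$, whereas the paper notes that setting $p_i=0$ in \eqref{KM_star_q_eq_coord} shows Hamiltonian vector fields are not tangent to $\mathfrak{g}^\ast\times Q$ (so it is not a Poisson submanifold); both arguments are valid.
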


\begin{proof} 
Formulas \eqref{KM_star_q_eq_coord} and \eqref{KM_q_eq_coord} are obtained by calculating
\eqref{KMHam_eq} for these two cases. 
The statements about the Poisson character of the five projections 
and the diffeomorphism $\psi$, the decoupling of the equations for
Hamiltonians depending only on $m \in \mathfrak{g}^\ast$, as well as the assertion about
the Casimir functions, were
proved in the previous paragraph for a general Poisson manifold $P$.

Setting all coordinates $p_i=0$ 
in \eqref{KM_star_q_eq_coord} 
does not yield \eqref{KM_q_eq_coord}, i.e., a Hamiltonian vector
field on the Poisson manifold $\mathfrak{g}^\ast \times T^*Q$, restricted 
to $\mathfrak{g}^\ast\times Q$ is, in general, not tangent to 
$\mathfrak{g}^\ast\times Q$. This proves that $\mathfrak{g}^\ast
\times Q$ is not a Poisson submanifold of $\mathfrak{g}^\ast \times T^*Q$.

Let $\pi: T^*Q \rightarrow Q$ be the cotangent bundle projection. 
The map, $\rho: \mathfrak{g}^\ast \times T^*Q \rightarrow \mathfrak{g}^\ast
\times Q$ is Poisson. This is a direct verification using the
formulas $(f \circ \rho)^m = f^m\circ \pi$, 
$\mathbf{d}(f \circ \rho)^m(p_q) = \mathbf{d} f^m(q) \circ T_{p_q}\pi$, 
$(f \circ \rho)^{p_q} = f^q$, $\frac{\delta(f \circ\rho)^{p_q}}{\delta m}
= \frac{\delta f^q}{\delta m}$, and the fact that the infinitesimal
generators $u_{T^*Q}$ (of the lifted $G$-action on $T^*Q$) and $u_Q$ 
(of the $G$-action on $Q$) are $\pi$-related.

The last statement is obtained by setting $\partial h/\partial p_i = 0$
in \eqref{KM_star_q_eq_coord}.
\end{proof}

\begin{remark}\rm[Collective Lie-Poisson momentum map dynamics]
\label{rem_coll}
In \eqref{KM_q_eq_coord}, note that if
the Hamiltonian depends only on $m \in \mathfrak{g}^\ast$, 
i.e., the Hamiltonian is of the form $h:=\mathsf{h} \circ 
\pi_{\mathfrak{g}^\ast}$, where
$\mathsf{h} \in C^{\infty}(\mathfrak{g}^\ast)$ and $\pi_{\mathfrak{g}^\ast}:
\left(\mathfrak{g}^\ast \times Q, \{ \cdot , \cdot \}_{
\mathfrak{g}^\ast \times Q}\right) \ni (m,q) 
\mapsto m \in \left(\mathfrak{g}^\ast, \{\cdot , \cdot \}_- \right)$,
Hamel's equations in \eqref{KM_q_eq_coord} become the Lie-Poisson equations on $\mathfrak{g}^\ast_-$.  
In this case, we say the motion \textit{collectivizes} (see 
\cite{GuSt1980}) since $\pi_{\mathfrak{g}^\ast}$ is a Poisson map. 
\hfill $\lozenge$
\end{remark}

\begin{remark}\rm
The manifold $Q$ endowed with the zero Poisson structure does not
admit a momentum map $\mathbf{J}_Q:Q \rightarrow \mathfrak{g}^\ast$. 
Indeed, if $\mathbf{J}_Q$ existed, we would have $\xi_Q[f] = 
\left\{f, \mathbf{J}_Q^\xi\right\}_Q = 0$ for all $f \in 
C^{\infty}(Q)$ and all $\xi \in \mathfrak{g}$, which would imply the false
statement that all smooth functions on $Q$ are $\mathfrak{g}$-invariant.
As a consequence, the statement in the previous paragraph about the
Poisson bracket on $\mathfrak{g}^\ast \times Q$ being isomorphic to the
sum Poisson bracket, which in this case would be just the minus 
Lie-Poisson bracket, does not apply. Similarly, Casimir functions
on $\mathfrak{g}^\ast$ do not induce Casimir functions on 
$\mathfrak{g}^\ast \times Q$. 
\hfill $\lozenge$
\end{remark}

For the statement of the next corollary, we need to introduce
the \textit{fiber translation vector field} $T_\alpha \in 
\mathfrak{X}(T^*Q)$ associated to a one-form $\alpha\in\Omega^1(Q)$.
The map $T^*Q \ni p_q \mapsto p_q - t \alpha(q) \in T^*Q$, 
$t \in\mathbb{R}$, is a one-parameter group. Define $T_\alpha$ to
be the vector field with this flow, i.e.,
\[
T_\alpha(p_q) : = \left.\frac{d}{dt}\right|_{t=0}\left(p_q - t \alpha(q) \right)\in T_{p_q} (T^* Q).
\]
This vector field is identical to the vertical lift operation by 
$-\alpha \in\Omega^1(T^*Q)$.

\begin{corollary}
\label{decoupling_cor}
Hamilton's equations \eqref{KMHam_eq} {\rm(}with $P = T^*Q${\rm)} on 
$\mathfrak{g}^\ast \times T^*Q$ for $\widetilde{h}:=h \circ \rho$, where $h \in 
C^{\infty}(\mathfrak{g}^\ast \times Q)$, i.e., $\widetilde{h}(m, p_q): = h(m,q)$, take the form
\begin{equation}
\label{intrinsic_equ_on_tstarq}
\frac{d}{dt}m = \operatorname{ad}_{\frac{\delta h^q}{\delta m}}^* m 
- \mathbf{J}_{T^*Q} \left(\mathbf{d}h^m(q)\right), \qquad 
\frac{d}{dt}p_q = \left(\frac{\delta h^q}{\delta m} \right)_{T^*Q} + 
T_{\mathbf{d}h^m} (p_q).
\end{equation}
In addition, equations \eqref{intrinsic_equ_on_tstarq} imply both \eqref{KM_q_eq_coord}
and the non-homogeneous Lie-Poisson equations
\begin{equation}
\label{final_strange_equ_gen}
\frac{d}{dt}
\mathbf{J}_{T^*Q}(p_q(t)) = 
\operatorname{ad}_{\frac{\delta h^{q(t)}}{\delta m(t)}}^\ast
\mathbf{J}_{T^*Q}(p_q(t))
- \mathbf{J}_{T^*Q}\left(\mathbf{d}h^{m(t)} (q(t)\right)
\end{equation}
for $\mathbf{J}_{T^*Q}(p_q(t))$,
where $(m(t), q(t))$ is
the solution of Hamel's equations \eqref{KM_q_eq_coord}.
\end{corollary}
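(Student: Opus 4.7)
The plan is to derive \eqref{intrinsic_equ_on_tstarq} directly from Hamilton's equations \eqref{KMHam_eq} on $\mathfrak{g}^\ast \times T^*Q$ by unpacking what the Hamiltonian $\widetilde{h}=h\circ\rho$ contributes to each term, and then project / differentiate to obtain the other two assertions. First I would record the obvious identities $\widetilde h^{p_q}=h^q$ (so $\delta\widetilde h^{p_q}/\delta m=\delta h^q/\delta m$) and $\widetilde h^m=h^m\circ\pi$, where $\pi:T^*Q\to Q$ is the bundle projection. The only nontrivial piece is then the second term in the $m$-equation, $\mathbf{J}_{T^*(T^*Q)}(\mathbf{d}\widetilde h^m(p_q))$. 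Pairing with an arbitrary $u\in\mathfrak{g}$ and using that the infinitesimal generators are $\pi$-related ($T\pi\circ u_{T^*Q}=u_Q$, noted at the end of Theorem~\ref{two_PB_thm}), one computes
\begin{equation*}
\langle \mathbf{J}_{T^*(T^*Q)}(\mathbf{d}\widetilde h^m(p_q)),u\rangle_\mathfrak{g}
=u_{T^*Q}[h^m\circ\pi](p_q)=u_Q[h^m](q)
=\langle \mathbf{J}_{T^*Q}(\mathbf{d}h^m(q)),u\rangle_\mathfrak{g},
\end{equation*}
yielding the first equation of \eqref{intrinsic_equ_on_tstarq}.

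For the $p_q$-equation, I would identify the Hamiltonian vector field of a basic function $h^m\circ\pi\in C^\infty(T^*Q)$ with the fiber translation $T_{\mathbf{d}h^m}$. In canonical coordinates this is immediate from \eqref{can_Poisson_bracket}: $X^{T^*Q}_{h^m\circ\pi}=-\frac{\partial h^m}{\partial q^i}\frac{\partial}{\partial p_i}$, which coincides with the coordinate expression of $T_{\mathbf{d}h^m}$ coming from the definition given just before the corollary. Together with the first term $(\delta h^q/\delta m)_{T^*Q}$ inherited from \eqref{KMHam_eq}, this produces the second equation in \eqref{intrinsic_equ_on_tstarq}.

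To obtain \eqref{KM_q_eq_coord} I would apply $T\pi$ to the $p_q$-equation: the fiber translation $T_{\mathbf{d}h^m}(p_q)$ is vertical and hence killed by $T\pi$, while $T\pi\circ(\delta h^q/\delta m)_{T^*Q}=(\delta h^q/\delta m)_Q$ by the $\pi$-relatedness of infinitesimal generators. Combined with the $m$-equation already derived, this is precisely Hamel's system \eqref{KM_q_eq_coord}. Finally, for the non-homogeneous Lie-Poisson equation \eqref{final_strange_equ_gen}, I would differentiate $t\mapsto\mathbf{J}_{T^*Q}(p_q(t))$ and substitute the second equation of \eqref{intrinsic_equ_on_tstarq}. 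The infinitesimal-generator summand produces $\operatorname{ad}^*_{\delta h^q/\delta m}\mathbf{J}_{T^*Q}(p_q)$ by the infinitesimal equivariance identity \eqref{inf_equ_der}, and the fiber-translation summand contributes
\begin{equation*}
T_{p_q}\mathbf{J}_{T^*Q}\bigl(T_{\mathbf{d}h^m}(p_q)\bigr)
=\left.\tfrac{d}{dt}\right|_{t=0}\mathbf{J}_{T^*Q}\bigl(p_q-t\,\mathbf{d}h^m(q)\bigr)
=-\mathbf{J}_{T^*Q}(\mathbf{d}h^m(q)),
\end{equation*}
using that $\mathbf{J}_{T^*Q}$ is fiberwise linear (cf.\ \eqref{momentum_map}). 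Adding the two gives \eqref{final_strange_equ_gen}.

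The only substantive obstacle I foresee is the careful handling of the momentum map of the \emph{doubly} cotangent-lifted action, needed to express $\mathbf{J}_{T^*(T^*Q)}(\mathbf{d}\widetilde h^m)$ as $\mathbf{J}_{T^*Q}(\mathbf{d}h^m)$; once this is in hand, and the identification $X^{T^*Q}_{h^m\circ\pi}=T_{\mathbf{d}h^m}$ is made, the remaining deductions are short applications of \eqref{inf_equ_der} and the $\pi$-relatedness of infinitesimal generators, so the proof reduces to assembling these pieces.
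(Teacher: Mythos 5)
Your proposal is correct and follows essentially the same route as the paper's proof: the same two key identities, namely $\mathbf{J}_{T^*(T^*Q)}\bigl(\mathbf{d}(h^m\circ\pi)(p_q)\bigr)=\mathbf{J}_{T^*Q}\bigl(\mathbf{d}h^m(q)\bigr)$ via $\pi$-relatedness of infinitesimal generators and $X^{T^*Q}_{h^m\circ\pi}=T_{\mathbf{d}h^m}$ via a coordinate computation, followed by differentiating $t\mapsto\mathbf{J}_{T^*Q}(p_q(t))$ and applying \eqref{inf_equ_der}. Your only deviations are harmless simplifications: you obtain \eqref{KM_q_eq_coord} by applying $T\pi$ to the $p_q$-equation directly rather than invoking that $\rho$ is a Poisson map, and you evaluate $T_{p_q}\mathbf{J}_{T^*Q}\bigl(T_{\mathbf{d}h^m}(p_q)\bigr)=-\mathbf{J}_{T^*Q}(\mathbf{d}h^m(q))$ by fiberwise linearity of the momentum map instead of the paper's chain of Poisson-bracket identities.
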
 

\begin{proof}
We have $\widetilde{h}^{p_q} = h^q \in C ^{\infty}(\mathfrak{g}^\ast)$ 
and $\widetilde{h}^m = h^m \circ \pi \in C^{\infty}(T^*Q)$, where 
$\pi:T^*Q \rightarrow Q$ is the cotangent bundle projection. 

We compute $\mathbf{J}_{T^*(T^*Q)}\left(\mathbf{d}(h^m \circ \pi)(p_q) 
\right)$, the second summand on the right hand side of the first
equation in \eqref{KMHam_eq} for $P=T^*Q$. To do this, we note 
that since the $G$-action on $T^*Q$
is the cotangent lifted $G$-action on $Q$, the cotangent bundle
projection $\pi: T^*Q \rightarrow Q$ is equivariant and thus
the infinitesimal generators of the two actions for the same Lie
algebra element are $\pi$-related, i.e., $T\pi \circ v_{T^*Q} = 
v_Q \circ \pi$ for any $v \in \mathfrak{g}$. Therefore,
\begin{align}
\label{first_equ_second_term}
\left\langle\mathbf{J}_{T^*(T^*Q)}\left(\mathbf{d}(h^m \circ \pi)(p_q)
\right), v \right\rangle_\mathfrak{g}
& \stackrel{\eqref{momentum_map}}= \left\langle 
\mathbf{d}(h^m \circ \pi)(p_q), v_{T^*Q}(p_q) \right\rangle_\mathfrak{g}
=\left\langle \mathbf{d}h^m(q), T_{p_q} \pi\left(v_{T^*Q}(p_q)\right) 
\right\rangle_\mathfrak{g}  \nonumber \\
& \;\,= \left\langle \mathbf{d}h^m(q), v_Q(q) \right\rangle_\mathfrak{g}
\stackrel{\eqref{momentum_map}}= \left\langle\mathbf{J}_{T^*Q} 
\left( \mathbf{d}h^m(q)\right), v \right\rangle_\mathfrak{g}.
\end{align}

Next, we compute $X^{T^*Q}_{\widetilde{h}^m}(p_q) = 
X^{T^*Q}_{h^m \circ \pi}(p_q)$, the second summand on the right 
hand side of the second equation in \eqref{KMHam_eq} for $P=T^*Q$.
Since this affects only the dynamics on $T^*Q$, we prove, in general, that
\begin{equation}
\label{second_equ_second_term}
X^{T^*Q}_{k \circ \pi} = T_{\mathbf{d}k}, \quad \text{for all}
\quad k \in C ^{\infty}(Q).
\end{equation}
To prove \eqref{second_equ_second_term}, it is easier to work in
local coordinates. Hamilton's equations for $k\circ \pi$ are
\[
\frac{dq^i}{dt} =\frac{ \partial(k \circ \pi) }{\partial p_i} = 0,
\qquad \frac{dp_i}{dt} = - \frac{ \partial (k \circ \pi) }{\partial q^i} 
= - \frac{ \partial k}{\partial q^i}\,,
\]
whose solution is $q^i(t) = q^i_0$, $p_i(t) = p_i^0 - 
t \frac{ \partial k}{\partial q^i}(q_0)$, where $(q^1_0, \ldots,
q^n_0, p_1^0, \ldots p_n^0)$ is the initial condition. Thus, the
flow of $X^{T^*Q}_{k \circ \pi}$ is $p_q \mapsto p_q - t \mathbf{d}k(q)$
which coincides with the flow of $T_{\mathbf{d}k}$, thereby proving
\eqref{second_equ_second_term}.

Using the identities \eqref{first_equ_second_term} and \eqref{second_equ_second_term}, equations \eqref{KMHam_eq} become
\eqref{intrinsic_equ_on_tstarq}. 

Since $\rho: \left(\mathfrak{g}^\ast \times T^*Q, 
\{ \cdot , \cdot \}_{\mathfrak{g}^\ast \times T^*Q}\right) \ni (m, p_q) 
\longmapsto (m, q) \in \left(\mathfrak{g}^\ast \times Q, 
\{\cdot , \cdot \}_{\mathfrak{g}^\ast \times Q}\right)$ is a Poisson
map by Theorem \ref{two_PB_thm}, we have 
$T\rho \circ X^{\mathfrak{g}^\ast \times T^*Q}_{h \circ \rho} = 
X^{\mathfrak{g}^\ast \times Q}_h \circ \rho$ for any $h \in 
C^{\infty}(\mathfrak{g}^\ast\times Q)$, which is equivalent to saying
that \eqref{intrinsic_equ_on_tstarq} (the equations of motion defined
by the Hamiltonian vector field 
$X^{\mathfrak{g}^\ast \times T^*Q}_{h \circ \rho}$) project to 
\eqref{KM_q_eq_coord} (the equations of motion defined
by the Hamiltonian vector field 
$X^{\mathfrak{g}^\ast \times Q}_{h}$).

Finally, we prove \eqref{final_strange_equ_gen}. We have
\begin{align}
\label{first_int_funny_equ}
\frac{d}{dt}\mathbf{J}_{T^*Q}(p_q(t)) &=
T_{p_q(t)}\mathbf{J}_{T^*Q} \left(\frac{d}{dt}p_q(t) \right)
\stackrel{\eqref{intrinsic_equ_on_tstarq}}= 
T_{p_q(t)}\mathbf{J}_{T^*Q} \left(\left(\frac{\delta h^{q(t)}}{\delta m(t)} \right)_{T^*Q} (p_q(t))\right) + 
T_{p_q(t)}\mathbf{J}_{T^*Q} \left(
T_{\mathbf{d}h^{m(t)}} (p_q(t)) \right) \nonumber \\
& \mathrel{\operatorname*{=}_{\eqref{second_equ_second_term}}^{\eqref{inf_equ_der}}}
\operatorname{ad}_{\frac{\delta h^{q(t)}}{\delta m(t)}}^\ast
\mathbf{J}_{T^*Q}(p_q(t)) + 
T_{p_q(t)}\mathbf{J}_{T^*Q} \left(
X^{T^*Q}_{h^{m(t)} \circ \pi} (p_q(t)) \right).
\end{align}
One may compute the second summand by pairing it with any $v \in \mathfrak{g}$
to obtain
\begin{align}
\label{second_int_funny_equ}
\left\langle T_{p_q(t)}\mathbf{J}_{T^*Q} \left(
X^{T^*Q}_{h^{m(t)} \circ \pi} (p_q(t)) \right), v 
\right\rangle_ \mathfrak{g} & = 
\left\langle \mathbf{d}\mathbf{J}_{T^*Q}^v(p_q(t)), 
X^{T^*Q}_{h^{m(t)} \circ \pi} (p_q(t)) \right\rangle_{T^*Q}
= \{ \mathbf{J}_{T^*Q}^v, h^{m(t)}\circ \pi\}(p_q(t)) \nonumber \\
& = - \{ h^{m(t)}\circ \pi,  \mathbf{J}_{T^*Q}^v\}(p_q(t))
= - \left\langle\mathbf{d}(h^{m(t)}\circ \pi)(p_q(t)),
X^{T^*Q}_{\mathbf{J}_{T^*Q}^v}(p_q(t)) \right\rangle_Q  \nonumber \\
&\stackrel{\eqref{momentum_map_def}}=
- \left\langle \mathbf{d}(h^{m(t)}\circ \pi)(p_q(t)),
v_{T^*Q} (p_q(t)) \right\rangle_Q \nonumber \\
& \stackrel{\eqref{inf_equ_der}}=
- \left\langle \mathbf{d} h^{m(t)}(q(t)), v_Q(q(t)) \right\rangle_Q
\nonumber \\
&\stackrel{\eqref{momentum_map}}=
- \left\langle\mathbf{J}_{T^*Q}\left(\mathbf{d}h^{m(t)} (q(t)\right), 
v \right\rangle_ \mathfrak{g}.
\end{align}
Formulas \eqref{first_int_funny_equ} and \eqref{second_int_funny_equ}
now yield \eqref{final_strange_equ_gen}.
\end{proof}

\begin{corollary}
\label{lagrangian_decoupling_cor}
If the Hamiltonian $\widetilde{h} \in C^{\infty}(\mathfrak{g}^\ast 
\times T^*Q)$ is of the form $\widetilde{h} = h \circ \rho$ and 
$h \in C^{\infty}(\mathfrak{g}^\ast \times Q)$ is hyperregular, i.e.,
the parameter dependent reduced Legendre transformation
$\mathfrak{g} \times Q\ni (u,q) \mapsto\left(m(u, q), q\right) := 
\left(\frac{\delta \overline{\ell}^q}{\delta u}, q \right) \in 
\mathfrak{g}^\ast \times Q$ is a diffeomorphism,
where $\overline{\ell}\in C^{\infty}(\mathfrak{g}\times Q)$ 
and $h(m,q): = \left\langle m, u(m,q) \right\rangle_\mathfrak{g} - \overline{\ell}(u(m,q),q)$, equation \eqref{final_strange_equ_gen} takes
the form
\begin{equation}
\label{final_strange_equ_gen_lagrangian}
\frac{d}{dt}
\mathbf{J}_{T^*Q}(p_q(t)) = \operatorname{ad}_{u(t)}^* \mathbf{J}_{T^*Q}(p_q(t))
+ \mathbf{J}_{T^*Q}(\mathbf{d} \ell^{u(t)}(q(t)),
\end{equation}
where $(u(t), q(t))$ is the solution of the Lagrangian version of
Hamel's equations
\begin{equation}
\label{KM_q_eq_Hamel}
\begin{aligned}
&\frac{d}{dt}\frac{\delta \overline{\ell}^q}{\delta u} = 
\operatorname{ad}_u^* \frac{\delta \overline{\ell}^q}{\delta u} + 
\mathbf{J}_{T^*Q}(\mathbf{d}\overline{\ell}^u(q)), \qquad \quad  
\frac{d}{dt}q = u_Q(q) \qquad 
\Longleftrightarrow \\
&\frac{d}{dt}\frac{\partial\overline{\ell}^q}{\partial u^\alpha} 
= -c_{\alpha \beta}{}^\gamma 
\frac{\partial\overline{\ell}^q}{\partial u^\gamma} u^\beta + 
A^j_\alpha\frac{\partial \overline{\ell}}{\partial q^j}\,, \qquad\;\; 
\frac{d}{dt}q^i = A^i_\beta u^\beta\,.
\end{aligned}
\end{equation}
\end{corollary}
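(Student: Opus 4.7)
The proof will be a direct computation using the standard identities for a parameter-dependent Legendre transformation, applied to the conclusions of Corollary~\ref{decoupling_cor}. The only real content is to translate the Hamiltonian quantities $\frac{\delta h^q}{\delta m}$ and $\mathbf{d}h^m(q)$, which appear in \eqref{final_strange_equ_gen} and in Hamel's equations \eqref{KM_q_eq_coord}, into Lagrangian quantities expressed through $\overline{\ell}$ and $u$.

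The plan is as follows. First, I will establish the two pointwise identities
\begin{equation*}
\frac{\delta h^q}{\delta m}(m,q) = u(m,q), \qquad
\mathbf{d}h^m(q) = -\,\mathbf{d}\overline{\ell}^{u(m,q)}(q),
\end{equation*}
valid for every $(m,q) \in \mathfrak{g}^\ast \times Q$. These follow by differentiating the definition $h(m,q) = \langle m, u(m,q)\rangle_{\mathfrak{g}} - \overline{\ell}(u(m,q),q)$ with respect to $m$ at fixed $q$, respectively with respect to $q$ at fixed $m$, and then using the defining relation $m = \frac{\delta \overline{\ell}^q}{\delta u}$ of the Legendre transform to cancel the chain-rule terms involving $\frac{\partial u}{\partial m}$ and $\frac{\partial u}{\partial q}$. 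Hyperregularity of $\overline{\ell}(\cdot,q)$ for each $q$ ensures that $u(m,q)$ is well-defined and smooth, so these identities hold on all of $\mathfrak{g}^\ast \times Q$.

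Second, I will substitute these identities into \eqref{final_strange_equ_gen}. The first term $\operatorname{ad}_{\delta h^{q(t)}/\delta m(t)}^\ast \mathbf{J}_{T^*Q}(p_q(t))$ becomes $\operatorname{ad}_{u(t)}^\ast \mathbf{J}_{T^*Q}(p_q(t))$, and the second term $-\mathbf{J}_{T^*Q}(\mathbf{d}h^{m(t)}(q(t)))$ becomes $+\mathbf{J}_{T^*Q}(\mathbf{d}\overline{\ell}^{u(t)}(q(t)))$, yielding \eqref{final_strange_equ_gen_lagrangian}. To complete the identification, I must check that the curve $(u(t),q(t))$ obtained from the Hamel solution $(m(t),q(t))$ of \eqref{KM_q_eq_coord} by inverting $m = \delta\overline{\ell}^q/\delta u$ satisfies the Lagrangian Hamel equations \eqref{KM_q_eq_Hamel}. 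Substituting $\partial h/\partial m_\beta = u^\beta$ and $\partial h/\partial q^j = -\partial\overline{\ell}/\partial q^j$ directly into \eqref{KM_q_eq_coord} gives $\dot{q}^i = A_\beta^i u^\beta$ and $\dot{m}_\alpha = -c_{\alpha\beta}{}^\gamma m_\gamma u^\beta + A_\alpha^j \partial\overline{\ell}/\partial q^j$, which together with $m_\alpha = \partial\overline{\ell}/\partial u^\alpha$ is exactly \eqref{KM_q_eq_Hamel}.

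I do not expect any serious obstacle: the whole argument is bookkeeping with the Legendre transform, once \eqref{final_strange_equ_gen} is in hand from Corollary~\ref{decoupling_cor}. The one point requiring a moment of care is the sign in $\mathbf{d}h^m(q) = -\mathbf{d}\overline{\ell}^u(q)$, which is responsible for the sign flip that turns the $-\mathbf{J}_{T^*Q}(\mathbf{d}h^m(q))$ term of \eqref{final_strange_equ_gen} into the $+\mathbf{J}_{T^*Q}(\mathbf{d}\overline{\ell}^u(q))$ term of \eqref{final_strange_equ_gen_lagrangian}; this sign is the same sign that takes the Lie-Poisson form $\dot{\mu} = -\operatorname{ad}^\ast_{\delta h/\delta\mu}\mu$ to the Euler-Poincar\'e form $\dot{\mu} = \operatorname{ad}^\ast_u \mu$ and is therefore intrinsic to the (minus) Lie-Poisson convention adopted throughout the paper.
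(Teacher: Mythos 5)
Your proposal is correct and follows essentially the same route as the paper's proof: both establish the Legendre-transform identities $\frac{\delta h^q}{\delta m}=u(m,q)$ and $\mathbf{d}h^m(q)=-\mathbf{d}\overline{\ell}^{u}(q)$ via the standard cancellation of first-order chain-rule terms, and then substitute into \eqref{final_strange_equ_gen} and \eqref{KM_q_eq_coord}. The only cosmetic difference is that the paper derives the second identity by differentiating the inverse relation $\overline{\ell}(u,q)=\left\langle m(u,q),u\right\rangle_{\mathfrak{g}}-h(m(u,q),q)$ in $q$, whereas you differentiate $h$ directly; the two computations are equivalent.
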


\begin{proof} 
By hyperregularity, we can solve for $u$ to get $u(m,q) \in \mathfrak{g}$ and we have
\[
\left\langle \delta m, \frac{\delta h^q}{\delta m} 
\right\rangle_\mathfrak{g} = \left\langle \delta m,  
u(m,q) \right\rangle_\mathfrak{g} +\left\langle m, 
\mathbf{D}u^q(m) \cdot \delta m \right\rangle_\mathfrak{g} -
\left\langle \frac{ \delta \overline{\ell}^q}{\delta u(m,q)}, 
\mathbf{D}u^q(m) \cdot \delta m \right\rangle_ \mathfrak{g}
=  \left\langle \delta m,  u(m,q) \right\rangle_\mathfrak{g}
\]
because $\frac{\delta \overline{\ell}^q}{\delta u(m,q)} = m$ by definition
of $m$ and hyperregularity. Thus $u(m, q) =\frac{\delta h^q}{\delta m}$.
Since $\overline{\ell}(u, q) = \left\langle m(u,q), u 
\right\rangle_ \mathfrak{g} - h(m(u, q), q)$, we get
\begin{align*}
\mathbf{d}\ell^u(q) = \left\langle \mathbf{d}m^u(q), 
u \right\rangle_\mathfrak{g} - \left\langle \mathbf{d}m^u(q), 
\frac{\delta h^q}{\delta m(u,q)}\right\rangle_\mathfrak{g} 
- \mathbf{d} h^{m(u,q)}(q) = - \,\mathbf{d} h^{m(u,q)}(q),
\end{align*}
since, as we just saw and invoking hyperregularity, we have
$\frac{\delta h^q}{\delta m(u,q)} = u$. Equations 
\eqref{final_strange_equ_gen_lagrangian} and \eqref{KM_q_eq_Hamel}
now follow from \eqref{final_strange_equ_gen} and 
\eqref{KM_q_eq_coord}, respectively.
\end{proof}

\subsection{The stochastic Hamilton equations}
\label{sec_stoch_ham_equ}

\paragraph{The Stratonovich stochastic Hamilton equations for
semimartingales.} 
We begin by defining Stratonovich stochastic Hamilton equations. Let 
$(P, \{\cdot , \cdot \})$ be a Poisson manifold. For any $f \in 
C^{\infty}(P)$, form the semimartingale $f(\mathscr{p}(t))$ obtained
by replacing the point $p \in P$ by a $P$-valued semimartingale
$\mathscr{p}(t)$. Consider a semimartingale
$\mathscr{Y}_t (\mathscr{p}):=\mathscr{Y}_0 +\int_0^t \phi_\alpha
(\mathscr{p}(s)) \xi_k^\alpha \circ dW^k_s +\int_0^t \psi(\mathscr{p}(s)) ds$, where 
$\phi_\alpha$, $\psi \in C^{\infty}(P)$ are deterministic smooth functions
and $\xi_k : = \xi_k^\alpha e_\alpha \in \mathfrak{g}$ are constant
elements.

In analogy with Section \ref{VP-coadmotion-sec}, the 
\textit{{\rm(}Stratonovich{\rm)}  stochastic Poisson bracket} is defined 
by
\begin{equation}
\label{stochastic_PB_general}
\{f(\mathscr{p}(t)),\circ d_t \mathscr{Y}_t\} := \{f(\mathscr{p}(t)) , \phi_\alpha(\mathscr{p}(t)) \}\xi_k^\alpha \circ dW^k_t 
+ \{f(\mathscr{p}(t)) , \psi(\mathscr{p}(t)) \}dt.
\end{equation}
where
$$
\{f (\mathscr{p}(t)) , \phi (\mathscr{p}(t)) \} 
=\{f,\phi \} (\mathscr{p}(t)).
$$

\begin{definition}
\label{def_stoch_ham_equ}
The {\rm(}Stratonovich{\rm) stochastic Hamilton equations} 
for $P$-valued semimartingales with stochastic semimartingale Hamiltonian 
$\mathscr{Y}_t (\mathscr{p}):=\mathscr{Y}_0 +\int_0^t \phi_\alpha 
(\mathscr{p}(s))\xi_k^\alpha \circ dW^k_s +\int_0^t \psi(\mathscr{p}(s)) ds$ are 
\begin{equation}
\label{stochastic_Ham_equ_general}
d_tf(\mathscr{p}(t)) = \{f(\mathscr{p}(t)),\circ d_t \mathscr{Y}_t\} := \{f, \phi_\alpha \}(\mathscr{p}(t)) \xi_k^\alpha \circ dW^k_t 
+ \{f, \psi\}(\mathscr{p}(t)) dt, \quad \text{for any} \quad f \in C^{\infty}(P),
\end{equation}
where the Poisson bracket semimartingales on the right hand side are
defined in \eqref{standard_stochastic_PB} for variations as in \eqref{stochastic_PB}.
\end{definition}

If $(p^1, \ldots, p^n)$ are coordinates on $P$, the Stratonovich stochastic
Hamilton equations thus take the form
\begin{equation}
\label{stochastic_Ham_equ_general_coordinates}
d_t \mathscr{p}^i(t) = \{\mathscr{p}^i(t),\circ d_t \mathscr{Y}_t\} = 
\{p^i, \phi_\alpha \}(\mathscr{p}(t)) \xi_k^\alpha \circ dW^k_t 
+ \{p^i, \psi\}(\mathscr{p}(t)) dt.
\end{equation}

Let $k \in C^{\infty}(P)$ be a Casimir function. Then, for
the semimartingale $k(\mathscr{p}(t))$ we have, by It\^o's formula,
\[
d_tk(\mathscr{p}(t))= 
\{k(\mathscr{p}(t)), \circ d_t\mathscr{Y}\} =
\{k, \phi_\alpha \}(\mathscr{p}(t))\xi_k^\alpha \circ dW_t^k + 
\{k, \psi\}(\mathscr{p}(t)) dt = 0,
\]
i.e., \textit{the semimartingale $k(\mathscr{p}(t))$ is conserved along the
stochastic flow of the stochastic Hamiltonian semimartingale 
$\mathscr{Y}_t (\mathscr{p})$.} Clearly, $k(\mathscr{p}(t))$ is also
conserved in the It\^o representation.

\paragraph{The Stratonovich stochastic Hamilton equations on 
$\mathfrak{g}^\ast \times Q$ and $\mathfrak{g}^\ast\times T^*Q$.} 
We continue to denote the semimartingales
$\mathscr{q}^i(t):=q^i(\mathscr{q}(t), \mathscr{p}(t))$ and
$\mathscr{p}_i(t):=p_i(\mathscr{q}(t), \mathscr{p}(t))$. 
With this definition, the information in Theorem \ref{two_PB_thm}, in 
particular, having the Poisson bracket \eqref{PB_q} on 
$\mathfrak{g}^\ast\times Q$, we form the semimartingale 
\[
d_t\mathscr{h}_t = (h^1)_\alpha (m(\mathscr{q}(t), \mathscr{p}(t)),\mathscr{q}(t)) \xi^\alpha_k
\circ dW_t^k + h^2  (m(\mathscr{q}(t), \mathscr{p}(t)),\mathscr{q}(t)) dt,
\]
 where $(h^1)_\alpha, h^2 
\in C^{\infty}(\mathfrak{g}^\ast \times Q)$ and $\xi_k = 
\xi_k ^\alpha e_\alpha\in \mathfrak{g}$. 
By \eqref{stochastic_Ham_equ_general}, the Stratonovich
stochastic Hamilton equations are 
\begin{align*}
d_tm_\alpha(\mathscr{q}(t), \mathscr{p}(t)) &= 
\{m_\alpha(\mathscr{q}(t), \mathscr{p}(t)),\, 
\circ d_t\mathscr{h}_t(m(\mathscr{q}(t), \mathscr{p}(t)),
\mathscr{q}(t))\}_{\mathfrak{g}^\ast \times Q},\\ 
d_t\mathscr{q}^i(t) &= \{\mathscr{q}^i(t),\, 
\circ d_t\mathscr{h}_t(m(\mathscr{q}(t), \mathscr{p}(t)),
\mathscr{q}(t))\}_{\mathfrak{g}^\ast \times Q}.
\end{align*} 
That is, 
\begin{align}
\begin{split}
d_t m_\alpha(\mathscr{q}(t), \mathscr{p}(t)) &=  
\{m_\alpha(\mathscr{q}(t), \mathscr{p}(t)),\, 
\circ d_t\mathscr{h}_t(m(\mathscr{q}(t), \mathscr{p}(t)),
\mathscr{q}(t)) \}_{\mathfrak{g}^\ast 
\times Q} \\
&=  \{m_\alpha(\mathscr{q}(t), \mathscr{p}(t)),\, 
m_\beta(\mathscr{q}(t), \mathscr{p}(t))\}_{\mathfrak{g}^\ast \times Q}
\circ d_t\left(\frac{\partial \mathscr{h}_t}{\partial m_\beta}\right)
((\mathscr{q}(t), \mathscr{p}(t)), \mathscr{q}(t))\\
& \qquad + \{m_\alpha(\mathscr{q}(t), \mathscr{p}(t)),\, 
\mathscr{q}^j(t) \}_{\mathfrak{g}^\ast \times Q}
\circ d_t\left( \frac{\partial \mathscr{h}_t}{\partial q^j}\right)
((\mathscr{q}(t), \mathscr{p}(t)), \mathscr{q}(t)) \\
&= -\, c_{\alpha\beta}{}^\gamma\,m_{\gamma}(\mathscr{q}(t), 
\mathscr{p}(t))\circ d_t\left( 
\frac{\partial \mathscr{h}_t}{\partial m_\beta}\right)
((\mathscr{q}(t), \mathscr{p}(t)), \mathscr{q}(t)) \\
& \qquad 
-  A^j_\alpha(\mathscr{q}(t))\circ d_t\left( 
\frac{\partial \mathscr{h}_t}{\partial q^j} \right) 
((\mathscr{q}(t), \mathscr{p}(t)), \mathscr{q}(t))\\
&=: \left[{\rm ad}^*_{d\left(\frac{\delta \mathscr{h}_t}{\delta m}\right)
((\mathscr{q}(t), \mathscr{p}(t)), \mathscr{q}(t))}\,
m(\mathscr{q}(t), \mathscr{p}(t))\right]_\alpha 
-  A^j_\alpha(\mathscr{q}(t))\circ d_t\left(\frac{\partial \mathscr{h}_t}{\partial q^j}\right)
((\mathscr{q}(t), \mathscr{p}(t)), \mathscr{q}(t))\,,\\ 
d_t \mathscr{q}^i(t) &=  \{\mathscr{q}^i(t),\, 
\circ d_t\mathscr{h}_t(m(\mathscr{q}(t), \mathscr{p}(t)),\,
\mathscr{q}(t)) \}_{\mathfrak{g}^\ast \times Q}\\
&= \{\mathscr{q}^i(t),\,m_\beta(\mathscr{q}(t),\,
 \mathscr{p}(t))\}_{\mathfrak{g}^\ast \times Q}
\circ d\left( \frac{\partial \mathscr{h}_t}{\partial m_\beta}\right)
((\mathscr{q}(t), \mathscr{p}(t)), \mathscr{q}(t)) \\
&=   A^i_\beta(\mathscr{q}(t))\circ d\left(
\frac{\partial \mathscr{h}_t}{\partial m_\beta}\right) (
(\mathscr{q}(t), \mathscr{p}(t)), \mathscr{q}(t))\,,
\end{split}
\label{LPB-Ham-form}
\end{align}
which are identical to the stochastic equations of
motion \eqref{Hamel} in Theorem \ref{m_q_thm},
once we observe that for the functional $\mathscr{h}_t$ considered there, 
the explicit $\mathscr{q}$-dependence comes only
from its bounded variation part (defined by $h^2$) and therefore
$\circ d_t \left(\frac{\partial \mathscr{h}_t}{\partial q^j}\right) = 
\left(\frac{\partial \mathscr{h}_t}{\partial q^j}\right)dt $.

Note that equations \eqref{LPB-Ham-form} comprise the stochastic version of Hamel's equations \eqref{KM_q_eq_coord}. As in the 
deterministic case (see Remark \ref{rem_coll}), note that if
$\mathscr{h}_t$ depends only on the $\mathfrak{g}^\ast$-valued
semimartingale $m(\mathscr{q}(t), \mathscr{p}(t))$, then equations 
\eqref{LPB-Ham-form} decouple into the stochastic Lie-Poisson
equations on $\mathfrak{g}^\ast_-$ and the stochastic infinitesimal
generator equation for $\frac{\delta  \mathscr{h}_t}{\delta m}
((\mathscr{q}(t), \mathscr{p}(t)), \mathscr{q}(t)) \in \mathfrak{g}$.

\medskip

Our goal is to derive \eqref{LPB-Ham-form} purely from a Hamiltonian
point of view and, similarly, Stratonovich stochastic Hamilton equations
on $\mathfrak{g}^\ast$ to $T^*Q$. In particular, this means that the
semimartingale $m(\mathscr{q}(t), \mathscr{p}(t))$ needs to be
replaced by a semimartingale $\mathscr{m}$ in order not to appeal
to the Legendre transformation of the Lagrangian $\ell$. So, the 
setup is the following general situation.  

Let $\mathscr{h}_t$ be a semimartingale of the form 
$$
d_t \mathscr{h}_t=(h^1)_{\alpha} (\mathscr{m}(t),\mathscr{q}(t))
\xi_k^\alpha \circ dW_t^k +h^2 (\mathscr{m}(t),\mathscr{q}(t)) dt\,,
$$ 
where $(h^1)_\alpha, h^2$ are (deterministic) smooth functions 
evaluated on $(\mathfrak{g}^\ast \times Q)$-valued semimartingales 
$(\mathscr{m}(t),\mathscr{q}(t))$. Similarly,
denote by $ \mathscr{\widetilde h}_t$ a semimartingale of the form
$$
d_t \mathscr{\widetilde h}_t=(\widetilde h^1)_{\alpha} (\mathscr{m}(t),
\mathscr{p}_{\mathscr q}(t))
\xi_k^\alpha \circ dW_t^k +\widetilde h^2 (\mathscr{m}(t),
\mathscr{p}_\mathscr{q}(t)) dt\,,
$$
where $(\widetilde h^1)_\alpha ,\widetilde h^2 \in 
C^\infty(\mathfrak{g}^\ast \times T^\ast Q)$ are  
evaluated on $(\mathfrak{g}^\ast \times T^*Q)$-valued semimartingales
$(\mathscr{m}(t),\mathscr{p}_\mathscr{q}(t))$.
Consider the Poisson brackets defined in \eqref{PB_q} and 
\eqref{PB_t_star_q}. According to Definition 
\ref{def_stoch_ham_equ},  the corresponding stochastic Hamilton equations 
on $\mathfrak{g}^\ast \times Q$  are \textit{defined} to be 
\[
d_t f(\mathscr{m}(t), \mathscr{q}(t)) =
\{f(\mathscr{m}(t), \mathscr{q}(t)),\,
 \circ d_t \mathscr{h}\}_{\mathfrak{g}_-^\ast\times Q},
 \]
for any $f \in C^{\infty}(\mathfrak{g}^\ast\times Q)$,  
respectively on $\mathfrak{g}^\ast \times T^*Q$, 
\[
d_t \widetilde{f}(\mathscr{m}(t), 
\mathscr{p}_\mathscr{q}(t)) =\{\widetilde{f}(\mathscr{m}(t), 
\mathscr{p}_\mathscr{q}(t)),\, \circ d_t 
\mathscr{\widetilde h}\}_{\mathfrak{g}_-^\ast\times T^*Q},
\]
for any $\widetilde{f}\in C^{\infty}(\mathfrak{g}^\ast
\times T^*Q)$. 
Notice that, by the form of the Hamiltonian functionals $\mathscr{h}$ 
and $\mathscr{\widetilde h}$, these equations are
equivalent, respectively, to
\begin{align*}
d_t f(\mathscr{m}(t), \mathscr{q}(t)) &=\{f(\mathscr{m}(t),\, 
\mathscr{q}(t)), (h^1)_\alpha(\mathscr{m}(t),
\mathscr{q}(t)) \}_{\mathfrak{g}_-^\ast\times Q}\,\xi_k^\alpha 
\circ dW_t^k +
\{f(\mathscr{m}(t), \mathscr{q}(t)), 
h^2(\mathscr{m}(t), \mathscr{q}(t))\}_{\mathfrak{g}_-^\ast\times Q} dt\\
d_t \widetilde f(\mathscr{m}(t), \mathscr{p}_\mathscr{q}(t)) &=
\{\widetilde f(\mathscr{m}(t), \mathscr{p}_\mathscr{q}(t)), 
(\widetilde{h}^1 )_\alpha(\mathscr{m}(t), 
\mathscr{p}_\mathscr{q}(t))\}_{\mathfrak{g}_-^\ast\times T^*Q}\,\xi_k^\alpha 
\circ dW_t^k \\
& \qquad +\{\widetilde{f}(\mathscr{m}(t), 
\mathscr{p}_\mathscr{q}(t)),\, 
\widetilde h^2(\mathscr{m}(t), 
\mathscr{p}_\mathscr{q}(t))\}_{\mathfrak{g}_-^\ast\times T^*Q} dt.
\end{align*}

We now define the right hand sides of these equations involving the
Poisson bracket.

For $f\in C^{\infty}(\mathfrak{g}^\ast\times Q)$, the Poisson
bracket \eqref{PB_q} of the two semimartingales 
$f(\mathscr{m}(t), \mathscr{q}(t))$ and $\mathscr{h}_t$ then reads 
\begin{align}
\label{q_stoch_bracket}
&\{f(\mathscr{m}(t), \mathscr{q}(t)),
\circ d_t\mathscr{h}_t\}_{\mathfrak{g}^\ast_-\times Q}:=
\begin{bmatrix}
\frac{\partial f}{\partial m_\alpha}(\mathscr{m}(t), \mathscr{q}(t)) 
\vspace{2mm} \\ 
\frac{\partial f}{\partial q^i}(\mathscr{m}(t), \mathscr{q}(t)) 
\end{bmatrix}^\mathsf{T}
\begin{bmatrix}
-\,c_{\alpha\beta}{}^\gamma\,\mathscr{m}_{\gamma}(t) 
& -\,A^j_\alpha (\mathscr{q}(t))
\\
A^i_\beta (\mathscr{q}(t)) & 0 
\end{bmatrix}
\begin{bmatrix}
\circ d_t\left(\frac{\partial \mathscr{h}_t}{\partial m_\beta} 
\right) \vspace{2mm}\\ 
\circ d_t\left(\frac{\partial \mathscr{h}_t}{\partial q^j} \right)
\end{bmatrix}  \nonumber \\
&\qquad = 
\left\{f^q(\mathscr{m}(t), \mathscr{q}(t)),\,
\circ d_t\mathscr{h}^q_t\right\}_-  
+ \left\langle \mathbf{d}f^m(\mathscr{m}(t), \mathscr{q}(t)), 
\left(\circ d_t\left(\frac{\delta  \mathscr{h}_t^q}{\delta m}
\right)\right)_Q(\mathscr{m}(t),\mathscr{q}(t)) \right\rangle_Q \nonumber \\
& \qquad \qquad 
- \left\langle \circ d_t\mathbf{d}\mathscr{h}_t^m(\mathscr{m}(t),
\mathscr{q}(t)), 
\left(\frac{\delta f^q}{\delta m}\right)_Q(\mathscr{m}(t),\mathscr{q}(t)) \right\rangle_Q .
\end{align}
In this formula, $\partial f / \partial m_\alpha$ and 
$\partial f / \partial q^i$ are evaluated on the semimartingales 
$\mathscr{m}(t)$ and $\mathscr{q}(t)$ and, according to \eqref{semim_der1} and \eqref{semim_der2},
\begin{align*}
d_t \left(\frac{\partial \mathscr{h}_t}{\partial m_\beta}
\right)&:= \frac{\partial(h^1)_\alpha}{\partial m_\beta} 
(\mathscr{m}(t), \mathscr{q}(t))
\xi_k^\alpha \circ dW_t^k +\frac{\partial h^2}{\partial m_\beta} 
(\mathscr{m}(t),\mathscr{q}(t)) dt,  \\  
 d_t \left(\frac{\partial \mathscr{h}_t}{\partial q^j} \right)&:=
\frac{\partial(h^1)_\alpha}{\partial q^j} (\mathscr{m}(t),\mathscr{q}(t))
\xi_k^\alpha \circ dW_t^k +\frac{\partial h^2}{\partial q^j} 
(\mathscr{m}(t),\mathscr{q}(t)) dt,
\end{align*}
$\mathbf{d}(h^1)_\alpha^m(\mathscr{m}(t), \mathscr{q}(t))$, 
$\mathbf{d}(h^2)^m(\mathscr{m}(t), \mathscr{q}(t))$,
$\mathbf{d}f^m(\mathscr{m}(t), \mathscr{q}(t))$,
$\frac{\delta(h^1)_\alpha^q}{\delta m}(\mathscr{m}(t), \mathscr{q}(t))$,
$\frac{\delta (h^2)^q}{\delta m}(\mathscr{m}(t), \mathscr{q}(t))$, 
$\frac{\delta f^q}{\delta m}(\mathscr{m}(t), \mathscr{q}(t))$, and 
$\left(\frac{\delta(h^1)_\alpha^q}{\delta m}\right)_Q(\mathscr{m}(t),
\mathscr{q}(t))$,
$\left(\frac{\delta(h^2)^q}{\delta m}\right)_Q(\mathscr{m}(t),
\mathscr{q}(t))$,
$\left(\frac{\delta f^q}{\delta m}\right)_Q(\mathscr{m}(t), \mathscr{q}(t))$ are the covectors $\mathbf{d}(h^1)_\alpha^m(q)$,
$\mathbf{d}(h^2)^m(q)$, $\mathbf{d}f^m(q)
\in T_q^*Q$, the elements $\delta(h^1)_\alpha^q/\delta m$, 
$\delta(h^2)^q/\delta m$, $\delta f^q/\delta m\in \mathfrak{g}$, 
and the tangent vectors $\left(\delta(h^1)_\alpha^q/\delta m\right)_Q(q)$,
$\left(\delta(h^2)^q/\delta m\right)_Q(q)$, 
$\left(\delta f^q/\delta m\right)_Q(q) \in T_qQ$ 
with the variables $(m,q)$ replaced by the semimartingales 
$(\mathscr{m}(t),\mathscr{q}(t))$,
\begin{align*}
 d_t\mathbf{d}\mathscr{h}_t^m(\mathscr{m}(t), \mathscr{q}(t))&:=
\mathbf{d}(h^1)_\alpha^m(\mathscr{m}(t),\mathscr{q}(t))
\xi_k^\alpha \circ dW_t^k + \mathbf{d}(h^2)^m(\mathscr{m}(t),
\mathscr{q}(t)) dt 
,\\
\left(\circ d\left(\frac{\delta \mathscr{h}_t^q}{\delta m}
\right)\right)_Q(\mathscr{m}(t),\mathscr{q}(t))&:= 
\left(\frac{\delta(h^1)_\alpha^q}{\delta m}
\right)_Q(\mathscr{m}(t),\mathscr{q}(t))\xi_k^\alpha \circ dW_t^k + 
\left(\frac{\delta(h^2)^q}{\delta m}\right)_Q(\mathscr{m}(t),\mathscr{q}(t)) dt, \\
\left\{f^q(\mathscr{m}(t),\mathscr{q}(t)),\,d_t\mathscr{h}_t^q\right\}_-&:=
\left\{f^q(\mathscr{m}(t),\mathscr{q}(t)),\, (h^1)_{\alpha}^q
(\mathscr{m}(t),\mathscr{q}(t))\right\}_- 
\xi_k^\alpha \circ dW_t^k \\
& \qquad +
\left\{f^q(\mathscr{m}(t),\mathscr{q}(t)),\, 
(h^2)^q(\mathscr{m}(t),\mathscr{q}(t))\right\}_- dt.
\end{align*}

Similarly, for  $\widetilde{f}\in C^{\infty}(\mathfrak{g}^\ast\times T^*Q)$, 
the Poisson bracket \eqref{PB_t_star_q} computed for these semimartingales 
is given by
\begin{align}
\label{t_star_q_stoch_bracket}
&\left\{\widetilde{f}\left(\mathscr{m}(t), \mathscr{p}_{\mathscr{q}}(t)
\right), \circ d \widetilde{\mathscr{h}}_t
\right\}_{\mathfrak{g}^\ast_- \times T^*Q}
\nonumber \\
&\qquad :=
\begin{bmatrix}
\frac{\partial \widetilde{f}}{\partial m_\alpha}\left(\mathscr{m}(t), 
\mathscr{p}_{\mathscr{q}}(t) \right) \vspace{1mm} \\ 
\frac{\partial \widetilde{f}}{\partial q^i} \left(\mathscr{m}(t), 
\mathscr{p}_{\mathscr{q}}(t)\right) \vspace{1mm}\\ 
\frac{\partial \widetilde{f}}{\partial p_i} \left(\mathscr{m}(t), 
\mathscr{p}_{\mathscr{q}}(t)\right)
\end{bmatrix}^\mathsf{T}
\begin{bmatrix}
-c_{\alpha \beta}{}^\gamma\mathscr{m}_\gamma(t) & -\,
A^j_\alpha( \mathscr{q}(t)) & 
\mathscr{p}_k(t)\frac{\partial A^k_\alpha}{\partial q^j}(\mathscr{q}(t))
\\
A^i_\beta( \mathscr{q}(t)) & 0 & \delta^i_j
\\
-\,\mathscr{p}_k(t)\frac{\partial A^k_\beta}{\partial q^i}
( \mathscr{q}(t))  & -\,\delta^j_i & 0
\end{bmatrix}
\begin{bmatrix}
\vspace{1mm}
\circ d_t\left(\frac{\partial \widetilde{\mathscr{h}}_t}{\partial
m_\beta} \right) \\ 
\circ d_t\left(\frac{\partial \widetilde{\mathscr{h}}_t}{\partial q^j} 
\right) \vspace{1mm}\\ 
\circ d_t\left(\frac{\partial \widetilde{\mathscr{h}}_t}{\partial p_j} \right)
\end{bmatrix} \nonumber  \\
&\qquad\; = \left\{\widetilde{f}^{p_q}\left(\mathscr{m}(t), 
\mathscr{p}_{\mathscr{q}}(t)\right), 
\circ d_t\widetilde{\mathscr{h}}_t^{p_q} 
\right\}_-  
+ \left\langle \mathbf{d} f^m\left(\mathscr{m}(t), 
\mathscr{p}_{\mathscr{q}}(t)\right), 
\left(\circ d_t\left(\frac{\delta\widetilde{\mathscr{h}}_t^{p_q}}{\delta m}
\right)\right)_{T^*Q}
\left(\mathscr{m}(t), \mathscr{p}_{\mathscr{q}}(t)
\right) \right\rangle_Q \nonumber \\
& \qquad \qquad 
- \left\langle \circ d_t\mathbf{d} \widetilde{\mathscr{h}}_t^m
\left(\mathscr{m}(t), \mathscr{p}_{\mathscr{q}}(t)\right), 
\left(\frac{\delta \widetilde{f}^{p_q}}{
\delta m}\right)_{T^*Q}\left(\mathscr{m}(t), \mathscr{p}_{\mathscr{q}}(t)
\right)
\right\rangle_Q
+\left\{\widetilde{f}^m \left(\mathscr{m}(t), \mathscr{p}_{\mathscr{q}}(t)
\right), 
\circ d_t\widetilde{\mathscr{h}}_t^m \right\},
\end{align}
with the same notational conventions as for the bracket 
\eqref{q_stoch_bracket} and where the last Poisson bracket of
semimartingales is defined in \eqref{stochastic_PB_general}.

Since $d_tf(\mathscr{m}(t), \mathscr{q}(t)) =
\left\langle \frac{\delta f^q}{\delta m}(\mathscr{m}(t), \mathscr{q}(t)), 
\circ d_t\mathscr{m}(t) \right\rangle_\mathfrak{g} +
\left\langle \mathbf{d}f^m (\mathscr{m}(t), \mathscr{q}(t)),
\circ d_t\mathscr{q}(t) \right\rangle_Q$, the stochastic Hamilton equations 
(i.e., the stochastic versions of
equations \eqref{KM_q_eq_coord} and \eqref{KM_star_q_eq_coord})  are, respectively, the stochastic Hamel equations
\begin{align}
\label{ham_stoch_hamel}
&d_t  \mathscr{m}(t) = \operatorname{ad}_{\circ d_t\left(
\frac{\delta \mathscr{h}_t^q}{\delta m}(\mathscr{m}(t), 
\mathscr{q}(t))\right)}^* \mathscr{m} (t) - 
\mathbf{J}_{T^*Q}(\circ d_t\mathbf{d}\mathscr{h}_t^m(\mathscr{m}(t),
\mathscr{q}(t))),  \quad  
d_t \mathscr{q}(t) = \left( \circ d_t\left(\frac{\delta \mathscr{h}_t^q}{\delta m}\right) \right)_Q(\mathscr{m}(t),\mathscr{q}(t)) \quad 
\Longleftrightarrow  \nonumber \\
&d_t \mathscr{m}_\alpha(t) = -c_{\alpha \beta}{}^\gamma 
\mathscr{m}_\gamma (t)
\circ d_t\left(\frac{\partial \mathscr{h}_t}{\partial m_\beta}\right) - 
A^j_\alpha(\mathscr{q}(t)) \circ d_t\left( \frac{\partial \mathscr{h}_t}{\partial q^j}\right)\,, \qquad\;\; 
d_t \mathscr{q}^i (t)= A^i_\beta(\mathscr{q}(t)) \circ d_t\left(
\frac{\partial \mathscr{h}_t}{\partial m_\beta}\right)
\end{align}
and
\begin{equation}
\begin{aligned}
d_t \mathscr{m}_\alpha(t) &= -c_{\alpha \beta}{}^\gamma 
\mathscr{m}_\gamma(t) 
\circ d_t\left(\frac{\partial \widetilde{\mathscr{h}}_t}{\partial m_\beta}\right) -
A^j_\alpha(\mathscr{q}(t)) \circ d_t\left(
\frac{\partial \widetilde{\mathscr{h}}_t}{ \partial q^j} \right) + 
\mathscr{p}_k(t)
\frac{\partial A_\alpha ^k}{ \partial q^j}(\mathscr{q}(t)) 
\circ d_t\left(\frac{ \partial \widetilde{\mathscr{h}}_t}{ \partial p_j}
\right), \\ 
d_t \mathscr{q}^i(t) &= A^i_\beta(\mathscr{q}(t)) \circ d_t \left(\frac{\partial \widetilde{\mathscr{h}}_t}{\partial m_\beta}\right) + 
d_t \left(\frac{\partial \widetilde{\mathscr{h}}_t}{ \partial p_i}\right),\quad 
d_t \mathscr{p}_i(t) = -\mathscr{p}_k(t) 
\frac{\partial A_\beta^k}{ \partial q^i}(\mathscr{q}(t))
\circ d_t\left(\frac{\partial \widetilde{\mathscr{h}}_t}{\partial m_\beta}\right) -
d_t\left(\frac{\partial \widetilde{\mathscr{h}}_t}{ \partial q^i}\right).
\end{aligned}
\end{equation}
The last equations can be written intrinsically as
\begin{equation}
\begin{aligned}
d_t \mathscr{m}(t)&= \operatorname{ad}_{\circ d_t\left(
\frac{\delta \widetilde{\mathscr{h}}_t^q}{\delta m}\right)(\mathscr{m}(t), 
\mathscr{q}(t))}^* \mathscr{m} (t)
- \mathbf{J}_{T^* (T^* Q) }\left(\circ d_t (\mathbf{d}
\widetilde{\mathscr{h}}_t^m (\mathscr{m}(t),\mathscr{p}_\mathscr{q}(t))) \right), \\
d_t \mathscr{p}_{\mathscr{q}}(t) &=  \left(\circ d_t \left(
\frac{\delta \widetilde{\mathscr{h}}_t^{p_q}}{\delta m} \right) 
\right)_{T^* Q}(\mathscr{m}(t),\mathscr{p}_{\mathscr{q}}(t))+
X^{T^* Q}_{\circ d_t \widetilde{\mathscr{h}}_t^m}(\mathscr{m}(t), 
\mathscr{p}_{\mathscr{q}}(t))\,.
\end{aligned}
\end{equation}

By repeating the arguments
in the proof of Corollary 11,   we can derive the following stochastic Hamilton
 equations (with $P = T^*Q${\rm)} on 
$\mathfrak{g}^\ast \times T^*Q$ for these type of Hamiltonian functionals:
\begin{equation}
\begin{aligned}
d_t \mathscr{m}(t) &= \operatorname{ad}_{\circ d_t\left(
\frac{\delta \mathscr{h}_t^q}{\delta m}\right)(\mathscr{m}(t), 
\mathscr{q}(t))}^* \mathscr{m} (t)
- \mathbf{J}_{T^*Q}(\circ d_t\mathbf{d}\mathscr{h}_t^m(\mathscr{m}(t),
\mathscr{q}(t))), \\ 
d_t \mathscr{p}_\mathscr{q}(t) &=  
\left(d_t\left(\frac{\delta \mathscr{h}_t^q}{\delta m} \right)\right)_{T^*Q}
(\mathscr{m}(t), \mathscr{p}_\mathscr{q}(t))
+T_{d_t\mathbf{d} \mathscr{h}_t^m(\mathscr{m}(t), 
\mathscr{p}_\mathscr{q}(t))} (\mathscr{m}(t),\mathscr{p}_\mathscr{q}(t)),
\end{aligned}
\end{equation}
where 
$$
T_{d_t\mathbf{d} \mathscr{h}_t^m(\mathscr{m}(t), 
\mathscr{p}_\mathscr{q}(t))}(\mathscr{p}_\mathscr{q})= 
\left.\frac{d}{d\epsilon}\right|_{\epsilon=0}
(d_t \mathscr{p}_\mathscr{q} -\epsilon  ~ d_t\mathbf{d} \mathscr{h}_t^m(\mathscr{m}(t), 
\mathscr{p}_\mathscr{q}(t)))
$$
in which the limit is taken in $L^2 (\Omega )$.

In addition, we have the non-homogeneous stochastic Lie-Poisson equations
\begin{equation}
d_t\mathbf{J}_{T^*Q}(\mathscr{p}_{q}(t)) = 
\operatorname{ad}_{\circ d_t\left(
\frac{\delta \mathscr{h}_t^{q(t)}}{\delta m(t)}\right) (\mathscr{m}(t), 
\mathscr{q}(t))}^\ast
\mathbf{J}_{T^*Q}(\mathscr{p}_\mathscr{q}(t))
- \mathbf{J}_{T^*Q}\left(\circ d_t\mathbf{d}\mathscr{h}_t^{m(t)} (
\mathscr{m}(t),\mathscr{q}(t) \right)
\end{equation}
for $\mathbf{J}_{T^*Q}(\mathscr{p}_\mathscr{q}(t))$,
where $(\mathscr{m}(t), \mathscr{q}(t))$ 
is the solution of the stochastic  Hamel equations \eqref{ham_stoch_hamel}.

\begin{remark}\rm[The analog of the stochastic rigid body] 
In particular,
the stochastic free rigid body dynamics on $\mathfrak{g}^\ast$ is obtained
from \eqref{LPB-Ham-form} by assuming that the 
Hamiltonian $h : \mathfrak{g}^\ast \times Q \rightarrow \mathbb{R}$
is of the form $h(m,q): = \frac{1}{2} m_\alpha K^{\alpha\beta}m_\beta$, 
where $K^{\alpha\beta}$ are the components of the inner product 
on $\mathfrak{g}^\ast$ induced by an inner product on $\mathfrak{g}$.
This Hamiltonian $h$ is computed on semimartingales of the form
$d_t\mathscr{m}_\alpha =\mathscr{r}_\alpha dt +(\Phi_k)_\alpha \circ dW_t^k$ 
with $(\Phi_k )_\alpha K^{\alpha \beta} =\xi_k^\beta$ (notice that $K:=
\left[K^{\alpha\beta} \right]$ 
is an invertible matrix). Define $u^\beta:=
r_\alpha K^{\alpha \beta}$. In particular,
$\frac{\partial h}{\partial q^i}=0$ and
$d\Big(\frac{\partial h}{\partial m_\beta}\Big)=
r_\alpha K^{\alpha \beta} dt +
(\Phi_k )_\alpha K^{\alpha \beta}  \circ dW_t^k =
u^\beta dt +\xi_k^\beta \circ dW_t^k$. Thus,
from \eqref{LPB-Ham-form}, the stochastic free rigid body equations emerge as 
\begin{align}
\qquad \qquad\qquad\qquad\qquad
d\mathscr{m}_\alpha =   \{\mathscr{m}_\alpha\,,\, \mathscr{m}_\beta \}
(u^\beta\,dt + \xi_k^\beta \circ dW^k_t)
= -\,c_{\alpha\beta}{}^\gamma\,\mathscr{m}_{\gamma} \,dx_t^\beta\,. 
\qquad\qquad\qquad\qquad  \lozenge
\label{LPB-Ham-RBanalog}
\end{align}
\end{remark}

\subsection{The Kolmogorov equations}
\label{sec:Kolmogorov}

We start from the equations \eqref{stochastic_Ham_equ_general_coordinates}
for the Poisson manifold valued stochastic process $\mathscr{p}(t)$ written in coordinates, namely,
$$
d_t \mathscr{p}^i(t) =
\{p^i, \phi_\alpha \}(\mathscr{p}(t)) \xi_k^\alpha \circ dW^k_t 
+ \{p^i, \psi\}(\mathscr{p}(t)) dt.
$$

\begin{theorem} The generator of the process $\mathscr{p}(t)$  is the 
operator
\begin{align}
\label{generator}
L f:=\{f,\psi \} + \frac{1}{2} \sum_k \{\phi_\alpha \xi_k^\alpha ,\{\phi_\beta \xi_k^\beta, f\} \}.
\end{align}
\end{theorem}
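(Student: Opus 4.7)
The strategy is to convert the Stratonovich SDE to It\^o form and read off the generator as the drift of the It\^o representation of $f(\mathscr{p}(t))$. Introduce the shorthand $V_k := \phi_\alpha \xi_k^\alpha \in C^\infty(P)$ (a deterministic smooth function for each $k$), so that the given Stratonovich equation reads, in local coordinates,
\begin{equation*}
d_t \mathscr{p}^i(t) = \sigma_k^i(\mathscr{p}(t)) \circ dW_t^k + b^i(\mathscr{p}(t))\, dt, \qquad \sigma_k^i := \{p^i, V_k\}, \quad b^i := \{p^i, \psi\}.
\end{equation*}
The crucial structural observation is that $\sigma_k^i$ and $b^i$ are precisely the components of the Hamiltonian vector fields $X_{V_k}$ and $X_\psi$ associated with $V_k$ and $\psi$, respectively.

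\textbf{Key steps.} First, convert the Stratonovich noise to It\^o using the standard correction $\sigma_k^i \circ dW^k = \sigma_k^i\, dW^k + \tfrac{1}{2}\sum_k (\partial_j \sigma_k^i)\sigma_k^j\, dt$, which itself follows from applying It\^o's formula \eqref{ito_formula} to $\sigma_k^i(\mathscr{p}(t))$ and noting that only the martingale part of $d\mathscr{p}^j$ contributes to the quadratic covariation. The It\^o SDE then has drift $b^i + \tfrac{1}{2}\sum_k \sigma_k^j \partial_j \sigma_k^i$ and diffusion coefficients $\sigma_k^i$. Second, apply It\^o's formula once more to $f(\mathscr{p}(t))$ for an arbitrary test function $f \in C^\infty(P)$; the drift of $f(\mathscr{p}(t))$ is, by definition, $L f$, giving
\begin{equation*}
L f = b^i \partial_i f + \frac{1}{2}\sum_k \sigma_k^j (\partial_j \sigma_k^i) \partial_i f + \frac{1}{2}\sum_k \sigma_k^i \sigma_k^j \partial_i\partial_j f.
\end{equation*}
Third, recognize that the last two terms assemble into a second-order differential operator along $X_{V_k}$: the Leibniz identity for vector fields yields
\begin{equation*}
X_{V_k}(X_{V_k} f) = \sigma_k^j \partial_j(\sigma_k^i \partial_i f) = \sigma_k^j (\partial_j \sigma_k^i)\partial_i f + \sigma_k^i \sigma_k^j \partial_i\partial_j f,
\end{equation*}
so that the two second-order contributions merge exactly into $\tfrac{1}{2}\sum_k X_{V_k}(X_{V_k} f)$. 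Fourth, translate back to Poisson brackets: $X_\psi[f] = \{f,\psi\}$ and $X_{V_k}[g] = \{g,V_k\}$, so $X_{V_k}(X_{V_k}f) = \{\{f,V_k\}, V_k\} = \{V_k, \{V_k, f\}\}$ by antisymmetry applied twice. Restoring $V_k = \phi_\alpha \xi_k^\alpha$ in the outer bracket and $\phi_\beta \xi_k^\beta$ in the inner one (distinct summation indices) gives exactly the stated formula \eqref{generator}.

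\textbf{Main obstacle.} The algebra is essentially routine; the only point requiring some care is the intrinsic, coordinate-free interpretation on the Poisson manifold $P$. The coordinate expression for $L f$ depends \emph{a priori} on the chart, but both summands on the right of \eqref{generator} are manifestly intrinsic (Poisson brackets of functions), so the equality in one chart suffices. For $P = \mathfrak{g}^\ast \times Q$ or $\mathfrak{g}^\ast \times T^*Q$, one should verify that the Poisson tensors \eqref{PB_q}, \eqref{PB_t_star_q} are smooth enough for the It\^o calculus to apply globally, and that the semimartingale $\mathscr{p}(t)$ does not leave a coordinate chart in finite time; this is handled by the Malliavin/It\^o map setup of Section~\ref{VP-coadmotion-sec} together with the Novikov-type growth hypotheses imposed there. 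Everything else is standard diffusion theory.
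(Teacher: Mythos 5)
Your proof is correct and follows essentially the same route as the paper: convert the Stratonovich equation to It\^o form in local coordinates using the Poisson tensor, identify the resulting drift correction, and recognize the first- and second-order terms as the double Poisson bracket $\tfrac12\sum_k\{V_k,\{V_k,f\}\}$ together with $\{f,\psi\}$. Your version is in fact slightly more explicit than the paper's at the final step, where you spell out via the Leibniz rule how the correction term and the quadratic-variation term assemble into $X_{V_k}(X_{V_k}f)$ — a step the paper leaves implicit.
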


\begin{proof}
We first compute the difference between the It\^o and the Stratonovich 
differential in the process $\mathscr{p}(t)$. 
Since $\{p^i, \phi_\alpha \}(\mathscr{p}(t))= \Pi^{ij}(\mathscr{p}(t)) 
\frac{\partial \phi_\alpha}{\partial p^j}(\mathscr{p}(t))$,
this difference is equal to
\begin{align*}
\frac{1}{2}  \left( \frac{\partial \phi_\alpha}{\partial p^j}
\frac{\partial \Pi^{ij}}{\partial p^l} +
\Pi^{ij} \frac{\partial^2 \phi_\alpha}{\partial p^l \partial p^j}\right) 
(\mathscr{p}(t))
  \Pi^{lm}(\mathscr{p}(t))\frac{\partial \phi_\beta}{\partial p^m}(\mathscr{p}(t))\left( \sum_k  \xi_k^\alpha \xi_k^\beta \right) dt =: 
  B^i (\mathscr{p}(t))dt.
\end{align*}
The generator then reads
$$
Lf(p)=\frac{1}{2} \left( \Pi^{im}\Pi^{jn}
\frac{\partial \phi_\alpha}{\partial p^m}
\frac{\partial \phi_\beta}{\partial p^n}\right) (p)
\left( \sum_k \xi_k^\alpha \xi_k^\beta \right)
\frac{\partial^2 f}{\partial p^i \partial p^j} +
\left( \Pi^{ij}\frac{\partial \psi}{\partial p^j}+B^i \right) (p)\frac{\partial f}{\partial p^i}
$$
which is precisely  the expression \eqref{generator}.
\end{proof}

Defining $\rho (t, p):=\mathbb{E}_p \left( f( \mathscr{p}(t)) \right)$ where 
$\mathscr{p}(0)=p$, the function $\rho$ satisfies \textit{Kolmogorov's 
backward equation}, namely,
\begin{equation}
\label{Kolmogorov_equ}
\frac{\partial \rho }{\partial t}= L \rho ,\qquad \rho (0,p)=f(p).
\end{equation}

If the generator $L$ is a hypoelliptic operator then there exists a probability density function $\tilde \rho(t, p, p')$, defined by  
$\mathbb{E}_p \left( f( \mathscr{p}(t)) \right)=\int_P f(p^{\prime} )
\tilde \rho (t,p, p^{\prime})dp^{\prime}$; here we assume that the Poisson 
manifold $P$ has a volume form $dp$ relative to which this integration 
is carried out. This function satisfies
the forward Kolmogorov (or Fokker-Planck) equation:
\begin{equation}
\label{forward_Kolmogorov}
\frac{\partial \tilde \rho }{\partial t}(t,p,p^{\prime})=
L^\ast_{p^{\prime}}\tilde \rho (t,p,p^{\prime})
\end{equation}
with $\tilde \rho (0,p, p^{\prime})$ equal to the Dirac measure $\delta (p^{\prime}-p)$ and
where $L^\ast$ denotes the adjoint of $L$.

Next, we give a sufficient condition, in terms of the measure on
$P$ used to define the probability density function $\rho(t,p,p')$,
ensuring a nice formula for the formal adjoint of
the operator $L$ defined in \eqref{generator}.
We take a measure on $P$ which is induced by a volume form $\Lambda \in 
\Omega^{\dim P}(P)$. We say that a volume form $\Lambda$ on $P$ is 
\textit{Hamiltonian}, if $0=\boldsymbol{\pounds}_{X_g} \Lambda = 
\mathbf{d} \mathbf{i}_{X_g} \Lambda+ \mathbf{i}_{X_g} \mathbf{d}\Lambda 
=\mathbf{d} \mathbf{i}_{X_g} \Lambda$, for all $g \in C^{\infty}(P)$.
Therefore
\begin{align*}
\operatorname{div}(fX_g)\Lambda &= \boldsymbol{\pounds}_{fX_g}\Lambda
= \mathbf{i}_{fX_g} \mathbf{d}\Lambda + \mathbf{d}\mathbf{i}_{fX_g}\Lambda
= \mathbf{d}(f\mathbf{i}_{X_g} \Lambda)\\
& = \mathbf{d}f \wedge \mathbf{i}_{X_g} \Lambda + 
f \mathbf{d}\mathbf{i}_{X_g} \Lambda = 
-\mathbf{i}_{X_g} (\mathbf{d}f \wedge \Lambda) + 
\left(\mathbf{i}_{X_g} \mathbf{d}f\right) \Lambda\\
&= X_g[f] \Lambda = \{f, g\} \Lambda\,.
\end{align*}
This shows that $\operatorname{div}(fX_g) = \{f, g\}$ for any 
$f, g \in C^{\infty}(P)$.

Hence, by the Stokes Theorem,
\begin{equation}
\label{bracket_identity}
\int_P \{f,g\} \Lambda = \int_P\operatorname{div}(fX_g)\Lambda
= \int_P \mathbf{d} \mathbf{i}_{fX_g} \Lambda = 
\int_{\partial P} \mathbf{i}_{fX_g} \Lambda = 
\int_{\partial P} f\mathbf{i}_{X_g} \Lambda = 
- \int_{\partial P} g\mathbf{i}_{X_f} \Lambda;
\end{equation}
in which the last equality follows by skew-symmetry of the Poisson bracket.

Now let $f, g, h \in C^{\infty}(P)$ and integrate the identity
$\{hf, g\} = h\{f, g\} + f\{h, g\}$ to get
\[
\int_P h\{f, g\}\Lambda  + \int_P f\{h, g\}\Lambda = 
\int_P \{hf, g\}\Lambda.
\]
By \eqref{bracket_identity}, the term on the right hand side vanishes if $\partial P = \varnothing$
or if at least one of $f$ or $g$ vanish on $\partial P$. In these cases, we have 
\begin{equation}
\label{nice_bracket_identity}
\int_P\{f,g\}h \Lambda = \int_Pf\{g, h\} \Lambda.
\end{equation}

If $(P, \omega)$ is a $2n$-dimensional symplectic manifold, the Liouville
volume $\Lambda: = \frac{(-1)^{n(n-1)/2}}{n!} \omega \wedge\ldots 
\wedge \omega$ ($n$ times) is Hamiltonian. Indeed, since 
$\boldsymbol{\pounds}_{X_g} \omega=0$ for any $g \in C^{\infty}(P)$, it 
immediately follows that $\boldsymbol{\pounds}_{X_g} \Lambda=0$.

\begin{corollary}
Let $(P, \{ \cdot , \cdot \})$ be a boundaryless Poisson manifold and 
$\Lambda$ a Hamiltonian volume form on $P$. Relative to the $L^2$-inner 
product on $P$ defined by $\Lambda$, the formal adjoint of the linear
operator $L$ defined in \eqref{generator} may be expressed as
\begin{equation}
\label{generator_adjoint}
L^* f = -\{f,\psi\} +
\frac{1}{2} \sum_k \{\phi_\beta \xi_k^\beta, 
\{\phi_\alpha \xi_k^\alpha, f\}\}.
\end{equation}
\end{corollary}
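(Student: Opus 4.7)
The plan is to recognize the generator as a sum of a first-order drift operator and a symmetric second-order ``sum of squares'' operator, both built from Hamiltonian vector fields, and then use the identity \eqref{nice_bracket_identity} to show that any Hamiltonian vector field is skew-adjoint with respect to the $L^2$-pairing defined by a Hamiltonian volume form on a boundaryless manifold. Writing $\Phi_k := \phi_\alpha \xi_k^\alpha \in C^\infty(P)$ for each $k=1,\ldots,N$ and using the paper's sign convention $X_F g = \{g,F\}$, a one-line computation gives
\[
\{\Phi_k,\{\Phi_k,f\}\} = -X_{\Phi_k}\{\Phi_k,f\} = X_{\Phi_k}(X_{\Phi_k} f) = X_{\Phi_k}^2 f,
\]
so that $L = X_\psi + \tfrac12\sum_k X_{\Phi_k}^2$.

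Next I would establish skew-adjointness of $X_F$ for any $F \in C^\infty(P)$. Since $\partial P = \varnothing$, applying \eqref{nice_bracket_identity} to arbitrary $g, h \in C^\infty(P)$ yields
\[
\int_P (X_F g)\, h\, \Lambda = \int_P \{g,F\}\, h\, \Lambda = \int_P g\, \{F,h\}\, \Lambda = -\int_P g\, (X_F h)\, \Lambda,
\]
i.e., $X_F^* = -X_F$. Iterating, $(X_{\Phi_k}^2)^* = (X_{\Phi_k}^*)^2 = X_{\Phi_k}^2$, so the second-order part of $L$ is self-adjoint, while the drift $X_\psi$ flips sign. Consequently
\[
L^* = -X_\psi + \tfrac12 \sum_k X_{\Phi_k}^2,
\]
and translating back via $-X_\psi f = -\{f,\psi\}$ and $X_{\Phi_k}^2 f = \{\Phi_k,\{\Phi_k,f\}\}$ reproduces \eqref{generator_adjoint} exactly (the inner and outer indices $\alpha,\beta$ are interchanged, but they are dummy summation labels so the two expressions coincide termwise in $k$).

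The only substantive point is the skew-adjointness step, and it has already been encoded in \eqref{nice_bracket_identity}; everything else is bookkeeping with the sign convention for $X_F$. The main potential pitfall is sign confusion in the double bracket (one has to verify that the two minus signs arising from $\{\Phi_k,\cdot\} = -X_{\Phi_k}$ cancel so that the second-order operator really is a \emph{square}, not minus a square, of a skew-adjoint operator), but this is a mechanical check rather than a real obstacle.
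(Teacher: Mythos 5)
Your proposal is correct and follows essentially the same route as the paper, whose proof consists of the single remark that the formula ``follows directly from \eqref{generator} and \eqref{nice_bracket_identity}''; your reformulation in terms of skew-adjointness of Hamiltonian vector fields, $X_F^* = -X_F$, is just a tidy repackaging of the repeated application of \eqref{nice_bracket_identity}, and the sign bookkeeping (including the observation that $\alpha,\beta$ are dummy indices so the second-order part is genuinely self-adjoint) checks out.
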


This corollary follows directly from \eqref{generator} and \eqref{nice_bracket_identity}.

Consider the Poisson manifold $\mathfrak{g}^\ast \times Q$ 
and the stochastic Hamiltonian \eqref{LegXform1}. Define the
semimartingale $u(\mathscr{m}(\mathscr{q}(t), 
\mathscr{p}(t)),\mathscr{q}(t))$, where $u \in 
C^{\infty}(\mathfrak{g}^\ast\times Q)$. In this case, 
Kolmogorov's backward equation for
$\rho (t,m,q):=\mathbb{E}_{(m,q)} \left( f( \mathscr{m}(\mathscr{q}(t), \mathscr{p}(t))), \mathscr{q}(t)) \right)$ takes the form
\begin{equation}
\label{Kol_back_g_star_q}
\frac{\partial \rho}{\partial t}= \{\rho,m_\alpha u^\alpha -\ell (u,q)\}
+ \frac{1}{2} \sum_k \{m_\alpha \xi_k^\alpha ,\{m_\beta \xi_k^\beta, \rho \}\}
\end{equation}
with $\rho (0,m,q)=f(m,q)$.

Now choose a Hamiltonian volume form $\Lambda$ on the Poisson manifold
$\mathfrak{g}^\ast \times T^*Q$. Using the measure defined by $\Lambda$,
and computing the formal adjoint of $L$ (the right hand side of 
\eqref{Kol_back_g_star_q}) given by \eqref{generator_adjoint}, we get
Kolmogorov's forward, or Fokker-Planck, equation
\begin{equation}
\label{Kol_forward_g_star_q}
\frac{\partial \tilde \rho}{\partial t}= -\{\tilde \rho ,m_\alpha u^\alpha -\ell (u,q)\}
+ \frac{1}{2} \sum_k \{m_\alpha \xi_k^\alpha ,\{m_\beta \xi_k^\beta, \tilde \rho \}\}
\end{equation}
with $\tilde \rho(0,(m,q), (m^{\prime}, q^{\prime}))=\delta ((m^{\prime}, q^{\prime})-(m,q))$.

\begin{remark}{\rm Assume we work on $\mathfrak{g}^\ast$, where 
$\mathfrak{g}$ is a compact Lie algebra, for simplicity. Then there is
an invariant inner product on $\mathfrak{g}$ and, using it, we define 
an invariant inner product on $\mathfrak{g}^\ast$ whose norm is
denoted by $\| \cdot \|$. In this case, $m \mapsto \|m\|^2$ is a Casimir
function. As Casimirs are conserved along the
stochastic flows of the stochastic Hamiltonian semimartingales, we have $d_t \| m \|^2 (\mathscr{q}(t),\mathscr{p}(t))=0$. As a consequence, there
exists (cf. \cite{Mi1973}) an invariant probability measure $\mu$ on  $\mathfrak{g}^\ast \times Q$  for the motion. Namely, the measure satisfies
\[
\int \mathbb{P} _{(m,q)} ( \mathscr{m}(\mathscr{q}(t), \mathscr{p}(t)), \mathscr{q}(t)) \in B)d\mu (m,q)=\mu (B)\,,
\]
for all Borel sets $B\subset \mathfrak{g}^\ast \times Q$.
This measure disintegrates along the  level sets of the Casimir $\|m\|^2$.
\hfill $\lozenge$ }
\end{remark}

\subsection*{Acknowledgements}
We are enormously grateful to our colleagues for their helpful 
encouraging remarks and interesting  enjoyable discussions:  
A. Arnaudon, M. Arnaudon, S. Albeverio, J.-M. Bismut, N. Bou-Rabee, A. L. Castro, 
M. D. Chekroun, G. Chirikjian, D. O. Crisan,  J. Eldering, M. Engel, 
F. Gay-Balmaz, A. Grandchamp, P. Lynch, J.-P. Ortega, G. Pavliotis, 
V. Putkaradze, and C. Tronci. We also acknowledge the Bernoulli Center  
at EPFL where parts of this work were initiated. ABC was partially 
supported by Portuguese FCT grant SFRH/BSAB/105789/2014, DDH by 
ERC Advanced Grant 267382 FCCA as well as EPSRC Grant EP/N023781/1, and TSR by NCCR SwissMAP grant of
the Swiss NSF.

\end{document}